\documentclass{ieeetmlcn}
\usepackage{cite}
\usepackage{amsmath,amssymb,amsfonts}
\usepackage{algorithmic}
\usepackage{graphicx,color}
\usepackage{textcomp}
\usepackage{xcolor}
\usepackage{hyperref}
\hypersetup{hidelinks=true}
\usepackage{algorithm,algorithmic}
\def\BibTeX{{\rm B\kern-.05em{\sc i\kern-.025em b}\kern-.08em
		T\kern-.1667em\lower.7ex\hbox{E}\kern-.125emX}}
\AtBeginDocument{\definecolor{tmlcncolor}{cmyk}{0.93,0.59,0.15,0.02}\definecolor{NavyBlue}{RGB}{0,86,125}}

\def\OJlogo{\vspace{-4pt}\includegraphics[height=18pt]{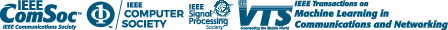}}
\def\seclogo{\vspace{10pt}\includegraphics[height=18pt]{new_logo}}

\usepackage{cite}
\usepackage{amsmath,amssymb,amsfonts,amsthm}
\usepackage{dsfont}
\usepackage{bbm}
\usepackage{graphicx}
\usepackage{textcomp}
\usepackage{xcolor}
\let\labelindent\relax
\usepackage{enumitem}
\usepackage{bm}
\usepackage{hyperref}
\usepackage{booktabs}
\usepackage{algorithm}
\usepackage{algorithmic}
\usepackage{tikz}
\usepackage{multirow}
\usetikzlibrary{positioning, arrows.meta, calc}
\usetikzlibrary{decorations.pathreplacing}
\usetikzlibrary{shapes.geometric}

\newtheorem{definition}{Definition}
\newtheorem{theorem}{Theorem}
\newtheorem{lemma}{Lemma}

\newtheorem{assumption}{Assumption}
\newtheorem*{theorem*}{Theorem}
\usepackage{thmtools}
\usepackage{thm-restate}
\newtheorem{corollary}{Corollary}
\newtheorem{remark}{Remark}

\usepackage{microtype}
\usepackage{xspace}
\newcommand{\name}{\textls[80]{LightTune}\xspace}
\newcommand{\CQI}{\textls[80]{CQI-Tune}\xspace}
\newcommand{\RICQI}{\textls[80]{RI-CQI-Tune}\xspace}
\newcommand{\BLER}{\textls[80]{BLER-Predict}\xspace}
\usepackage{subcaption}
\usepackage{array}

\newcommand{\blfootnote}[1]{%
	\begingroup
	\renewcommand{\thefootnote}{}%
	\footnotetext{#1}%
	\endgroup
}

\def\BibTeX{{\rm B\kern-.05em{\sc i\kern-.025em b}\kern-.08em
		T\kern-.1667em\lower.7ex\hbox{E}\kern-.125emX}}

\begin{document}
	\IEEEoverridecommandlockouts
\receiveddate{May, 2026}
\reviseddate{XX Month, XXXX}
\accepteddate{XX Month, XXXX}
\publisheddate{XX Month, XXXX}
\currentdate{May, 2026}
\doiinfo{TMLCN.2026.XXXXXXX}

\markboth{}{Ali and Penna}
	
	\title{LightTune: Lightweight Forward-Only Online Fine-Tuning with Applications to Link Adaptation}
	\author{\IEEEauthorblockN{Ramy E. Ali,~\IEEEmembership{Member,~IEEE,} and Federico Penna,~\IEEEmembership{Member,~IEEE}}%
	}

	\begin{abstract}
		Deploying machine learning (ML) algorithms on mobile phones is bottlenecked by performance degradation under dynamic, real-world conditions that differ from the offline training conditions. While continual learning and adaptation are essential to mitigate this distributional shift, conventional online learning methods are often computationally prohibitive for resource-constrained devices. In this paper, we propose \name{}, a lightweight, backpropagation-free online fine-tuning framework with provable convergence guarantees. \name{} opportunistically refines ML models using live test-time data only when performance falls below a predefined threshold, ensuring minimal computational overhead and efficient responsiveness. As a practical demonstration, we integrate \name{} into a block error rate (BLER) prediction algorithm for 6G mobiles. This integration enables the ML BLER prediction model to dynamically adapt to previously unseen channel conditions in real time. Simulation results show a substantial reduction in the average BLER prediction error by up to $43.5\%$ with online fine-tuning. Furthermore, we leverage this BLER prediction algorithm for link adaptation and demonstrate average throughput improvements by up to $15.5\%$ compared to a conventional table-based outer loop link adaptation (OLLA) algorithm.
		
	\end{abstract}

	\begin{IEEEkeywords}
		6G, fine-tuning, forward-forward, online learning, digital twins, link adaptation.
	\end{IEEEkeywords}
	\maketitle
	\IEEEpeerreviewmaketitle
		\blfootnote{This article was presented in part at the proceedings of
	the 2026 International Conference on Communications (ICC)
	\cite{Ali2026lighttune}. The authors are with Samsung Device
	Solutions Research America, Samsung Semiconductor, Inc., San Diego,
	CA 92121 USA (e-mails:
	\href{mailto:ramy.ali@samsung.com}{ramy.ali@samsung.com},
	\href{mailto:f.penna@samsung.com}{f.penna@samsung.com}).}
	\section{Introduction}
	\IEEEPARstart{M}{achine} learning (ML)-based wireless algorithms are emerging as pivotal enablers for 6G applications, including channel state information (CSI) prediction, compression, and beam management \cite{3gppTR38843}. Increasingly, such algorithms are deployed directly on user equipment (UE) and edge devices for real-time inference, including as components of digital twin networks (DTNs) \cite{zhang2025digital,lin20236g}. However, a fundamental bottleneck is the inevitable training-test distribution mismatch: ML models trained offline on synthetic data fail to capture the full complexity of real-world environments, and the highly dynamic nature of wireless channels causes significant post-deployment performance degradation \cite{kaswan2024statistical,luo2025digital}.
	
	To mitigate this training-test mismatch, ML models must continuously adapt to dynamic deployment environments through online learning \cite{xulearning,Samsung2025AI6GR}. However, conventional online learning is often infeasible for resource-constrained devices, such as UE and edge devices, which typically lack the computational and memory capacities required to support real-time gradient computations and backpropagation.
	
	In this paper, we propose \name{}, a lightweight online fine-tuning framework that enables the UE to incrementally refine deployed ML models using inference-time observations. \name{} is specifically designed for prediction tasks where ground-truth metrics, such as link-level throughput or block error rate (BLER), become available after a brief delay. Such metrics are inherently available in wireless communications and networked systems. By leveraging this delayed ground-truth data, \name{} performs targeted updates to continuously improve model accuracy in a direct, supervised manner. Unlike reinforcement learning (RL)-based approaches, which rely on reward signals and unsupervised exploration \cite{saxena2021reinforcement,an2024dragon}, \name{} utilizes direct ground-truth labels to ensure convergence. Crucially, while standard adaptation techniques rely on computationally expensive backpropagation \cite{xu2024learning}, \name{} is entirely backpropagation-free, making it ideally suited for a resource-limited UE, particularly wireless modems. \\
	\textbf{Contributions}. Our main contributions are as follows.\\ 1) We develop a fine-tuning algorithm termed \name{} that is opportunistically triggered when the performance of the ML model, initially trained offline, is not as desired. The performance of the ML model is \emph{monitored} in terms of the prediction error, and once it reaches a predefined threshold $\delta$, the fine-tuning procedure is triggered. \name{} offers backpropagation-free online fine-tuning by leveraging the forward-forward (FF) algorithm~\cite{hinton2022forward}, enhanced by a newly proposed lightweight smooth loss function, which enables low-complexity online gradient computation, and buffer-less fine-tuning through a proposed threshold-based update policy. This policy decides which samples are used on a sample-by-sample basis, without storing them  first \cite{lin1992self,schaul2015prioritized,rolnick2019experience}. To the best of our knowledge, \name{} is the first application of the FF algorithm in cellular wireless communications. \\
	2) We provide finite-time and asymptotic convergence guarantees for \name{}  under stochastic gradient descent (SGD) showing that, under training-test distribution mismatch, the average frequency of prediction errors reaching or exceeding any fixed threshold $\delta$ converges to $0$ as the number of fine-tuning steps increases.\\
		3) To demonstrate the practical utility of \name{}, we evaluate its performance within a short-term BLER prediction framework for wireless cellular systems. Integrating this BLER predictor into the link adaptation process yields throughput gains reaching up to $15.5\%$ relative to a conventional table-based outer-loop link adaptation (OLLA) baseline. 
		
			\textbf{Organization}. We discuss background concepts and prior work in Section \ref{sec:background}. 
	\name{} is  presented in Section \ref{sec:proposed} and the convergence results are provided in Section \ref{sec:convergence}. \name{} is then leveraged for BLER prediction and link adaptation in Section \ref{sec:app}. We present our  simulation results in Section \ref{sec:results} and discuss concluding remarks and future work in  Section \ref{sec:conclusion}.
	\section{Background and Related Work}
	\label{sec:background}
	We begin by providing an overview of the FF algorithm, which is commonly used for computer vision problems \cite{hinton2022forward}.
	\subsection{The Forward-Forward (FF) Algorithm}
	In contrast to backpropagation (BP), the FF algorithm trains each layer locally and sequentially, avoiding the need to propagate derivatives backward through the network. Based on two forward passes \cite{hinton2022forward}, its layer-local update rules eliminate the need for a global computational graph, making it a natural fit for resource-constrained UE. The two forward passes are described as follows.
	\begin{enumerate}[leftmargin=*, itemsep=0pt, parsep=0pt]
		\item \textbf{Forward Positive Pass}.
		A positive data sample is defined as the tuple $  (\bm x, y_{\text +})$, where $\bm x \in \mathbb{R}^d$ is the input feature vector and $y_{\text +} \in \mathcal Y = \{y_1, y_2, \dots, y_C\}$ is the corresponding true label. The positive pass operates on the positive data samples and optimizes the model parameters to increase a goodness value in every layer above a predefined threshold $T$, where $T>0$.
		\item \textbf{Forward Negative Pass}. 
		A negative data sample is defined as $ (\bm x, y_{\text{-}})$, where $y_{\text -} \in \mathcal{Y} \setminus{ \{y_{\text{+}}\}}$ is an incorrect label drawn from a distribution over incorrect labels (e.g., uniform). The negative pass operates on negative samples and adjusts the model  to decrease a goodness value below the threshold $T$. 
	\end{enumerate}
	
	\begin{itemize}[leftmargin=*, itemsep=4pt, parsep=0pt]
		\item \textbf{Label Encoding.} We adopt a concatenation-based encoding, where the label $y$ is appended to the feature vector $\bm{x} \in \mathbb{R}^{d}$ to form an augmented input vector $\bm{z} = [\bm{x}^\top, y]^\top \in \mathbb{R}^{d+1}$.
		
		\item \textbf{End-to-end Mapping.} We denote the  model parameters as $\bm{\theta} = [\bm{\theta}_1^\top, \bm{\theta}_2^\top, \dots, \bm{\theta}_L^\top]^\top$, where $\bm{\theta}_l \in \mathbb{R}^{d_l}$ represents the parameter vector of layer $l \in \{1, 2, \dots, L\}$. Each layer implements a transformation $f_l(\cdot; \bm{\theta}_l)$ that maps an input representation to a subsequent hidden state. For a given feature vector $\bm{x} \in \mathbb{R}^d$ and a candidate scalar label $y \in \mathcal{Y}$, we construct the augmented input $\bm{z} = [\bm{x}^\top, y]^\top \in \mathbb{R}^{d+1}$, which serves as the initial activation $\bm{h}_0$. The model's forward pass is defined by the recursive composition of these layer-wise functions, $\bm{h}_l = f_l(\bm{h}_{l-1}; \bm{\theta}_l)$, such that the final activation vector $\bm{h}_L$ is expressed as
		\begin{equation}
			\bm{h}_L := F_{\bm{\theta}}^{(L)}(\bm{z}) = (f_L \circ f_{L-1} \circ \cdots \circ f_1)(\bm{z}),
		\end{equation}
		where $F_{\bm{\theta}}^{(l)}: \mathbb{R}^{d+1} \to \mathbb{R}^{M_l}$ denotes the cumulative mapping from the initial augmented input space to the activation space of the $l$-th layer, and $M_l$ represents the output dimensionality of layer $l$. By convention, the base case is defined as $\bm{h}_0 = F_{\bm{\theta}}^{(0)}(\bm{z}) = \bm{z} \in \mathbb{R}^{M_0}$ where $M_0 = d+1$.
		
		\item \textbf{Goodness Function.} The compatibility between the features $\bm{x}$ and the candidate label $y$ is quantified by a goodness function $G_{\bm \theta}(\bm{x}, y)$, defined as 
		\begin{equation}
			\label{eqn:goodness}
			G_{\bm \theta}(\bm{x}, y) = \|\bm{h}_L\|_2^2 = \| F_{\bm{\theta}}^{(L)}([\bm{x}^\top, y]^\top) \|_2^2,
		\end{equation}
		where the learning objective is to maximize $G_{\bm \theta}(\bm{x}, y)$ for positive samples (ground-truth pairings) and minimize it for negative samples (incorrect or corrupted pairings). 
		
		\item \textbf{Softplus Loss Function.} Let $\bm{h}_{\text{+},l}^{(i)}$ and $\bm{h}_{\text{-},l}^{(i)} \in \mathbb{R}^{M_l}$ denote the activation vectors of layer $l$ for the $i$-th positive and negative samples, respectively. The per-neuron positive and negative goodness values are then defined as
		\begin{align}
			g_{\text{+},l}^{(i)}[j] = (h_{\text{+},l}^{(i)}[j])^2, \  g_{\text{-},l}^{(i)}[j] = (h_{\text{-},l}^{(i)}[j])^2,
		\end{align}
		for each neuron $j \in \{1, 2, \dots, M_l \}$. The positive and negative goodness vectors for layer $l$ are then given as $\bm g_{\text{+},l}^{(i)} = \left[g_{\text{+},l}^{(i)}[1], g_{\text{+},l}^{(i)}[2], \dots, g_{\text{+},l}^{(i)}[M_l] \right]^\top$ and $\bm g_{\text{-},l}^{(i)} = \left[g_{\text{-},l}^{(i)}[1], g_{\text{-},l}^{(i)}[2], \dots, g_{\text{-},l}^{(i)}[M_l] \right]^\top$. It is worth noting that $G_{\bm \theta}(\bm{x}, y)$ defined in (\ref{eqn:goodness}) corresponds to the sum of these per-neuron goodness values at the terminal layer $L$, such that $G_{\bm \theta}(\bm{x}, y) = \sum_{j=1}^{M_L} g_{L}[j] = \sum_{j=1}^{M_L}  (h_{L}[j])^2$.
		
		For a set of positive samples $\mathcal{S}_{\text{+}}$ and negative samples $\mathcal{S}_{\text{-}}$, the Softplus loss for layer $l$ is expressed as
		\begin{align}
			\mathcal{L}_{\text{Softplus},l} &= \frac{1}{|\mathcal{S}_{\text{+}}| M_l} \sum_{i \in \mathcal{S}_{\text{+}}} \sum_{j=1}^{M_l} \ln \left(1 + e^{-(g_{\text{+},l}^{(i)}[j] - T)}\right) \notag \\
			&\quad + \frac{1}{|\mathcal{S}_{\text{-}}| M_l} \sum_{i \in \mathcal{S}_{\text{-}}} \sum_{j=1}^{M_l} \ln \left(1 + e^{g_{\text{-},l}^{(i)}[j] - T}\right),
		\end{align}
		where $T$ is a fixed threshold. This loss function encourages the goodness values of individual neurons to stay above $T$ for positive samples and below $T$ for negative samples.
		
		\item \textbf{Inference.} 
		The predicted label $\hat{y}$ for a given feature vector $\bm{x}$ is obtained  as  
		\begin{align}
			\hat{y} = \arg\max_{y \in \mathcal{Y}} G_{\bm \theta}(\bm{x}, y) 
			= \arg\max_{y \in \mathcal{Y}} \left\| F_{\bm \theta}^{(L)} \left( [\bm{x}^\top, y]^\top \right) \right\|_2^2.
		\end{align}
	\end{itemize}
	
	\begin{remark}[\textbf{Output layer in the FF algorithm}]
		Unlike conventional classifiers, where the output layer's dimension is equal to the number of classes $C$, the output layer's dimension may not be equal to $C$ in the FF algorithm as it predicts the label based on goodness, as illustrated in Fig. \ref{fig:ff_inference}.
	\end{remark}
	\begin{figure}[h]
		\centering
		\begin{tikzpicture}[node distance=0.6cm and 0.8cm, font=\scriptsize, >=Stealth]
			% Input
			\node (x) [draw, rectangle, minimum width=0.7cm, minimum height=0.2cm] {$\bm{x}$};
			
			% Augmented inputs
			\node (xy1) [draw, rectangle, right=of x, yshift=1cm, minimum width=1.4cm, minimum height=0.5cm] {$(\bm{x}, {y}_1)$};
			\node (xy2) [draw, rectangle, right=of x, minimum width=1.4cm, minimum height=0.5cm] {$(\bm{x}, {y}_2)$};
			\node (xyC) [draw, rectangle, right=of x, yshift=-1cm, minimum width=1.4cm, minimum height=0.5cm] {$(\bm{x}, {y}_C)$};
			
			% Dots
			\node (dots) [right=of x, xshift=0.4cm, yshift=-0.4cm] {$\vdots$};
			
			% Goodness blocks
			\node (g1) [draw, rectangle, right=of xy1, minimum width=1cm, minimum height=0.5cm] {$G_{\bm \theta}(\bm x, y_1)$};
			\node (g2) [draw, rectangle, right=of xy2, minimum width=1cm, minimum height=0.5cm] {$G_{\bm \theta}(\bm x, y_2)$};
			% Dots
			\node (dots) [right=of xy2, xshift=0.4cm, yshift=-0.4cm] {$\vdots$};
			\node (gC) [draw, rectangle, right=of xyC, minimum width=1cm, minimum height=0.5cm] {$G_{\bm \theta}(\bm x, y_C)$};
			
			% Selection
			\node (select) [draw, rectangle, right=of g2, xshift=0.01cm, minimum width=1.6cm, minimum height=0.5cm, font=\scriptsize] 
			{$\hat{y} = \arg\max\limits_{{y} \in \mathcal{Y}}\, G_{\bm \theta}(\bm{x}, {y})$
			};
			
			% Arrows
			\draw[->] (x.east) -- ++(0.2,0) |- (xy1.west);
			\draw[->] (x) -- (xy2);
			\draw[->] (x.east) -- ++(0.2,0) |- (xyC.west);
			
			\draw[->] (xy1) -- (g1);
			\draw[->] (xy2) -- (g2);
			\draw[->] (xyC) -- (gC);
			
			\draw[->] (g1) to[out=0, in=90] (select.north);
			\draw[->] (g2) -- (select);
			\draw[->] (gC) to[out=0, in=270] (select.south);
		\end{tikzpicture}
		\captionsetup{font=footnotesize}
		\caption{Inference process in the FF algorithm: the input $\bm{x}$ is paired with each candidate label ${y} \in \mathcal{Y}$ and the label with the highest goodness is selected.}
		\label{fig:ff_inference}
	\end{figure}
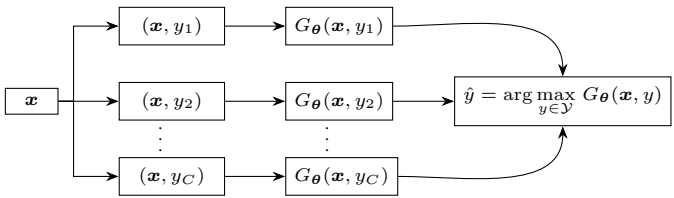
	
\vspace{-15 pt}
		\subsection{Related Work}
	The problem of adapting ML-based wireless algorithms after deployment under real-world practical constraints such as avoiding backpropagation has received limited attention in the literature. We review the closely-related prior works with focus on the link adaptation problem.
	\begin{enumerate}[leftmargin=*, itemsep=0pt, parsep=0pt]
		
		\item \textbf{Offline Training.} Several prior works adopt an offline training paradigm, where the ML model is trained once and its parameters remain fixed after deployment \cite{dong2018machine,van2021machine,ali2026online}. In this  approach, the primary objective is to construct a training dataset that broadly anticipates and represents all potential deployment scenarios which is an exhaustive process that is inherently time-consuming. Furthermore, despite this significant upfront complex training process, this static approach suffers from training-test mismatches, often resulting in significantly degraded inference performance within dynamic environments.
		
		\item \textbf{Offline Training with Online Calibration.} Several calibration techniques attempt to mitigate training-test mismatches by adjusting the model's output rather than its internal parameters \cite{baknina2020adaptive,huang2021deluxe,ali2026online}. These approaches rely on acknowledgment (ACK) and negative acknowledgment (NACK) of the physical downlink shared channel (PDSCH) transmissions. By tracking these signals, an empirical metric, such as the short-term BLER, is computed to dynamically adjust or scale the ML predictions. However, because the underlying ML model parameters remain frozen, these methods act as a superficial outdated correction rather than enabling the model to inherently adapt to the deployment environment.
		
		\item \textbf{Reinforcement Learning.} Although deep RL has been widely explored for continuous adaptation of ML-based wireless algorithms \cite{saxena2021reinforcement,an2024dragon,xu2024learning}, such solutions typically overlook the strict hardware limitations of wireless modems. Specifically, continuous RL training demands substantial memory for experience replay buffers, alongside significant computational power for ongoing backpropagation and gradient evaluations, making it largely prohibitive for resource-constrained devices. Furthermore, by operating purely on indirect reward signals, RL methodologies fail to exploit a critical structural advantage in cellular wireless systems: the inherent availability of delayed \emph{ground-truth labels} (e.g., ACKs/NACKs). Rather than relying on trial-and-error exploration, these delayed labels can be utilized to enable highly efficient, targeted supervised fine-tuning.
		
		\item \textbf{Table-based Baselines.} Prior to the proliferation of ML-driven techniques, table-based algorithms served as the de facto standard for wireless communications. Traditional link adaptation maps estimated channel conditions, such as signal-to-interference-plus-noise ratio (SINR) or mean mutual information per bit (MMIB), to a channel quality indicator (CQI) index using predefined lookup tables \cite{peralta2022outer}.\\ 
		To adapt to dynamic environments, OLLA algorithms utilize ACK/NACK feedback to continuously adjust an offset margin. While recent advancements, such as the exponential decay mechanism  \cite{mazumdar2026enhancing}, aim to accelerate OLLA convergence, a fundamental structural limitation remains. Specifically, conventional OLLA updates are typically isolated to the specific CQI currently scheduled for transmission. Because the remainder of the lookup table is not jointly updated, transitioning to a different CQI relies on stale, uncalibrated mappings. This severely bottlenecks the  adaptation process and frequently results in the selection of CQIs with high short-term BLER \cite{Ali2026lighttune}.

		\item \textbf{Parametric Adaptation.} Concurrent \cite{wiesmayr2025salad} and subsequent research \cite{maggi2026sinr} tackle the link adaptation problem by modeling the mapping between signal quality and BLER using differentiable sigmoid functions. These approaches propose an online convex optimization framework, implemented at the \emph{base station} which applies continuous gradient descent to minimize the cross-entropy loss between ACK/NACK feedback and predicted short-term BLER. While these methods provide smoother margin adjustments than conventional OLLA, they rely on one-dimensional parametric heuristics and require continuous gradient evaluations. Consequently, they remain agnostic to high-dimensional UE-side channel dynamics, such as Doppler and delay spreads, and lack the representational capacity required to proactively adapt to the complex non-stationarities of 6G networks.
		
\item \textbf{Digital Twins.} Recent advances advocate decentralizing DTs onto UE and edge devices to minimize latency and preserve privacy. A DT's lifecycle spans \textit{creation} (offline training), \textit{synchronization} (real-time adaptation), and \textit{retirement} (model replacement). While creation is well-studied, synchronization under strict UE compute and memory constraints remains an open problem, as existing techniques rely on computationally expensive methods such as RL \cite{tong2025continual}.
	\end{enumerate}
\textbf{Our Work.} \name{} addresses these limitations through a closed-loop online fine-tuning framework. In contrast to output-calibration methods, \name{} directly updates model parameters; in contrast to RL-based methods, it does so without backpropagation or a replay buffer. Updates are triggered only when the prediction error reaches or exceeds a threshold $\delta$, ensuring minimal computational overhead. \\When \name{} is deployed for on-device DT synchronization, this prediction error serves as the twin fidelity metric, quantifying the physical-digital alignment between the live channel and its parametric model, and $\delta$ defines the fidelity threshold below which synchronization is deemed acceptable. We further provide provable convergence guarantees for \name{} under training-test distribution shift.

	\section{Proposed Fine-tuning Algorithm: \name{}}
	\label{sec:proposed}
	In this section, we provide our  closed-loop  framework that uses the FF algorithm both to train a baseline ML model offline and to fine-tune it.
	\vspace{-10 pt}
	\subsection{Proposed Loss: Quadratic Softplus Approximation}
	The gradient computation for the  Softplus loss function requires evaluating exponentiations and divisions, which introduce undesirable computational latency and  complexity. To circumvent this overhead, we propose an alternative smooth quadratic loss function. 
	This quadratic loss encourages the goodness values to exceed the threshold $T$ for positive samples and fall below the threshold $T$ for negative samples. For layer $l$, this  loss is defined as
	\begin{align}
		\label{eqn:proposed_loss}
		\mathcal{L}_{\text{Prop},l} &= \frac{1}{|\mathcal{S}_{\text{+}}| M_l} \sum_{i \in \mathcal{S}_{\text{+}}} \sum_{j=1}^{M_l} \left( (g_{\text{+},l}^{(i)}[j] - T)^2 - 4(g_{\text{+},l}^{(i)}[j] - T) \right) \notag \\
		&\quad + \frac{1}{|\mathcal{S}_{\text{-}}| M_l} \sum_{i \in \mathcal{S}_{\text{-}}} \sum_{j=1}^{M_l} \left( (g_{\text{-},l}^{(i)}[j] - T)^2 + 4(g_{\text{-},l}^{(i)}[j] - T) \right).
	\end{align}
	We provide the derivation and experimental validation of the proposed loss function in Appendix \ref{app:loss_experiment}, demonstrating that it achieves accuracy comparable to the Softplus loss.\\ For the remainder of this paper, the proposed loss function for layer $l$ is denoted by $\mathcal{L}_l$ for simplicity.

	\subsection{Closed-Form Gradient Computations}
	To fine-tune the ML model using the proposed quadratic loss, we compute the local closed-form gradients for each layer $l$ of the multilayer perceptron (MLP). Let $\bm{W}_l$ and $\bm{b}_l$ denote the weight matrix and bias vector of layer $l$, respectively. We define the augmented parameter matrix $\bm{\Theta}_l = [\bm{W}_l, \bm{b}_l] \in \mathbb{R}^{M_l \times (M_{l-1} + 1)}$. This matrix formulation is equivalent to the flattened parameter vector $\bm{\theta}_l \in \mathbb{R}^{d_l}$ introduced previously, where the total number of layer parameters is $d_l := M_l(M_{l-1} + 1)$. Correspondingly, we define the augmented input activations for the positive and negative samples as $\tilde{\bm{h}}_{\text{+},l-1} = [\bm{h}_{\text{+},l-1}^\top, 1]^\top$ and $\tilde{\bm{h}}_{\text{-},l-1} = [\bm{h}_{\text{-},l-1}^\top, 1]^\top$, respectively. 
	
	In \name, each update step utilizes a single positive sample $\bm{z}_{\text{+}} = [\bm{x}^\top, y_{\text{+}}]^\top$ and a single negative sample $\bm{z}_{\text{-}} = [\bm{x}^\top, y_{\text{-}}]^\top$, where $y_{\text{-}} \in \mathcal{Y} \setminus \{ y_{\text{+}} \}$. Consequently,  $|\mathcal{S}_{\text{+}}| = |\mathcal{S}_{\text{-}}| = 1$. The derivative of the proposed layer-wise loss with respect to the augmented parameter matrix is computed as
	\begin{align}
		\nabla_{\bm{\Theta}_l} \mathcal{L}_{l} = \nabla_{\bm{\Theta}_l} \mathcal{L}_{\text{+}, l} + \nabla_{\bm{\Theta}_l} \mathcal{L}_{\text{-}, l},
	\end{align}
	where $\nabla_{\bm{\Theta}_l} \mathcal{L}_{\text{+}, l}$ and $\nabla_{\bm{\Theta}_l} \mathcal{L}_{\text{-}, l}$ denote the gradient  corresponding to the positive and negative sample, respectively. Assuming the use of rectified linear unit (ReLU) activations, these gradients can be expressed in a closed-form matrix notation as
	\begin{align}
		\label{eqn:gradients}
		\nabla_{\bm{\Theta}_l} \mathcal{L}_{\text{+}, l} &= \frac{4}{M_l} \big[ (\bm{g}_{\text{+},l} - T - 2) \odot \bm{h}_{\text{+},l} \odot \mathds{1}(\bm{h}_{\text{+},l} > 0) \big] \tilde{\bm{h}}_{\text{+},l-1}^\top, \notag \\
		\nabla_{\bm{\Theta}_l} \mathcal{L}_{\text{-}, l} &= \frac{4}{M_l} \big[ (\bm{g}_{\text{-},l} - T + 2) \odot \bm{h}_{\text{-},l} \odot \mathds{1}(\bm{h}_{\text{-},l} > 0) \big] \tilde{\bm{h}}_{\text{-},l-1}^\top,
	\end{align}
	where $\odot$ denotes the element-wise Hadamard product and $\mathds{1}(\cdot)$ is the indicator function applied element-wise. 
	Since the  fine-tuning process utilizes exactly one positive and one negative sample,  the gradient computation and the parameter update are performed only once. Crucially, each parameter update is purely local to its respective layer. By requiring only the current layer's activations and the forward-propagated inputs from the preceding layer, this localized approach eliminates the memory and computational overhead associated with BP.
	
	\subsection{Threshold-based Fine-tuning Without Replay Buffer}
	To  eliminate the memory overhead associated with experience replay buffers at the UE, \name{} employs a threshold-based update policy, enabling the model to be fine-tuned in an opportunistic, sample-by-sample manner. The \textit{surprise} of an observation at time $t$ is quantified by the prediction error, denoted as $e(\hat{y}^{(t)}, y_{\text{+}}^{(t)})$, between the model's predicted label $\hat{y}^{(t)}$ and the delayed ground-truth label $y_{\text{+}}^{(t)}$. When this error meets or exceeds a predefined threshold $\delta$, the model identifies a distribution shift and triggers a local update. For this update, the true observation $\bm{z}_{\text{+}}^{(t)} = [{\bm{x}^{(t)}}^\top, y_{\text{+}}^{(t)}]^\top$ is utilized as the positive training pair. A corresponding negative sample $\bm{z}_{\text{-}}^{(t)} = [{\bm{x}^{(t)}}^\top, y_{\text{-}}^{(t)}]^\top$ is constructed by pairing the current input feature vector $\bm{x}^{(t)}$ with an incorrect label $y_{\text{-}}^{(t)} \in \mathcal{Y} \setminus \{ y_{\text{+}}^{(t)} \}$. This on-the-fly fine-tuning mechanism is detailed in Algorithm \ref{alg:fine_tuning} and  Fig. \ref{fig:finetune}. We consider two negative sampling strategies as follows. \\ 1) \textbf{Uniform Negative Sampling}. $y_{\text -}^{(t)}$ is sampled uniformly at random from the set of incorrect labels $\mathcal{Y} \setminus \{y_{\text +}^{(t)}\}$. This approach is widely adopted in the FF literature \cite{hinton2022forward,torres2025advancements}.\\ 2) \textbf{Hard Negative Sampling}. The negative label is set to the model's current erroneous prediction: $y_{\text -}^{(t)} = \hat{y}^{(t)}$. Inspired by hard triplet mining and adversarial training \cite{schroff2015facenet}, this selection forces the model to specifically penalize its most confident errors, thereby sharpening the decision boundaries more effectively during fine-tuning.
	
	\begin{algorithm}
		\caption{Proposed Fine-tuning Algorithm: \name{}}
		\label{alg:fine_tuning}
		\begin{algorithmic}[1]
			\small 
			\REQUIRE Current model parameters $\bm{\theta}^{(t)}$, feature vector  $\bm{x}^{(t)}$, delayed actual metric $y_{\text{+}}^{(t)}$, fine-tuning threshold $\delta$, Adam optimizer parameters ($t_{\text A}, \beta_1, \beta_2, \epsilon$), fine-tuning learning rate $\alpha_f$ 
			\ENSURE ML prediction $\hat y^{(t)}$, fine-tuned model parameters $\bm{\theta}^{(t+1)}$
			\STATE Compute model prediction: $\hat{y}^{(t)} = \arg\max_{{y} \in \mathcal Y} G_{\bm \theta^{(t)}}\left(\bm{x}^{(t)}, {y} \right)$
			\IF{\(e(\hat{y}^{(t)}, y_{\text{+}}^{(t)}) \geq \delta\)}
			\STATE Construct positive sample: $\bm{z}_{\text{+}}^{(t)} = [{\bm{x}^{(t)}}^\top, y_{\text{+}}^{(t)}]^\top$ \vspace{1pt}
			\STATE Construct negative sample: $\bm{z}_{\text{-}}^{(t)} = [{\bm{x}^{(t)}}^\top, y_{\text{-}}^{(t)}]^\top$
			\STATE Compute gradients $\nabla_{\bm{\theta}} \mathcal{L}^{(t)}$ using $\bm{z}_{\text{+}}^{(t)}$ and $\bm{z}_{\text{-}}^{(t)}$ (as in \eqref{eqn:gradients}) 
			
			\IF{Standard Adam variant}
			\STATE $t_{\text A} \gets t_{\text A} + 1$
			\ELSIF{One-step variant}
			\STATE $t_{\text A} \gets 1, \quad \bm{m}_{0} \gets \bm{0}, \quad \bm{v}_{0} \gets \bm{0}$
			\ENDIF
			
			\STATE Compute updated model parameters as follows \cite{kingma2015adam}: 
			\begin{align}
				&\bm{m}_{t_{\text A}} = \beta_1 \bm{m}_{t_{\text A}-1} + (1 - \beta_1) \nabla_{\bm{\theta}} \mathcal{L}^{(t)}, \\
				&\bm{v}_{t_{\text A}} = \beta_2 \bm{v}_{t_{\text A}-1} + (1 - \beta_2) (\nabla_{\bm{\theta}} \mathcal{L}^{(t)})^2, \\
				&\Delta \bm{\theta}^{(t)} = \frac{\bm{m}_{t_{\text A}} / (1 - \beta_1^{t_{\text A}})}{\sqrt{\bm{v}_{t_{\text A}} / (1 - \beta_2^{t_{\text A}})} + \epsilon}, \\
				&\bm{\theta}^{(t+1)} \gets \bm{\theta}^{(t)} - \alpha_f \Delta \bm{\theta}^{(t)}.
			\end{align}
			\ELSE
			\STATE $\bm{\theta}^{(t+1)} \gets \bm{\theta}^{(t)}$
			\ENDIF
		\end{algorithmic}
	\end{algorithm}
	
	\begin{figure}
		\captionsetup{font=footnotesize}
		\centering
		\resizebox{\columnwidth}{!}{%
			\hspace{1cm}
			\begin{tikzpicture}[>=Stealth, thick, scale=0.72,font=\large]
				\node (x) at (0,2.9) {\( \bm{x}^{(t)} \)};
				\node[draw, rounded corners, fill=gray!10, text width=3.5 cm, align=center, minimum height=1.2cm] 
				(model) at (0,0.7) {ML Model\\\( \bm{\theta} \)};
				\node[draw, rounded corners, fill=gray!10, text width=3.5cm, align=center, minimum height=1.2cm] 
				(error) at (0,-2.4) {Error Computation\\\( e(\hat{y}^{(t)}, y_{\text{+}}^{(t)}) \)};
				\node (y) at (4.2,-2.4) {\( y_{\text{+}}^{(t)} \)};
				\node[draw, rounded corners, fill=gray!10, text width=5.5cm, align=center, minimum height=1.2cm] 
				(tune) at (-12,-2.4) {Compute Gradients};
				\node[draw, rounded corners, fill=gray!10, text width=5.5cm, align=center, minimum height=1.2cm] 
				(update) at (-12,0.7) {Parameter Update\\ $\bm{\theta}^{(t+1)} = \bm{\theta}^{(t)} - \alpha_f \Delta \bm \theta^{(t)}$};
				\draw[->, shorten >=2pt] (x) -- (model);
				\draw[->, shorten >=2pt] (model) -- node[pos=0.5, right] {\( \hat{y}^{(t)} = \arg\max_{{y} \in \mathcal Y} G(\bm{x}^{(t)}, y) \)} (error);
				\draw[->, shorten >=2pt] (y) -- (error);
				\draw[->, shorten >=2pt] (error) -- node[pos=0.5, above, yshift=0pt] {\( e(\hat{y}^{(t)}, y_{\text{+}}^{(t)}) \geq \delta \)} (tune);
				\draw[->, shorten >=2pt] (tune) -- node[pos=0.5, above, xshift=-25pt, yshift=-8pt] {\( \Delta \bm \theta^{(t)} \)} (update);
				\draw[->, shorten >=2pt] (update) -- node[pos=0.3, above, xshift=10pt,yshift=1pt] {\( \bm \theta \leftarrow \bm{\theta}^{(t+1)} \)} (model);
			\end{tikzpicture}
		}
		\caption{\name{} uses the delayed true label $y_{\text{+}}^{(t)}$ to compute the prediction error and fine-tune the model if needed.}
		\label{fig:finetune}
	\end{figure}

	\subsection{Fine-tuning Variants}
	Although \name{} supports SGD and other optimizers, our focus is on variants of Adam optimizer \cite{kingma2015adam} and normalized SGD. Adam dynamically adapts parameters using two running averages: the first moment (gradient mean) $\bm{m}_t$ and the second moment (gradient variance) $\bm{v}_t$.\\
	1) \textbf{Standard Adam Update}: The optimizer maintains an internal counter $t_{\text A}$ that exclusively tracks the number of executed \name{} fine-tuning updates. Specifically, $t_{\text A}$ is incremented only when the fine-tuning condition $e^{(t)} \ge \delta$ is triggered. The updates reuse the moments from the previous fine-tuning instance, following the standard recursive form $\bm{m}_{t_{\text A}} = \beta_1 \bm{m}_{t_{\text A}-1} + (1 - \beta_1) \nabla_{\bm{\theta}} \mathcal{L}^{(t)}$ and $\bm{v}_{t_{\text A}} = \beta_2 \bm{v}_{t_{\text A}-1} + (1 - \beta_2) (\nabla_{\bm{\theta}} \mathcal{L}^{(t)})^2$, where $\beta_1$ and $\beta_2$ are the exponential decay rates. \\
	2) \textbf{One-step Update}: To  eliminate the need for storing previous moments, this variant resets the internal counter to $t_{\text A} = 1$ at every fine-tuning instance. Consequently, historical moments are effectively reinitialized to zero, and the update simplifies to $\bm{m}_{1} = (1 - \beta_1) \nabla_{\bm{\theta}} \mathcal{L}^{(t)}$ and $\bm{v}_{1} = (1 - \beta_2) (\nabla_{\bm{\theta}} \mathcal{L}^{(t)})^2$. This variant is  memory-efficient as the UE is not required to store the high-dimensional moments across the fine-tuning steps.
	
	\begin{remark}[\textbf{Rationale for One-step Update}]
		In the online, sparse-update setting of \name{}, ``surprising'' samples frequently indicate a fundamental distribution shift in the underlying wireless channel. By resetting the internal Adam moments, this variant ensures that the parameter update relies strictly on the current gradient. Consequently, the update simplifies to a normalized gradient step given by (when $\epsilon = 0, \beta_1 = 0, \beta_2 = 0$)
		\begin{equation}
			\label{eqn:sign_update}
			\bm{\theta}^{(t+1)} \gets \bm{\theta}^{(t)} - \alpha_f \cdot \text{sgn}\left( \nabla_{\bm{\theta}} \mathcal{L}^{(t)} \right),
		\end{equation}
		where $\text{sgn}(\cdot)$ denotes the element-wise sign function. This gradient update is also advantageous for a resource-limited UE, as it replaces expensive divisions with sign extractions. 
	\end{remark}

	\section{Convergence of \name{}}
	\label{sec:convergence}
	We  provide the convergence guarantee for \name{} under SGD, showing that
	the average frequency of prediction errors reaching or exceeding any fixed threshold $\delta > 0$ converges to $0$ as the number of \name{} updates increases, despite the training-test distribution mismatch. The intuition is that whenever the prediction error is large, the model update reduces the loss by a guaranteed amount. Consequently, large errors can occur only finitely many times, and their frequency must decay to $0$.
	We start with definitions and assumptions. 

	\subsection{Definitions and Assumptions}
	
	\begin{definition}[\textbf{Error-driven Update Indicator}]
		For an error tolerance $\delta > 0$, the update at step $t$ is controlled by a binary random variable $I^{(t)}_{\delta}$, defined as
		\begin{equation}
			I^{(t)}_\delta := \mathds{1}\{e^{(t)} \ge \delta\},
		\end{equation} 
		where $e^{(t)} = e^{(t)}(\hat{y}^{(t)}, y^{(t)}_{\text{+}})$ denotes the prediction error.
	\end{definition}
	\noindent Hence, $\mathbb{E}[I^{(t)}_\delta] = \Pr(e^{(t)} \geq \delta)$. 
	
	\begin{definition}[\textbf{Filtration}]
		The filtration $\mathcal{F}_t$ represents the history of the stochastic process up to time $t$ and is defined as the $\sigma$-algebra generated by the offline model parameters and the sequence of positive samples observed up to time $t$: 
		$
		\mathcal{F}_t = \sigma\bigl(\bm{\theta}^{(1)}, \bm{z}^{(1)}_{\text +}, \dots, \bm{z}^{(t)}_{\text +}\bigr).
		$
	\end{definition}

	\begin{assumption}[\textbf{ReLU MLPs}]
		\label{assumption:ReLUs}
		The underlying ML model of \name{} is assumed to be a ReLU MLP. 
	\end{assumption}
	
	\begin{assumption}[\textbf{Bounded Input and Model}]
		\label{assumption:boundedness}
		There exist positive constants $B_z, B_\theta$ such that for all time steps $t$ and all layers $l$: (i) $\|\bm{z}^{(t)}\|_2 \leq B_z$, where $\bm{z}^{(t)} = [{\bm{x}^{(t)}}^\top, y^{(t)}]^\top \in \mathbb{R}^{d+1}$, and (ii) $\|\bm{\theta}_{l,j}^{(t)}\|_2 \leq B_\theta$ for all neurons $j$ in layer $l$, where $\bm{\theta}_{l,j}^{(t)}$ denote the parameters of neuron $j$ in layer $l$.
	\end{assumption}
	
	Next, to model the training-test mismatch problem, we consider Assumption \ref{assumption:distributions}.
	
	\begin{assumption}[\textbf{Training and Online Data Distributions}]
		\label{assumption:distributions}
		The offline training data are i.i.d. from distribution $\mathcal{D}_1$, and the online fine-tuning data $\{\bm{z}^{(t)}\}_{t\ge 1}$ are i.i.d. from a (possibly different) distribution $\mathcal{D}_2$.
	\end{assumption}
	\begin{assumption}[\textbf{Gradient Lower Bound}]
		\label{assumption:gradient_lower_bound}
		For any fixed $\delta > 0$, there exists a constant $\gamma_2(\delta) > 0$ such that whenever $e^{(t)} \geq \delta$, the expected squared norm of the gradient (with respect to the random negative sample) under $\mathcal{D}_2$ satisfies\footnote{The subscript in $\gamma_2$ means that this is under the distribution $\mathcal{D}_2$.}
		\begin{equation}
			\mathbb{E}_{y^{(t)}_{\text{-}}}\!\left[\|\nabla_{\bm{\theta}_L} \mathcal{L}_L^{(t)}(\bm{\theta}_L^{(t)})\|_2^2 \;\big|\; \mathcal{F}_t,\; e^{(t)}\geq\delta\right] \geq \gamma_2(\delta).
		\end{equation}
	\end{assumption}
	\noindent This assumption is analogous to the gradient dominance or Polyak--\L{}ojasiewicz (PL) conditions frequently invoked in non-convex optimization literature \cite{karimi2016linear}. It ensures that the ``learning signal'' remains strictly bounded away from zero whenever the model's performance is not as desired (i.e., $e^{(t)} \geq \delta$). In effect, this condition guarantees that the stochastic nature of the negative label selection does not lead to vanishing gradients during the online fine-tuning process. We provide an informal intuitive justification for this assumption in Appendix \ref{app:convergence}. 
	
	\subsection{Preliminary Lemmas}
	We  present  foundational lemmas that we build on to prove our convergence theorem. The  proofs are deferred to Appendix~\ref{app:convergence}. We first establish the boundedness of activations, gradients and loss followed by the  smoothness of the loss.
	
	\begin{lemma}[\textbf{Bounded Activations}]
		\label{lemma:bounded_activations}
		Under Assumption~\ref{assumption:boundedness} (i.e., bounded input and model), there exists a positive constant $B_h$ such that for all time steps $t$ and all layers $l$, $\|\bm{h}_l^{(t)}\|_2 \le B_h$. In particular, one can take $B_h = \max_{0\le l\le L} B_l$, where $B_0 = B_z$ and for $l \ge 1$, $B_l = \sqrt{M_l}\, B_\theta\,(B_{l-1}+1).$
	\end{lemma}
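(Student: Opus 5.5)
The plan is to argue by induction on the layer index $l$. We show that for every time step $t$, $\|\bm{h}_l^{(t)}\|_2 \le B_l$, with $B_l$ defined by the recursion in the statement. The claim with $B_h = \max_{0\le l\le L} B_l$ then follows at once. The constants $B_z$, $B_\theta$ and $M_l$ do not depend on $t$, so none of the $B_l$ do either. That makes the bound uniform over all time steps, covering both positive and negative augmented inputs, since Assumption~\ref{assumption:boundedness}(i) applies to any $\bm{z}^{(t)}=[{\bm{x}^{(t)}}^\top,y^{(t)}]^\top$.

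\emph{Base case.} By convention $\bm{h}_0^{(t)}=\bm{z}^{(t)}$, so Assumption~\ref{assumption:boundedness}(i) gives $\|\bm{h}_0^{(t)}\|_2\le B_z=B_0$.

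\emph{Inductive step.} Assume $\|\bm{h}_{l-1}^{(t)}\|_2\le B_{l-1}$. Under Assumption~\ref{assumption:ReLUs}, neuron $j$ of layer $l$ computes
\begin{equation*}
h_l^{(t)}[j]=\max\bigl(0,\,\langle \bm{\theta}_{l,j}^{(t)},\tilde{\bm{h}}_{l-1}^{(t)}\rangle\bigr),
\end{equation*}
where $\bm{\theta}_{l,j}^{(t)}$ is the $j$-th row of $\bm{\Theta}_l=[\bm{W}_l,\bm{b}_l]$ and $\tilde{\bm{h}}_{l-1}^{(t)}=[{\bm{h}_{l-1}^{(t)}}^\top,1]^\top$. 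The ReLU satisfies $|\max(0,a)|\le|a|$. Combining this with Cauchy--Schwarz and Assumption~\ref{assumption:boundedness}(ii) gives
\begin{equation*}
|h_l^{(t)}[j]|\le B_\theta\,\|\tilde{\bm{h}}_{l-1}^{(t)}\|_2 .
\end{equation*}
Next, $\|\tilde{\bm{h}}_{l-1}^{(t)}\|_2=\sqrt{\|\bm{h}_{l-1}^{(t)}\|_2^2+1}\le \|\bm{h}_{l-1}^{(t)}\|_2+1\le B_{l-1}+1$, using $\sqrt{a^2+1}\le a+1$ for $a\ge0$. Summing the squared bound over the $M_l$ neurons yields
\begin{equation*}
\|\bm{h}_l^{(t)}\|_2\le \sqrt{M_l}\,B_\theta\,(B_{l-1}+1)=B_l,
\end{equation*}
which closes the induction.

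There is no real obstacle here; the only point needing care is the bias term. Assumption~\ref{assumption:boundedness}(ii) must be read as bounding the full augmented row, weights plus bias. This is exactly why the recursion contains $B_{l-1}+1$ rather than $B_{l-1}$. The bound must also be uniform in $t$, which holds because the assumption is stated for all $t$. If the last layer were linear rather than ReLU, the same bound would still hold without the ReLU inequality, so the argument is robust to that choice.
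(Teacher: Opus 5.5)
Your proof is correct and follows essentially the same route as the paper. Both argue by induction on $l$ from $\bm{h}_0^{(t)}=\bm{z}^{(t)}$, bound each neuron via Cauchy--Schwarz on the augmented pre-activation together with $|\max(0,a)|\le|a|$ and $\|\tilde{\bm{h}}_{l-1}^{(t)}\|_2\le B_{l-1}+1$, sum over the $M_l$ neurons to get the $\sqrt{M_l}$ factor, and take the maximum over layers.
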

	
	\begin{lemma}[\textbf{Bounded Gradient}]
		\label{lemma:bounded_gradient}
		Under Assumption \ref{assumption:boundedness} (i.e., bounded input and model), for all time steps $t$, layers $l$, and neurons $j$, the gradient of the layer loss with respect to the parameters of neuron $j$ is bounded by
		\begin{equation}
			\|\nabla_{\bm{\theta}_{l,j}} \mathcal{L}_l^{(t)}(\bm{\theta}_l^{(t)})\|_2 \leq \frac{8(B_h^2 + T + 2) B_h (B_h + 1)}{M_l},
		\end{equation}
		where $\nabla_{\bm{\theta}_{l,j}} \mathcal{L}_l^{(t)}(\bm{\theta}_l)$ denotes the partial derivative of $\mathcal{L}_l^{(t)}$ with respect to $\bm{\theta}_{l,j}$.
	\end{lemma}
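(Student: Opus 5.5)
The plan is to work directly from the closed-form expressions in \eqref{eqn:gradients}, restricted to row $j$ of $\bm{\Theta}_l$, which is exactly $\bm{\theta}_{l,j}$. Since each update uses one positive and one negative sample, we have $\nabla_{\bm{\theta}_{l,j}} \mathcal{L}_l = \nabla_{\bm{\theta}_{l,j}} \mathcal{L}_{\text{+},l} + \nabla_{\bm{\theta}_{l,j}} \mathcal{L}_{\text{-},l}$. Row $j$ of the positive term is
\begin{equation*}
\frac{4}{M_l}\,(g_{\text{+},l}[j]-T-2)\,h_{\text{+},l}[j]\,\mathds{1}(h_{\text{+},l}[j]>0)\,\tilde{\bm{h}}_{\text{+},l-1},
\end{equation*}
and the negative term has the same form with $-2$ replaced by $+2$. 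By the triangle inequality, it suffices to bound each of these two terms by $4(B_h^2+T+2)B_h(B_h+1)/M_l$ and then add them, which produces the factor $8$.

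For a single term, I bound each scalar factor in turn using Lemma~\ref{lemma:bounded_activations}.
\begin{enumerate}[leftmargin=*, itemsep=0pt, parsep=0pt]
\item The indicator is at most $1$.
\item $|h_{\pm,l}[j]| \le \|\bm{h}_{\pm,l}\|_2 \le B_h$. This applies to both positive and negative samples, because the negative augmented input also satisfies $\|\bm z\|_2\le B_z$ under Assumption~\ref{assumption:boundedness}(i), which is stated for $\bm z^{(t)}$ with an arbitrary label $y^{(t)}$.
\item $g_{\pm,l}[j] = (h_{\pm,l}[j])^2 \in [0, B_h^2]$. Hence $|g_{\pm,l}[j]-T\mp 2| \le g_{\pm,l}[j] + T + 2 \le B_h^2+T+2$, using $T>0$.
\item $\|\tilde{\bm{h}}_{\pm,l-1}\|_2 = \sqrt{\|\bm{h}_{\pm,l-1}\|_2^2+1} \le \|\bm{h}_{\pm,l-1}\|_2 + 1 \le B_h+1$, again by Lemma~\ref{lemma:bounded_activations} applied at layer $l-1$. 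This covers $l-1=0$ as well, since $B_h \ge B_0 = B_z$.
\end{enumerate}
Multiplying these bounds gives $\frac{4}{M_l}(B_h^2+T+2)B_h(B_h+1)$ for each term, and summing the two terms yields the stated bound.

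The only delicate point is justifying that the closed form \eqref{eqn:gradients} is the true partial derivative with respect to $\bm{\theta}_{l,j}$. I would handle this by recomputing it. The per-neuron loss is $(g-T)^2 \mp 4(g-T)$, so its derivative with respect to $g$ is $2(g-T\mp 2)$. Next, $g=h^2$ gives $\partial g/\partial h = 2h$. Finally, $h=\mathrm{ReLU}(\bm{\theta}_{l,j}^\top \tilde{\bm h}_{l-1})$ gives $\partial h/\partial \bm\theta_{l,j} = \mathds{1}(h>0)\tilde{\bm h}_{l-1}$. Combining these with the $1/M_l$ normalization reproduces the factor $4/M_l$.

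At the ReLU kink, any subgradient choice in $[0,1]$ leaves the bound unchanged, and in any case $h=0$ there, so that term vanishes. I would also remark that the layer-local loss treats $\tilde{\bm h}_{l-1}$ as a constant input, which is the defining feature of FF. For this reason no chain-rule terms through earlier layers appear.
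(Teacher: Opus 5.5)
Your proposal is correct and follows essentially the same route as the paper. Both split the per-neuron gradient into its positive and negative branches, bound the factors $|g-T\mp 2|\le B_h^2+T+2$, $|h|\le B_h$, $\|\tilde{\bm h}_{l-1}\|_2\le B_h+1$ and the indicator by $1$ via Lemma~\ref{lemma:bounded_activations}, then combine the branches with the triangle inequality and the $1/M_l$ normalization. Your extra remarks on the negative sample's input bound, the $l-1=0$ case, and the ReLU kink only make explicit points the paper leaves implicit.
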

	
	\begin{lemma}[\textbf{Bounded Loss}]
		\label{lemma:bounded_loss}
		Under Assumption \ref{assumption:boundedness} (i.e., bounded input and model), there exists a constant $M > 0$ such that the loss is bounded as
		\begin{equation}
			\sup_{t \ge 1} \left| \mathcal{L}_{L}^{(t)}(\bm{\theta}_L^{(t)}) \right| \le M,
		\end{equation} 
		where $M = (B_h^2 + T + 2)^2 + 4(B_h^2 + T + 2).$
	\end{lemma}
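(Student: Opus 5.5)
The statement to prove is Lemma~\ref{lemma:bounded_loss}. I plan to bound each per-neuron term of the proposed quadratic loss in \eqref{eqn:proposed_loss} for layer $L$ uniformly, then average. Since $|\mathcal{S}_{\text{+}}| = |\mathcal{S}_{\text{-}}| = 1$, we have
\begin{equation*}
\mathcal{L}_L^{(t)} = \frac{1}{M_L}\sum_{j=1}^{M_L}\Bigl[(u_j^2 - 4u_j) + (v_j^2 + 4v_j)\Bigr],
\end{equation*}
where $u_j = g_{\text{+},L}[j] - T$ and $v_j = g_{\text{-},L}[j] - T$. It therefore suffices to bound $|u_j|$ and $|v_j|$ uniformly in $t$ and $j$.

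\textbf{Step 1 (bound the arguments).} Lemma~\ref{lemma:bounded_activations} gives $\|\bm{h}_L^{(t)}\|_2 \le B_h$ for both the positive and the negative augmented inputs. This is because Assumption~\ref{assumption:boundedness}(i) applies to any augmented $\bm z$, including negatives built from bounded labels. Hence each coordinate satisfies $0 \le g[j] = (h_L[j])^2 \le \|\bm h_L\|_2^2 \le B_h^2$. With $T>0$, this yields
\begin{equation*}
|u_j| \le B_h^2 + T \le B_h^2 + T + 2 =: K,
\end{equation*}
and similarly $|v_j| \le K$.

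\textbf{Step 2 (bound each quadratic term).} For $|w|\le K$ we have $|w^2 \pm 4w| \le K^2 + 4K$. A tighter bound is available by writing $w^2-4w=(w-2)^2-4$, but the crude triangle-inequality bound is enough and matches the constant $M$ in the statement. Summing the positive and negative contributions gives a per-neuron bound of $2(K^2+4K)$, and averaging over the $M_L$ neurons keeps this bound.

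\textbf{Step 3 (match the constant).} The naive sum gives $2(K^2+4K)$, which is a factor of $2$ larger than $M$. To recover exactly $M=K^2+4K$, I would refine the estimate using signs. Since $g\ge 0$, we have $u_j, v_j \in [-T, B_h^2 - T]$. The negative-sample term $v^2+4v$ is bounded above by $(B_h^2-T)^2+4(B_h^2-T)$ and below by $-4$. The positive term $u^2-4u$ lies in $[-4,\,T^2+4T]$ on this interval. I would check that the sum of the two terms stays within $[-M,M]$ with $K=B_h^2+T+2$. The slack from the $+2$ in $K$ is designed to absorb these cross terms, since $K^2$ expands to include $2B_h^2T$, $4B_h^2$, and similar.

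This constant bookkeeping is the only real obstacle. If the exact constant resists, I would note that $|\mathcal{L}_L^{(t)}|\le 2(K^2+4K)$ still proves the existence of a finite uniform $M$, which is all the lemma requires downstream. The supremum over $t$ then follows immediately, because none of the bounds depend on $t$.
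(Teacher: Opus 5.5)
Your Steps 1--2 are the paper's argument: get $0\le g\le B_h^2$ from Lemma~\ref{lemma:bounded_activations}, bound each branch by $(B_h^2+T)^2+4(B_h^2+T)\le K^2+4K$, add the two branches, and average over neurons. They correctly give $|\mathcal{L}_L^{(t)}|\le 2(K^2+4K)=2M$.

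The gap is Step~3, and it cannot be closed, because the stated constant $M=K^2+4K=(B_h^2+T+4)^2-4$ is not a valid bound. Your claim that $u^2-4u\in[-4,T^2+4T]$ on $[-T,B_h^2-T]$ already fails once $B_h^2>2T+4$, since at $u=B_h^2-T$ the value is $(B_h^2-T-2)^2-4$. More fundamentally, the two branches can be large simultaneously. If the output layer has one neuron with zero weight on the label coordinate, then $g_{\text{+},L}=g_{\text{-},L}=g$ and $\mathcal{L}_L^{(t)}=2(g-T)^2$. Since $g$ can be close to $B_h^2$, the loss can be of order $2B_h^4$. The ``$+2$'' slack in $K$ only adds $4(B_h^2+T)+12$ to the single-branch bound, so $M$ is only of order $B_h^4$.

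Concretely, take $L=1$, $M_1=1$, $T=9$, $\|\bm{x}\|_2=20$, labels in $[0,0.9]$, and $\bm{\theta}_{1,1}=[\bm{x}^\top,0,1]^\top/\sqrt{401}$, so $B_\theta=1$. Then $h_{\text{+},1}=h_{\text{-},1}=\sqrt{401}$ and $\mathcal{L}_1=2\cdot 392^2=307{,}328$. Meanwhile $B_z\le\sqrt{400.81}$ gives $B_h=B_z+1<21.03$ and $M<2.1\times 10^{5}$; any smaller admissible $B_h$ only shrinks $M$. If $y_{\text{+}}=p_1$, all candidate labels tie and Algorithm~\ref{alg:bler_predict} returns $p_1$. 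No update ever fires, so Assumption~\ref{assumption:boundedness} holds for every $t$.

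The paper's proof does not get around this either. After the per-branch bound, it simply asserts that adding the two branch bounds yields $M$. With $a=B_h^2+T$, that would need $2a^2+8a\le a^2+8a+12$, i.e.\ $a\le 2\sqrt{3}$, which is impossible already for $T=9$. So the factor of two you noticed is a real error in the stated constant, not unfinished bookkeeping.

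The correct conclusion is your fallback: the lemma holds with $M$ replaced by $2(K^2+4K)$, which is still a finite, $t$-independent bound. Theorem~\ref{thm:convergence} uses $M$ only through the Pinsker steps, so the sole downstream effect is that the constant in the distribution-shift term doubles. State the lemma with that constant instead of trying to force $M$ in Step~3.
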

	
	\begin{lemma}[\textbf{Smoothness}]
		\label{lemma:smoothness}
		Under Assumption \ref{assumption:boundedness} (i.e., bounded input and model), for each layer $l$, the loss function $\mathcal{L}_l^{(t)}(\bm{\theta})$ is $\rho_l$-smooth with 
		$
		\rho_l = \frac{8\bigl(3B_h^2 + T + 2\bigr)(B_h + 1)^2}{M_l}.
		$
		That is, for any $\bm{\theta}_l, \bm{\theta}_l' \in \mathbb{R}^{d_l}$, 
		$
		\|\nabla \mathcal{L}_l^{(t)}(\bm{\theta}_l) - \nabla \mathcal{L}_l^{(t)}(\bm{\theta}'_l)\|_2 \le \rho_l \, \|\bm{\theta}_l - \bm{\theta}'_l\|_2.
		$
	\end{lemma}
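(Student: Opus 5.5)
The plan is to exploit three structural facts. The layer-$l$ loss depends on $\bm{\theta}_l$ only through the pre-activations of layer $l$. The inputs $\tilde{\bm{h}}_{\text{+},l-1},\tilde{\bm{h}}_{\text{-},l-1}$ are fixed by the preceding layers, since the update is layer-local. And the loss is separable across the neurons $j$ of layer $l$. It therefore suffices to prove a Lipschitz bound for the gradient with respect to a single neuron block $\bm{\theta}_{l,j}$, and then combine blocks. Because neuron $j$'s term depends only on $\bm{\theta}_{l,j}$, the gradient is block-separable, and so
\[
\|\nabla\mathcal{L}_l(\bm{\theta}_l)-\nabla\mathcal{L}_l(\bm{\theta}'_l)\|_2^2=\sum_j\|\nabla_{\bm{\theta}_{l,j}}\mathcal{L}_l(\bm{\theta}_l)-\nabla_{\bm{\theta}_{l,j}}\mathcal{L}_l(\bm{\theta}'_l)\|_2^2 .
\]
A common per-block constant $\rho_l$ then transfers directly to the whole layer.

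For one neuron and the positive sample, write $u=\bm{\theta}_{l,j}^\top\tilde{\bm{h}}_{\text{+},l-1}$ and $h=\max(u,0)$. From \eqref{eqn:gradients}, the $j$-th row of the gradient is
\[
\frac{4}{M_l}\,\psi_{\text{+}}(u)\,\tilde{\bm{h}}_{\text{+},l-1}^\top,\qquad \psi_{\text{+}}(u):=\bigl(h^2-T-2\bigr)h .
\]
The negative sample gives the same expression with $c=T-2$ in place of $T+2$, i.e.\ $\psi_{\text{-}}(u):=(h^2-T+2)h$. The key observation concerns the ReLU nonsmoothness, which is the step I expect to be the main obstacle: the indicator $\mathds{1}(h>0)$ always multiplies $h$. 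Hence $\psi_\pm(u)=\mathrm{relu}(u)^3-c\,\mathrm{relu}(u)$ is continuous in $u$ and piecewise $C^1$, with derivative $0$ for $u<0$ and $3u^2-c$ for $u>0$. A continuous piecewise-$C^1$ function is Lipschitz with constant equal to the supremum of its derivative on the relevant interval. On the range where $|\mathrm{relu}(u)|\le B_h$, which follows from Lemma~\ref{lemma:bounded_activations}, this gives
\[
|\psi_\pm(u)-\psi_\pm(u')|\le (3B_h^2+|c|)\,|u-u'|\le(3B_h^2+T+2)\,|u-u'| ,
\]
using $|T\pm2|\le T+2$ for $T>0$. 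I would state explicitly that $\bm{\theta}_l,\bm{\theta}'_l$ range over the bounded parameter set of Assumption~\ref{assumption:boundedness}, or over the segment joining them, so that these activation bounds apply at both points. This is needed because $u^3$ is not globally Lipschitz, so the lemma's ``for any $\bm{\theta}_l,\bm{\theta}'_l\in\mathbb{R}^{d_l}$'' must be read on this set.

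Next, by Cauchy--Schwarz, $|u-u'|=|(\bm{\theta}_{l,j}-\bm{\theta}'_{l,j})^\top\tilde{\bm{h}}|\le\|\tilde{\bm{h}}\|_2\|\bm{\theta}_{l,j}-\bm{\theta}'_{l,j}\|_2$, and $\|\tilde{\bm{h}}\|_2\le B_h+1$. Multiplying by the outer factor $\|\tilde{\bm{h}}\|_2\le B_h+1$ gives, for each sample, a gradient difference of at most
\[
\frac{4}{M_l}(3B_h^2+T+2)(B_h+1)^2\|\bm{\theta}_{l,j}-\bm{\theta}'_{l,j}\|_2 .
\]
Adding the positive and negative contributions via the triangle inequality doubles the $4$ to $8$, which yields exactly $\rho_l$ per block. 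The block-separability identity from the first step then completes the proof.
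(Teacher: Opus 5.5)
Your proof is correct and gives exactly the paper's constant, but it takes a first-order route where the paper takes a second-order one.

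The paper works through the Hessian. On the active region, each per-neuron Hessian is a scalar multiple of $\tilde{\bm h}\tilde{\bm h}^\top$ with coefficient bounded by $12B_h^2+4T+8$. Block-diagonality then gives $\|\nabla^2\mathcal{L}_l\|_2\le\rho_l$. Because the Hessian does not exist where a pre-activation vanishes, the paper splits the segment from $\bm\theta_l$ to $\bm\theta_l'$ into finitely many pieces with a fixed activation pattern and integrates the Hessian on each piece.

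You instead place the ReLU kink inside the scalar map $\psi_\pm=\mathrm{relu}^3-c\,\mathrm{relu}$ and bound its Lipschitz constant by $3B_h^2+|c|$ directly. You then finish with Cauchy--Schwarz and the squared-norm decomposition over neuron blocks, which plays the role of the paper's block-diagonal spectral-norm step.

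What your version buys:
\begin{itemize}
\item It is shorter and needs no second derivatives.
\item It avoids partitioning the segment across all neurons' activation patterns.
\item It needs no special handling of a pre-activation that is identically zero along the segment, a case the paper dismisses as measure-zero.
\item It uses the correct $c=T-2$ for the negative branch. The paper's negative-branch Hessian formula writes $T+2$ there, which is harmless since $|T-2|\le T+2$ for $T>0$.
\end{itemize}
The paper's route is the standard Hessian-to-Lipschitz argument and makes the block-diagonal curvature structure explicit.

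Your caveat about the domain is well founded and applies equally to the paper's proof. Its Hessian bound uses $|p|\le B_h$ at every point of the segment. So the lemma really holds for $\bm\theta_l,\bm\theta_l'$ in the convex bounded parameter set of Assumption~\ref{assumption:boundedness}, not on all of $\mathbb{R}^{d_l}$ as stated, since the cubic term prevents a global Lipschitz bound.
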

	
	\noindent Next, we consider the descent lemma \cite[Lemma 1.2.3]{nesterov2013introductory}.
	\begin{lemma}[\textbf{Descent Lemma}]
		\label{lemma:descent}
		For any $\rho_l$-smooth function, we have
		\begin{equation}
			\mathcal{L}_l^{(t)}(\bm{\theta}'_l) \le \mathcal{L}_l^{(t)}(\bm{\theta}_l) + \nabla\mathcal{L}_l^{(t)}(\bm{\theta}_l)^\top (\bm{\theta}'_l-\bm{\theta}_l) + \frac{\rho_l}{2}\|\bm{\theta}'_l-\bm{\theta}_l\|_2^2.
		\end{equation}
	\end{lemma}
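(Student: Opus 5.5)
This is the classical descent lemma for $\rho_l$-smooth functions, and the paper cites it from \cite[Lemma 1.2.3]{nesterov2013introductory}. The plan is the standard integral-remainder argument: write the difference of function values as a line integral of the gradient, then use the Lipschitz property of the gradient from Lemma~\ref{lemma:smoothness} to bound the error term. The only prerequisite is that $\mathcal{L}_l^{(t)}$ is continuously differentiable along the segment joining $\bm{\theta}_l$ and $\bm{\theta}'_l$.

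Concretely, set $\bm{d} = \bm{\theta}'_l - \bm{\theta}_l$ and $\phi(s) = \mathcal{L}_l^{(t)}(\bm{\theta}_l + s\bm{d})$ for $s\in[0,1]$. By the fundamental theorem of calculus,
\begin{align}
\mathcal{L}_l^{(t)}(\bm{\theta}'_l) - \mathcal{L}_l^{(t)}(\bm{\theta}_l) &= \int_0^1 \nabla\mathcal{L}_l^{(t)}(\bm{\theta}_l + s\bm{d})^\top \bm{d}\, ds \notag\\
&= \nabla\mathcal{L}_l^{(t)}(\bm{\theta}_l)^\top \bm{d} + \int_0^1 \bigl(\nabla\mathcal{L}_l^{(t)}(\bm{\theta}_l + s\bm{d}) - \nabla\mathcal{L}_l^{(t)}(\bm{\theta}_l)\bigr)^\top \bm{d}\, ds.
\end{align}
Apply Cauchy--Schwarz to the remaining integral, then use the $\rho_l$-Lipschitz bound on the gradient. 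This bounds the integrand by $\rho_l s\|\bm{d}\|_2^2$, and integrating gives $\frac{\rho_l}{2}\|\bm{d}\|_2^2$, which is exactly the claimed inequality.

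The main obstacle is regularity rather than the computation. The loss is built from ReLU networks, so $\nabla\mathcal{L}_l^{(t)}$ can fail to exist on measure-zero kinks. I would handle this in two steps.

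\begin{itemize}
\item \textbf{Absolute continuity.} Lemma~\ref{lemma:smoothness} asserts a global Lipschitz gradient bound, which I would take as given. Failing that, $\phi$ is locally Lipschitz, hence absolutely continuous, so the fundamental theorem of calculus still holds with the a.e.\ derivative.
\item \textbf{Lipschitz estimate.} The Lipschitz estimate is then applied almost everywhere along the segment, which suffices for the integral bound.
\end{itemize}

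A second caveat is that the smoothness constant was derived under the bounded-parameter regime of Assumption~\ref{assumption:boundedness}. The segment must therefore stay in that regime, and convexity of the norm-ball constraint in Assumption~\ref{assumption:boundedness} ensures that it does.
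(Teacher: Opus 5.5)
Your proposal is correct and is the standard integral-remainder proof of \cite[Lemma 1.2.3]{nesterov2013introductory}, which is exactly what the paper relies on: it states the lemma with that citation and gives no proof of its own. Two remarks on your caveats. Your regularity worry is unnecessary, since $\rho_l$-smoothness already presupposes a Lipschitz (hence continuous) gradient, and here $\max(0,p)^2$ is $C^1$ anyway. Your domain caveat, by contrast, is a correct and useful refinement: the layer loss is quartic in $\bm{\theta}_l$, so the constant $\rho_l$ of Lemma~\ref{lemma:smoothness} only holds on the product of per-neuron balls from Assumption~\ref{assumption:boundedness}, and the convexity of that set is what keeps the segment from $\bm{\theta}_L^{(t)}$ to $\bm{\theta}_L^{(t+1)}$ where the bound is valid.
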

	
	\noindent Finally, to handle the training-test distribution mismatch, we leverage Pinsker's inequality.
	\begin{lemma}[\textbf{Pinsker's Inequality} \cite{pinsker1964information}]
		\label{lemma:pinsker}
		Let $P$ and $Q$ be two probability distributions on a measurable space. For any measurable function $f$ with $\|f\|_\infty = \sup |f| \le M$, we have
		\begin{equation}
			|\mathbb{E}_P[f] - \mathbb{E}_Q[f]| \le M \sqrt{\frac{1}{2} D_{\mathrm{KL}}(P\|Q)},
		\end{equation}
		    where $D_{\mathrm{KL}}(P\|Q)$ denotes the Kullback--Leibler divergence of $P$ with respect to $Q$.
	\end{lemma}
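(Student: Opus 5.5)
The plan is to reduce the statement to the classical total-variation form of Pinsker's inequality. I will bound $|\mathbb{E}_P[f]-\mathbb{E}_Q[f]|$ by the total variation distance $\mathrm{TV}(P,Q)=\sup_A|P(A)-Q(A)|$ times the oscillation of $f$. Then I will bound $\mathrm{TV}(P,Q)\le\sqrt{\tfrac12 D_{\mathrm{KL}}(P\|Q)}$.

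\textbf{Step 1 (reduction to total variation).} Let $\mu=P+Q$ and write $p=dP/d\mu$, $q=dQ/d\mu$. For any constant $c$,
\begin{equation*}
\mathbb{E}_P[f]-\mathbb{E}_Q[f]=\int (f-c)(p-q)\,d\mu ,
\end{equation*}
because $\int(p-q)\,d\mu=0$. Taking $c$ to be the midpoint of the range of $f$ gives
\begin{equation*}
|\mathbb{E}_P[f]-\mathbb{E}_Q[f]|\le \tfrac12\,\mathrm{osc}(f)\int|p-q|\,d\mu=\mathrm{osc}(f)\,\mathrm{TV}(P,Q).
\end{equation*}
If $f$ takes values in an interval of length $M$ (for instance $0\le f\le M$), this is at most $M\,\mathrm{TV}(P,Q)$. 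Under only $\|f\|_\infty\le M$, the oscillation can be $2M$. In that case the bound becomes $2M\,\mathrm{TV}(P,Q)$.

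A Bernoulli$(\tfrac12+\varepsilon)$ versus Bernoulli$(\tfrac12)$ example with $f=\pm M$ shows the factor $2$ cannot be dropped in general. There the left side is $2M\varepsilon$, while $M\sqrt{D/2}\approx M\varepsilon$. So I would state the result with $M$ interpreted as the range length of $f$, or with the constant $2M$. For the later use with the bounded loss of Lemma~\ref{lemma:bounded_loss}, this only changes a constant.

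\textbf{Step 2 (Pinsker for TV).} Let $A=\{p>q\}$, so $\mathrm{TV}(P,Q)=P(A)-Q(A)$. By the data-processing inequality for KL divergence applied to the map $\mathbb{1}_A$,
\begin{equation*}
D_{\mathrm{KL}}(P\|Q)\ge d(a\|b),
\end{equation*}
where $a=P(A)$, $b=Q(A)$ and $d(\cdot\|\cdot)$ is the binary KL divergence. It then remains to prove the two-point inequality $d(a\|b)\ge 2(a-b)^2$.

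\textbf{Step 3 (binary inequality; the main obstacle).} This is the only genuinely analytic step. Fix $a$ and set $\phi(b)=d(a\|b)-2(a-b)^2$ for $b\in(0,1)$. Then $\phi(a)=0$, and
\begin{equation*}
\phi'(b)=(b-a)\Bigl(\frac{1}{b(1-b)}-4\Bigr).
\end{equation*}
Since $b(1-b)\le\tfrac14$, $\phi'$ has the sign of $b-a$. Hence $\phi$ attains its minimum at $b=a$, so $\phi\ge0$. The endpoint cases $b\in\{0,1\}$ are handled by convention: there $d(a\|b)=\infty$ unless $a=b$. Combining Steps 1–3 yields the stated bound, with the range-length caveat above.
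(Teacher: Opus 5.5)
Your proof is correct, and the caveat you raise is a genuine error in the lemma as stated, not a weakness of your argument. The paper gives no proof of this lemma; it only cites \cite{pinsker1964information}. The standard form of that inequality is the total-variation bound $\sup_A |P(A)-Q(A)| \le \sqrt{\tfrac12 D_{\mathrm{KL}}(P\|Q)}$. Your Steps~2--3 reproduce it by the textbook route: data processing onto $\{p>q\}$, then the binary inequality $d(a\|b)\ge 2(a-b)^2$ via the sign of $\phi'$.

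Passing from sets to bounded functions costs $\mathrm{osc}(f)=\sup f-\inf f$, not $\|f\|_\infty$, exactly as in your Step~1. So under the hypothesis $\|f\|_\infty\le M$, the correct conclusion is $|\mathbb{E}_P[f]-\mathbb{E}_Q[f]|\le 2M\sqrt{\tfrac12 D_{\mathrm{KL}}(P\|Q)}=M\sqrt{2D_{\mathrm{KL}}(P\|Q)}$. The stated constant holds only when $f$ takes values in an interval of length $M$.

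Your Bernoulli counterexample is valid, and it need not be asymptotic. For $P=\mathrm{Ber}(\tfrac12+\varepsilon)$ and $Q=\mathrm{Ber}(\tfrac12)$ one has $D_{\mathrm{KL}}(P\|Q)\le\chi^2(P\|Q)=4\varepsilon^2$. Hence the claimed right-hand side is at most $\sqrt{2}\,M\varepsilon$, strictly below the left-hand side $2M\varepsilon$ for every $\varepsilon\in(0,\tfrac12)$.

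Compared with the bare citation, your self-contained route shows exactly where the factor of $2$ enters. For the downstream use, the range-length reading does not rescue the paper's constant. Lemma~\ref{lemma:bounded_loss} only bounds $|\mathcal{L}_L^{(t)}|$, and the loss genuinely takes negative values (e.g.\ $(g-T)^2-4(g-T)<0$ for $0<g-T<4$). So the Pinsker terms in Theorem~\ref{thm:convergence} should carry $2M$ in place of $M$; as you say, this changes only a constant.

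Two small polish points:
\begin{enumerate}
\item The data-processing step can be made self-contained by applying the log-sum inequality on $A$ and $A^c$.
\item The cases $a\in\{0,1\}$ in Step~3 are covered by the same derivative computation on $b\in(0,1)$, with the convention $0\ln 0=0$.
\end{enumerate}
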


	\subsection{Convergence Theorem}
	The proof follows the following key steps. 1) \textbf{Local decrease}: When a large error occurs (i.e., $I^{(t)}_\delta = 1$), the gradient step reduces the loss by at least $\frac{\alpha_f}{2}\|\nabla\mathcal{L}_L^{(t)}\|^2$. 2) \textbf{Gradient lower bound}: Assumption \ref{assumption:gradient_lower_bound} guarantees that whenever the error is large, the expected squared gradient norm is at least $\gamma_2(\delta) > 0$. Thus, each large error produces an expected decrease of at least $\frac{\alpha_f\gamma_2(\delta)}{2}$. 3) \textbf{Telescoping sum}: Summing these expected decreases over time and using stationarity of the data distribution yields the upper bound $\frac{\alpha_f\gamma_2(\delta)}{2}\sum_{t=1}^N \mathbb{E}[I^{(t)}_\delta] \le \mathbb{E}[\mathcal{L}_L^{(1)}(\bm{\theta}_L^{(1)})] - \mathcal{L}_L^*$, where $\mathcal{L}_L^* = \inf_{\bm{\theta}} \mathbb{E}_{\mathcal{D}_2}[\mathcal{L}_L^{(t)}(\bm{\theta})]$. 4) \textbf{Consequences}: Rearranging gives $\frac{1}{N}\sum_{t=1}^N \Pr(e^{(t)}\ge\delta) \le O(1/N)$ when the offline and online distributions are identical. 5) \textbf{Distribution shift}: When $\mathcal{D}_1 \neq \mathcal{D}_2$, Pinsker's inequality introduces an extra term $\sqrt{2 D_{\mathrm{KL}}(\mathcal{D}_2\|\mathcal{D}_1)/N}$.
	
	\begin{theorem}[\textbf{Convergence under Distribution Shift}]
		\label{thm:convergence}
		Suppose Assumptions \ref{assumption:ReLUs}--\ref{assumption:gradient_lower_bound} hold. For a fixed error tolerance $\delta > 0$ and a learning rate $\alpha_f \in (0, 1/\rho_L)$, \name{} with SGD satisfies for any $N \ge 1$:
		\begin{align}
			\frac{1}{N}\sum_{t=1}^N \Pr_{\hspace{5pt}\mathcal{D}_2} \bigl(e^{(t)} \ge \delta\bigr) &\le \frac{2 \left[ \mathbb{E}_{\mathcal{D}_1}[\mathcal{L}_L^{(1)}(\bm{\theta}_L^{(1)})] - \mathcal{L}_L^* \right]}{\alpha_f \gamma_2(\delta) N} \notag \\
			&\quad + \frac{2M}{\alpha_f \gamma_2(\delta)} \sqrt{\frac{2 D_{\mathrm{KL}}(\mathcal{D}_2 \| \mathcal{D}_1)}{N}},
		\end{align}
		where $\mathcal{L}_L^* = \inf_{\bm{\theta}} \mathbb{E}_{\mathcal{D}_2}[\mathcal{L}_L^{(t)}(\bm{\theta})]$ is the minimum achievable expected loss under $\mathcal{D}_2$, the constant $M$ satisfies $\sup_{t \ge 1} \lvert \mathcal{L}_{L}^{(t)}(\bm{\theta}_L^{(t)}) \rvert \le M$, and $\gamma_2(\delta)$ is the gradient lower bound from Assumption \ref{assumption:gradient_lower_bound}.
	\end{theorem}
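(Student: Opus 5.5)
The plan follows the five-step outline given just before the statement, working throughout with the last-layer loss $\mathcal{L}_L^{(t)}$. Under SGD the update is $\bm{\theta}_L^{(t+1)} = \bm{\theta}_L^{(t)} - \alpha_f I^{(t)}_\delta \nabla_{\bm{\theta}_L}\mathcal{L}_L^{(t)}(\bm{\theta}_L^{(t)})$.

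\textbf{Step 1 (per-step descent).} Lemma~\ref{lemma:smoothness} gives that $\mathcal{L}_L^{(t)}$ is $\rho_L$-smooth. Applying the descent lemma (Lemma~\ref{lemma:descent}) with $\bm{\theta}'_L = \bm{\theta}_L^{(t+1)}$ yields
\begin{equation*}
\mathcal{L}_L^{(t)}(\bm{\theta}_L^{(t+1)}) \le \mathcal{L}_L^{(t)}(\bm{\theta}_L^{(t)}) - I^{(t)}_\delta\,\alpha_f\Bigl(1-\tfrac{\rho_L\alpha_f}{2}\Bigr)\|\nabla\mathcal{L}_L^{(t)}\|_2^2 .
\end{equation*}
Since $\alpha_f<1/\rho_L$, the factor $1-\rho_L\alpha_f/2$ is at least $1/2$, so the decrease is at least $\frac{\alpha_f}{2}I^{(t)}_\delta\|\nabla\mathcal{L}_L^{(t)}\|_2^2$.

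\textbf{Step 2 (gradient lower bound).} Take the conditional expectation over the negative label given $\mathcal{F}_t$. On the event $\{I^{(t)}_\delta=1\}$, Assumption~\ref{assumption:gradient_lower_bound} gives an expected decrease of at least $\frac{\alpha_f\gamma_2(\delta)}{2}$. By the tower property this becomes
\begin{equation*}
\tfrac{\alpha_f\gamma_2(\delta)}{2}\Pr(e^{(t)}\ge\delta) \le \mathbb{E}[\mathcal{L}_L^{(t)}(\bm{\theta}_L^{(t)})] - \mathbb{E}[\mathcal{L}_L^{(t)}(\bm{\theta}_L^{(t+1)})].
\end{equation*}

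\textbf{Step 3 (telescoping).} To telescope, $\mathbb{E}[\mathcal{L}_L^{(t)}(\bm{\theta}_L^{(t+1)})]$ must be related to $\mathbb{E}[\mathcal{L}_L^{(t+1)}(\bm{\theta}_L^{(t+1)})]$. By Assumption~\ref{assumption:distributions}, $\bm{z}^{(t+1)}$ is i.i.d.\ under $\mathcal{D}_2$ and independent of $\bm{\theta}^{(t+1)}$ given $\mathcal{F}_t$. So, for a fixed parameter, both quantities equal the population loss $\bar{\mathcal{L}}(\bm{\theta}) := \mathbb{E}_{\mathcal{D}_2}[\mathcal{L}_L(\bm{\theta})]$.

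\textbf{Main obstacle.} $\bm{\theta}^{(t+1)}$ depends on $\bm{z}^{(t)}$, so $\mathcal{L}_L^{(t)}(\bm{\theta}^{(t+1)})$ is not an unbiased estimate of $\bar{\mathcal{L}}(\bm{\theta}^{(t+1)})$. I would first try to phrase the descent directly on $\bar{\mathcal{L}}$, i.e.\ treat the step as a stochastic gradient step on $\bar{\mathcal{L}}$ with lower bound $\gamma_2$, and absorb the coupling into the stationarity argument the paper invokes. Failing that, I would bound the discrepancy by $2M$ via Lemma~\ref{lemma:bounded_loss} together with an $O(\alpha_f)$ smoothness correction. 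Summing from $t=1$ to $N$ and using $\bar{\mathcal{L}}(\bm{\theta}^{(N+1)})\ge\mathcal{L}_L^*$ gives
\begin{equation*}
\tfrac{\alpha_f\gamma_2(\delta)}{2}\sum_{t=1}^N\Pr(e^{(t)}\ge\delta) \le \mathbb{E}_{\mathcal{D}_2}[\mathcal{L}_L^{(1)}(\bm{\theta}_L^{(1)})]-\mathcal{L}_L^*.
\end{equation*}

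\textbf{Step 4 (distribution shift).} The initial expectation is over the online distribution, but the offline model is characterized under $\mathcal{D}_1$. Since $|\mathcal{L}_L^{(1)}|\le M$ by Lemma~\ref{lemma:bounded_loss}, Pinsker's inequality (Lemma~\ref{lemma:pinsker}) gives
\begin{equation*}
\mathbb{E}_{\mathcal{D}_2}[\mathcal{L}_L^{(1)}] \le \mathbb{E}_{\mathcal{D}_1}[\mathcal{L}_L^{(1)}] + M\sqrt{\tfrac12 D_{\mathrm{KL}}(\mathcal{D}_2\|\mathcal{D}_1)}.
\end{equation*}
Dividing by $N\alpha_f\gamma_2(\delta)/2$ gives the first term of the theorem exactly.

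The second term as stated carries $\sqrt{2D_{\mathrm{KL}}/N}$ rather than $\sqrt{D_{\mathrm{KL}}/2}/N$. The plan is to apply Pinsker to the product law of the $N$ samples, which gives $D_{\mathrm{KL}}(\mathcal{D}_2^{N}\|\mathcal{D}_1^{N}) = N D_{\mathrm{KL}}$. Then use $|\cdot|\le 2M$ for a loss difference and normalize by $N$, which produces $\frac{2M}{N}\sqrt{N D_{\mathrm{KL}}/2}\le 2M\sqrt{2D_{\mathrm{KL}}/N}$, matching the statement. Since this form is weaker than the direct bound, it suffices to establish the direct bound and relax it to the stated form.
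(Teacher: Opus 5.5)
Your Steps 1, 2 and 4 use the same ingredients as the paper's proof: the descent lemma with $\alpha_f<1/\rho_L$, Assumption~\ref{assumption:gradient_lower_bound} after conditioning on $\mathcal{F}_t$, and Pinsker for the shift. The argument is not complete, though. Step 3 is exactly where it breaks, and neither remedy you list repairs it.

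Viewing the update as SGD on $\bar{\mathcal{L}}$ fails because the descent lemma for $\bar{\mathcal{L}}$ produces the cross term $\nabla\bar{\mathcal{L}}(\bm{\theta}_L^{(t)})^\top\mathbb{E}\bigl[I^{(t)}_\delta\nabla\mathcal{L}_L^{(t)}(\bm{\theta}_L^{(t)})\mid\bm{\theta}_L^{(t)}\bigr]$. Assumption~\ref{assumption:gradient_lower_bound} says nothing about this term. It only lower-bounds the second moment of the per-sample gradient, and the variance part of that second moment enters the descent lemma with the wrong sign, through $\frac{\rho_L\alpha_f^2}{2}\|\cdot\|_2^2$.

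Bounding the discrepancy fails too. The quantity $\mathbb{E}[\mathcal{L}_L^{(t+1)}(\bm{\theta}_L^{(t+1)})]-\mathbb{E}[\mathcal{L}_L^{(t)}(\bm{\theta}_L^{(t+1)})]$ compares a fresh sample with the one just fitted. To first order it equals $\alpha_f\,\mathbb{E}\bigl[I^{(t)}_\delta(\nabla\mathcal{L}_L^{(t)}-\nabla\bar{\mathcal{L}})^\top\nabla\mathcal{L}_L^{(t)}\bigr]$, with gradients at $\bm{\theta}_L^{(t)}$. That is the same order as the gain $\frac{\alpha_f}{2}I^{(t)}_\delta\|\nabla\mathcal{L}_L^{(t)}\|_2^2$ you want to bank, and $2M$ is far larger. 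Summed over $t\le N$ this costs order $N$, so after dividing by $N$ you are left with a non-vanishing constant, not $O(1/N)$. In addition, \name{} updates all layers simultaneously, so $\mathcal{L}_L^{(t+1)}$ is evaluated on new inputs $\tilde{\bm{h}}_{L-1}$. Tracking $\bm{\theta}_L$ alone ignores this further mismatch.

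In fact no argument can close Step 3 under Assumptions~\ref{assumption:ReLUs}--\ref{assumption:gradient_lower_bound}, because the stated bound fails under irreducible label noise. Take one ReLU neuron ($L=1$, $M_1=1$), $T=9$, a constant feature $x=1$, and $\mathcal{Y}=\{0,1\}$ with $y_{\text{+}}$ uniform and independent of $x$. Let $\mathcal{D}_1=\mathcal{D}_2$, $e=|\hat y-y_{\text{+}}|$ and $\delta\in(0,1]$.
\begin{itemize}
\item Since $\hat y^{(t)}$ is a function of the past, $\Pr(e^{(t)}\ge\delta)=\frac12$ for every $t$, while the right-hand side is $O(1/N)$.
\item Suppose the pre-activations of $(x,0)$ and $(x,1)$ start near $\sqrt{T}=3$ and $\alpha_f$ is small. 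Then every update has gradient close to $(0,\pm 24,0)$ in the coordinates $(w_1,w_2,b)$.
\item Consequently $w_2$ oscillates around $0$, $w_1+b$ is pulled back to $3$, and $w_1-b$ never moves.
\item So Assumption~\ref{assumption:boundedness} holds, and Assumption~\ref{assumption:gradient_lower_bound} holds with $\gamma_2(\delta)\approx 576$.
\end{itemize}

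The paper's proof does not supply the missing idea either. Its Step 4 asserts that the independence/Pinsker bound ``holds for $\bm{\theta}_L^{(t+1)}$'', although $\bm{\theta}_L^{(t+1)}$ depends on $\bm{z}_{\text{+}}^{(t)}$; this is precisely the coupling you flagged. It then substitutes upper bounds into both sides of an inequality, which is legitimate only on the right. What is missing is an extra hypothesis that makes the error-triggered step a descent step for the population loss, which would in particular exclude label noise.

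Given such a hypothesis, your Step 4 works and is cleaner than the paper's per-step use: a single Pinsker application at $t=1$ yields the stated bound, since $\sqrt{D_{\mathrm{KL}}/2}/N\le\sqrt{2D_{\mathrm{KL}}/N}$. The product-measure detour is unnecessary.
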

	
	\begin{corollary}[Asymptotic Convergence of Average Error Probability]
		\label{corollary:asymptotic_average}
		Under the conditions of Theorem \ref{thm:convergence}, for any fixed error tolerance $\delta > 0$, the average probability of a significant prediction error satisfies
		\begin{equation}
			\lim_{N \to \infty} \frac{1}{N} \sum_{t=1}^N \Pr_{\hspace{5pt}\mathcal{D}_2} \bigl(e^{(t)} \ge \delta\bigr) = 0.
		\end{equation}
	\end{corollary}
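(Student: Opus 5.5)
The statement to prove is Corollary~\ref{corollary:asymptotic_average}. I would derive it directly from Theorem~\ref{thm:convergence} by letting $N \to \infty$ in its finite-time bound and sandwiching the left-hand side between $0$ and a quantity that vanishes.

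First, I check that every quantity on the right-hand side of the theorem, other than $N$, is a finite constant.

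\textbf{Step size and gradient constant.} The learning rate $\alpha_f \in (0,1/\rho_L)$ is fixed and positive. By Assumption~\ref{assumption:gradient_lower_bound}, $\gamma_2(\delta) > 0$ for the fixed $\delta$, so the prefactors $2/(\alpha_f\gamma_2(\delta))$ and $2M/(\alpha_f\gamma_2(\delta))$ are finite.

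\textbf{Loss bound.} Lemma~\ref{lemma:bounded_loss} gives a finite $M$ with $|\mathcal{L}_L^{(t)}(\bm{\theta}_L^{(t)})|\le M$.

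\textbf{Loss gap.} The same uniform bound controls $\mathbb{E}_{\mathcal{D}_1}[\mathcal{L}_L^{(1)}(\bm{\theta}_L^{(1)})]-\mathcal{L}_L^*$. The first term is at most $M$. The infimum is taken over parameters obeying Assumption~\ref{assumption:boundedness}, so $\mathcal{L}_L^* \ge -M$. Hence the gap is at most $2M$.

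\textbf{KL divergence.} The one ingredient not automatically guaranteed is $D_{\mathrm{KL}}(\mathcal{D}_2\|\mathcal{D}_1) < \infty$. This is the main obstacle. I would handle it as follows.
\begin{itemize}
\item If the divergence is finite, which is implicit in the theorem's bound being meaningful, the argument goes through.
\item If it is infinite, the theorem's bound is vacuous. I would then make finiteness an explicit standing condition, namely $\mathcal{D}_2 \ll \mathcal{D}_1$ with finite divergence, and read it into ``the conditions of Theorem~\ref{thm:convergence}''.
\end{itemize}

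With all constants finite, the right-hand side has the form $C_1/N + C_2/\sqrt{N}$ with $C_1, C_2 < \infty$ independent of $N$. This tends to $0$ as $N\to\infty$.

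The left-hand side is an average of probabilities, so it is nonnegative for every $N$. It is therefore sandwiched between $0$ and a sequence converging to $0$. The squeeze theorem then gives that the limit exists and equals $0$, which is the claim.

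As a remark, this also yields a rate: the average error frequency is $O(1/\sqrt{N})$ under distribution shift, and $O(1/N)$ when $D_{\mathrm{KL}}(\mathcal{D}_2\|\mathcal{D}_1)=0$.
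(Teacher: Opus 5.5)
Your proposal is correct and follows the paper's proof: you write the theorem's bound as $A/N + B/\sqrt{N}$ with $N$-independent constants and conclude by the squeeze theorem. Your extra finiteness checks are sound, and the paper tacitly assumes $D_{\mathrm{KL}}(\mathcal{D}_2\|\mathcal{D}_1)<\infty$ as you note. One justification should change: the paper takes the infimum defining $\mathcal{L}_L^*$ over all $\bm{\theta}$, not only bounded ones, so argue $\mathcal{L}_L^*>-\infty$ instead from each branch of the quadratic loss equalling $(g-T\mp 2)^2-4\ge -4$.
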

	\begin{corollary}[\textbf{Pointwise Convergence under Identical Distributions}]
		\label{corollary:pointwise_identical}
		If $\mathcal{D}_1 = \mathcal{D}_2$, under the conditions of Theorem \ref{thm:convergence}, we have
		\begin{equation}
			\lim_{t\to\infty} \Pr\bigl(e^{(t)} \ge \delta\bigr) = 0.
		\end{equation}
	\end{corollary}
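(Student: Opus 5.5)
The statement just above is Corollary~\ref{corollary:pointwise_identical}: if $\mathcal{D}_1=\mathcal{D}_2$, then $\Pr(e^{(t)}\ge\delta)\to 0$. The Ces\`aro bound of Theorem~\ref{thm:convergence} is not enough here, because an averaged sequence tending to zero says nothing about its individual terms. My plan is therefore to strengthen the telescoping step behind the theorem into summability. Once the KL term vanishes, I will show $\sum_{t=1}^\infty \Pr(e^{(t)}\ge\delta)<\infty$, and conclude that the terms tend to zero.

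\textbf{Step 1 (per-step decrease).} I apply the descent lemma (Lemma~\ref{lemma:descent}) with $\rho_L$ from Lemma~\ref{lemma:smoothness} to the SGD step $\bm\theta_L^{(t+1)}=\bm\theta_L^{(t)}-\alpha_f I^{(t)}_\delta\nabla\mathcal{L}_L^{(t)}$. Since $\alpha_f<1/\rho_L$, this gives
\begin{equation*}
\mathcal{L}_L^{(t)}(\bm\theta_L^{(t+1)})\le \mathcal{L}_L^{(t)}(\bm\theta_L^{(t)})-\frac{\alpha_f}{2}I^{(t)}_\delta\|\nabla\mathcal{L}_L^{(t)}\|_2^2 .
\end{equation*}
Next I condition on $\mathcal{F}_t$ and the event $e^{(t)}\ge\delta$, and average over $y_{\text{-}}^{(t)}$. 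Assumption~\ref{assumption:gradient_lower_bound} then lower-bounds the expected decrease by $\tfrac{\alpha_f\gamma_2(\delta)}{2}I^{(t)}_\delta$.

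\textbf{Step 2 (telescoping under stationarity).} Define $\Phi(\bm\theta)=\mathbb{E}_{\mathcal{D}_2}[\mathcal{L}_L(\bm\theta;\bm z)]$. The sample $\bm z^{(t+1)}$ is i.i.d.\ and independent of $\bm\theta^{(t+1)}$, which is $\mathcal{F}_t$-measurable together with the negative draws. Hence $\mathbb{E}[\mathcal{L}_L^{(t+1)}(\bm\theta_L^{(t+1)})]=\mathbb{E}[\Phi(\bm\theta_L^{(t+1)})]$, and the same identity links $\mathcal{L}_L^{(t)}(\bm\theta_L^{(t+1)})$ to $\Phi(\bm\theta_L^{(t+1)})$ up to the correlation between $\bm z^{(t)}$ and $\bm\theta^{(t+1)}$. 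I will handle that correlation exactly as the theorem's proof does, by taking its stationarity step as given. Summing Step~1 over $t=1,\dots,N$ then yields
\begin{equation*}
\frac{\alpha_f\gamma_2(\delta)}{2}\sum_{t=1}^N\Pr(e^{(t)}\ge\delta)\le \mathbb{E}[\Phi(\bm\theta_L^{(1)})]-\mathcal{L}_L^*.
\end{equation*}
When $\mathcal{D}_1=\mathcal{D}_2$, the right-hand side equals $\mathbb{E}_{\mathcal{D}_1}[\mathcal{L}_L^{(1)}(\bm\theta_L^{(1)})]-\mathcal{L}_L^*$. This is finite by Lemma~\ref{lemma:bounded_loss}, being at most $2M$, and it does not depend on $N$. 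Equivalently, I can read this off from Theorem~\ref{thm:convergence} after multiplying both sides by $N$, since the KL term is then zero.

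\textbf{Step 3 (conclusion).} Letting $N\to\infty$, the partial sums of the nonnegative series $\sum_t\Pr(e^{(t)}\ge\delta)$ are uniformly bounded, so the series converges and its terms tend to $0$. As a bonus, Borel--Cantelli gives that $e^{(t)}\ge\delta$ occurs only finitely often almost surely.

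\textbf{Main obstacle.} The delicate point is Step~2: I need the right-hand side to be bounded uniformly in $N$, not merely $O(\sqrt N)$. Two things threaten this. First, the dependence of $\bm\theta^{(t+1)}$ on $\bm z^{(t)}$ could leave a nonvanishing gap between $\mathcal{L}_L^{(t)}(\bm\theta_L^{(t+1)})$ and $\Phi(\bm\theta_L^{(t+1)})$. Second, if I used the theorem as a black box with the KL term present, that term would grow like $\sqrt N$ after multiplying by $N$. The identical-distribution hypothesis removes the KL term entirely. I will also rely on the theorem's stationarity argument, which bounds the summed expected decreases by the initial gap, and I will verify explicitly that this argument yields no $N$-dependent remainder.
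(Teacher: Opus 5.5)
Your argument is correct and matches the paper's proof: with $D_{\mathrm{KL}}(\mathcal{D}_2\|\mathcal{D}_1)=0$, multiplying the bound of Theorem~\ref{thm:convergence} by $N$ gives $\sum_{t=1}^N \Pr(e^{(t)}\ge\delta)\le A$ uniformly in $N$, so the nonnegative series converges and its terms tend to zero. Your concern about $\bm{\theta}_L^{(t+1)}$ depending on $\bm{z}^{(t)}$ is legitimate, but it is a gap inside the theorem's own proof (its Step~4 carries the independence argument from $\bm{\theta}_L^{(t)}$ over to $\bm{\theta}_L^{(t+1)}$ without justification), not in the corollary, which the paper deduces purely from the theorem's bound.
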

	
	\section{Applications of \name{}}
	\label{sec:app}
To demonstrate the efficacy of \name{}, we instantiate it within a 6G link adaptation scenario, where the UE maintains a short-term BLER predictive model and \name{} continuously refines it using delayed ACK/NACK feedback. 
	
	\subsection{BLER Prediction}
	\label{sec:bler_pred}
	We consider the downlink short-term BLER prediction problem. In this context, ``short-term'' refers to a prediction horizon spanning several PDSCH transmission slots corresponding to the periodicity of the CSI reports. Such short-term BLER prediction at the UE side is essential in  cellular systems (e.g., 5G/6G) for enabling accurate and timely CSI reporting.

 The CSI reference signal (CSI-RS) period denotes the interval between consecutive downlink reference signals transmitted by the next-generation NodeB (gNB) for channel acquisition. Based on these estimates, the UE subsequently generates and transmits CSI reports back to the gNB. CSI-RS transmissions can be configured as either periodic or aperiodic. In periodic mode, reference signals are transmitted at regular intervals. In the aperiodic mode, however, the interval between reference signals is variable.
	
	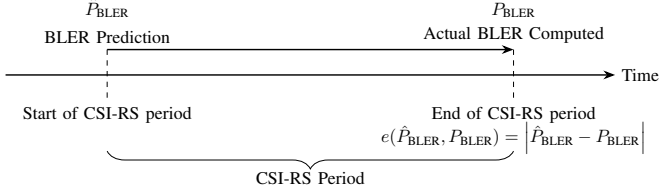
\begin{figure}
		\centering
		\resizebox{\columnwidth}{!}{%
			\begin{tikzpicture}[>=Stealth, node distance=2.5cm]
				\draw[->, thick] (0,0) -- (12,0) node[right] {Time};
				\draw[dashed] (2,0.5) -- (2,-0.5) node[below] {Start of CSI-RS period};
				\draw[dashed] (10,0.5) -- (10,-0.5) node[below] {End of CSI-RS period};
				\node[above] at (2,0.5) {BLER Prediction};
				\node[above=1cm] at (2,0.01) {\( \hat{P}_{\text{BLER}} \)};
				\node[above] at (10,0.5) {Actual BLER Computed};
				\node[above=1cm] at (10,0.01) {\( P_{\text{BLER}} \)};
				\draw[decorate,decoration={brace,mirror,amplitude=10pt}] (2,-1.5) -- (10,-1.5) node[midway,below=8pt]{CSI-RS Period};
				\draw[->, thick] (2,0.5) -- (10,0.5);
				\node[below=0.8cm] at (10,0) {\( e(\hat{P}_{\text{BLER}}, P_{\text{BLER}}) = \left| \hat{P}_{\text{BLER}} - P_{\text{BLER}} \right| \)};
			\end{tikzpicture}
		}
		\captionsetup{font=footnotesize}
		\caption{Timeline showing BLER prediction at the start and actual BLER computation at the end of a CSI-RS period}
		\label{fig:bler_timeline}
	\end{figure}
	
	\begin{algorithm}[htb!]
		\small 
		\caption{BLER Prediction Algorithm: \textsc{BLER-Predict}}
		\label{alg:bler_predict}
		\begin{algorithmic}[1]
			\REQUIRE Current parameters $\bm{\theta}^{(t)}$, feature vector $\bm{x}^{(t)}$, quantized BLER classes $\mathcal{P} = \{p_1, p_2, \cdots, p_C \}$
			\ENSURE Predicted short-term BLER $\hat{P}_{\text{BLER}}$
			
			\STATE $G_{\max} \gets -1$, $\hat{P}_{\text{BLER}} \gets 0$
			
			\FOR{\textbf{each} candidate BLER $p \in \mathcal{P}$}
			\STATE Compute the goodness $G = G_{\bm \theta^{(t)}}(\bm x^{(t)}, p)$
			
			\IF{$G > G_{\max}$}
			\STATE $G_{\max} \gets G$, $\hat{P}_{\text{BLER}} \gets p$ 
			\ENDIF
			\ENDFOR
			
			\STATE \textbf{return} $\hat{P}_{\text{BLER}}$
		\end{algorithmic}
	\end{algorithm}
	
	The inference mechanism for estimating the short-term BLER is detailed in Algorithm \ref{alg:bler_predict}. At the beginning of the $t$-th CSI-RS period, the ML model predicts the short-term BLER, denoted as $\hat{P}_{\text{BLER}}^{(t)}$, utilizing input features such as the CSI-RS signal-to-noise ratio (SNR). Following the PDSCH transmissions within this CSI-RS period, the true BLER at the end of the period is computed as 
	\begin{align}
		P_{\text{BLER}}^{(t)} = \frac{n_\text{NACK}^{(t)}}{n_\text{NACK}^{(t)} + n_\text{ACK}^{(t)}}, 
	\end{align}
	where $n_\text{ACK}^{(t)}$ and $n_\text{NACK}^{(t)}$ denote the total number of PDSCH ACKs and NACKs recorded during the $t$-th CSI-RS period, respectively. The UE then evaluates the   prediction error as 
	\begin{align}
		e^{(t)}\left(\hat{P}_{\text{BLER}}^{(t)}, P_{\text{BLER}}^{(t)}\right) = \left| \hat{P}_{\text{BLER}}^{(t)} - P_{\text{BLER}}^{(t)} \right|.
	\end{align} 
	Within the overarching link adaptation framework, the opportunistic fine-tuning process is triggered if and only if this prediction error meets or exceeds a predefined threshold $\delta$. This condition ensures that the model undergoes parameter updates exclusively when its predictions deviate significantly from the empirical ground truth, enabling  targeted adaptation.
	
	\begin{remark}[\textbf{Using the FF Algorithm for Regression}]
		The FF algorithm was originally designed for classification tasks. To adapt it for regression, we discretize the continuous target variable (i.e., BLER) into a finite set of discrete classes $\mathcal{P} = \{p_1, p_2, \dots, p_C\}$. This introduces a trade-off between the prediction accuracy and the inference latency. However, as discussed in Sec. \ref{sec:CQI}, our link adaptation framework only requires a coarse estimate to reliably determine whether the short-term BLER exceeds a predefined threshold.
	\end{remark}
	
	\subsection{Channel Quality Indicator (CQI) Selection}
	\label{sec:CQI}
	In 5G, the UE reports a CQI and a rank indicator (RI) to the gNB, which reflect the perceived downlink channel conditions. Based on the reported CQI, the gNB  selects an appropriate modulation and coding scheme (MCS) to optimize throughput while maintaining reliable communication.
	
	Traditional $\text{CQI}$ selection methods rely on look-up tables and may select a $\text{CQI}$ index that results in an excessively high BLER, particularly initially when the algorithm has not yet converged. 
	To mitigate this high short-term BLER problem, we propose an adaptive, ML-guided back-off strategy driven by the online BLER prediction framework as detailed in Algorithm \ref{alg:cqi_backoff}. Specifically, if the CQI selected by the conventional table-based algorithm yields a predicted BLER that meets or exceeds a predefined BLER threshold $\tau_{\text{BLER}}$ (e.g., $0.9$), the algorithm iteratively decrements the CQI index until the predicted short-term BLER falls below this threshold provided that the CQI does not drop below a predefined minimum allowable bound, $\text{CQI}_{\text{min}}$. Crucially, the candidate CQI is explicitly embedded as a feature within the input feature vector $\bm{x}^{(t)}$. This allows \BLER{} to predict the BLER of that CQI. 
	
	\begin{algorithm}[htb!]
		\small
		\caption{$\text{CQI}$ Back-off Algorithm: \CQI{}}
		\label{alg:cqi_backoff}
		\begin{algorithmic}[1]
			\REQUIRE Current model parameters $\bm{\theta}^{(t)}$; table-based CQI index $\text{CQI}^{(r)}$ and RI $r$; input feature vector $\bm{x}^{(t)}$; minimum allowable CQI index $\text{CQI}_{\text{min}}$; BLER target threshold $\tau_{\text{BLER}}$;  fine-tuning error threshold $\delta$; learning rate $\alpha_f$; BLER classes $\mathcal P$.
			\ENSURE Adjusted CQI: $\text{CQI}_{\text L}^{(r)}$, updated model parameters $\bm{\theta}^{(t+1)}$
			
			\STATE $\text{CQI}_{\text L}^{(r)} \gets \text{CQI}^{(r)}$
			
			\WHILE{\( \text{CQI}_{\text L}^{(r)} > \text{CQI}_{\text{min}} \)}
			\STATE Update the CQI feature in $\bm{x}^{(t)}$ to $\text{CQI}_{\text L}^{(r)}$
			\STATE $\hat{p} \gets \BLER(\bm{\theta}^{(t)}, \bm{x}^{(t)}, \mathcal P)$
			\IF{$ \hat{p} < \tau_{\text{BLER}}$}
			\STATE \textbf{break} 
			\ENDIF
			\STATE \( \text{CQI}_{\text L}^{(r)} \gets \text{CQI}_{\text L}^{(r)} - 1 \) 
			\ENDWHILE
			
			\STATE Update the CQI feature in $\bm{x}^{(t)}$ to $\text{CQI}_{\text L}^{(r)}$ 
			
			\STATE Transmit PDSCH utilizing the final adjusted $\text{CQI}_{\text L}^{(r)}$
			\STATE Compute $n_{\text{ACK}}^{(t)}$ and $n_{\text{NACK}}^{(t)}$ at the end of the CSI-RS Period 
			\STATE Calculate the empirical ground-truth BLER:
			$P_{\text{BLER}}^{(t)} \gets \frac{n_{\text{NACK}}^{(t)}}{n_{\text{NACK}}^{(t)} + n_{\text{ACK}}^{(t)}}$
			
			\STATE $\hat{P}_{\text{BLER}}^{(t)}, \bm{\theta}^{(t+1)} \gets \name\left(\bm{\theta}^{(t)}, \bm{x}^{(t)}, P_{\text{BLER}}^{(t)}, \delta, \alpha_f \right)$
			
			\STATE \textbf{return} $\text{CQI}_{\text L}^{(r)}, \bm{\theta}^{(t+1)}$
		\end{algorithmic}
	\end{algorithm}
	
	It is important to clarify that the loop in Alg.~\ref{alg:cqi_backoff} operates within a single CSI-RS reporting period. That is, the $\text{CQI}$ is reported at the end of the loop. We also note that \CQI{} does not change the RI selected by the baseline algorithm. 
	
	\begin{remark}[\textbf{False Alarm and Missed Detection Effect on CQI Selection.}]
		\label{remark:FA_MD}
		\CQI{} only requires a coarse estimate of the BLER. Specifically, the goal of the underlying BLER prediction algorithm in \CQI{} is to determine if the short-term BLER is below the threshold $\tau_{\text{BLER}}$ or not. Hence, it is critical to analyze the false alarm (FA) and the missed detection (MD) probabilities of the underlying BLER prediction algorithm denoted as $P_{\text{FA}}$ and $P_{\text{MD}}$, respectively. The FA and MD events are defined as $\mathcal{E}_{\text{FA}} := \{ \hat{P}_{\text{BLER}} \geq \tau_{\text{BLER}} \mid P_{\text{BLER}} < \tau_{\text{BLER}} \}$ and $\mathcal{E}_{\text{MD}} := \{ \hat{P}_{\text{BLER}} < \tau_{\text{BLER}} \mid P_{\text{BLER}} \geq \tau_{\text{BLER}} \}$. $\mathcal{E}_{\text{FA}}$ is specifically critical as it leads to changing the decision of the table-based algorithm. $\mathcal{E}_{\text{MD}}$ is also an important event, but it does not change the decision of the table-based algorithm.  
	\end{remark}
	
	\subsection{RI and CQI Selection}
	In 5G New Radio (NR) systems, the RI determines  the number of spatial data streams transmitted in parallel over the MIMO channel. The UE may strategically report a lower RI to enable a higher-order MCS on dominant layers, often yielding better aggregate throughput than a full-rank transmission bottlenecked by weak layers. \\
	We begin with a brief overview of the RI selection problem. The conventional table-based baseline algorithms typically first estimate the best CQI for each possible RI. Then, they select the RI that maximizes the estimated spectral efficiency (SE). Specifically, denoting the CQI selected by the table-based algorithm for $\mathrm{RI} = i$ as $\mathrm{CQI}^{(i)}$, the RI selected by the conventional table-based algorithm is given by 
	\begin{equation}
		r \triangleq \arg\max_{ i \in \{1, \cdots, r_{\text{max}}\}} \ \{ i \cdot \mathrm{SE}(i, \mathrm{CQI}^{(i)}) \},
	\end{equation}
	where $r_{\text{max}}$ denotes the maximum supported rank and $\mathrm{SE}(i, \mathrm{CQI}^{(i)})$ denotes the estimated SE per-layer when $\mathrm{RI} = i$ and $\mathrm{CQI}^{(i)}$ are selected. The per-layer SE is defined as $\mathrm{SE}(i, \mathrm{CQI}^{(i)}) = Q_m^{(i)} \cdot R^{(i)}$, where $Q_m^{(i)}$ and $R^{(i)}$ denote the modulation order and code rate, respectively, corresponding to the selected $\mathrm{CQI}^{(i)}$. Finally, the UE reports the  rank $\mathrm{RI} = r$ and the associated $\mathrm{CQI} = \mathrm{CQI}^{(r)}$ to the gNB.
	
	Our algorithm, \RICQI{}, extends \CQI{} to jointly select the RI and the CQI that maximize the estimated spectral efficiency while avoiding the excessively high BLER typically associated with conventional algorithms. Specifically, \RICQI{} refines the baseline CQI selected by the table-based algorithm, denoted as $\mathrm{CQI}^{(i)}$, to a more robust, possibly lower CQI denoted as $\mathrm{CQI}^{(i)}_{\text{L}}$. Then, we select the rank $r_{\text L}$ as follows
	\begin{equation}
		r_{\text L} \triangleq \arg\max\limits_{i \in \{1, \dots, r_{\text{max}}\}} \ \{ i \cdot \mathrm{SE}(i, \mathrm{CQI}^{(i)}_{\text L}) \}.
	\end{equation} 
	The UE then reports $r_{\text L}$ and $\mathrm{CQI}^{(r_{\text L})}_{\text L}$ to the gNB.
	\begin{algorithm}
		\small 
		\caption{RI-CQI Selection Algorithm: \RICQI{}}
		\label{alg:ri_cqi_tune}
		\begin{algorithmic}[1]
			\REQUIRE Current parameters $\bm{\theta}^{(t)}$,  table-based rank $r$ and table-based CQI for each rank $\{\text{CQI}^{(i)}\}_{i=1}^{r_{\text{max}}}$, feature vector $\bm{x}^{(t)}$, minimum CQI: $\text{CQI}_{\text{min}}$, BLER threshold $\tau_{\text{BLER}}$, maximum rank $r_{\text{max}}$, fine-tuning threshold $\delta$, learning rate $\alpha_f$ and BLER classes $\mathcal P$.
			\ENSURE Adjusted rank $r_{\text L}$, adjusted CQI denoted by $\text{CQI}^{(r_{\text L})}_{\text L}$, \\
			\hspace{\algorithmicindent} \ \ updated model parameters $\bm{\theta}^{(t+1)}$
			
			\STATE $r_{\text{low}} \gets \lceil r / 2 \rceil$, $r_{\text{high}} \gets \min(r_{\text{low}} + 2, r_{\text{max}})$ 
			\STATE $\text{SE}_{\text{max}} \gets -1$, $r_{\text L} \gets r, \quad \text{CQI}^{(r_{\text L})}_{\text L} \gets \text{CQI}^{(r)}$
			\FOR{$i = r_{\text{low}}$ \TO $r_{\text{high}}$}
			\STATE $\text{CQI}_{\text{test}} \gets \text{CQI}^{(i)}$
			
			\WHILE{\( \text{CQI}_{\text{test}} > \text{CQI}_{\text{min}} \)}
			\STATE Update rank to $i$ and CQI to $\text{CQI}_{\text{test}}$ in $\bm{x}^{(t)}$
			\IF{$ \BLER(\bm{\theta}^{(t)}, \bm{x}^{(t)},\mathcal P) < \tau_{\text{BLER}}$}
			\STATE \textbf{break} 
			\ENDIF
			\STATE $\text{CQI}_{\text{test}} \gets \text{CQI}_{\text{test}} - 1$ 
			\ENDWHILE
			
			\STATE $\text{SE} \gets i \cdot \text{SE}(i, \text{CQI}_{\text{test}})$
			\IF{$\text{SE} > \text{SE}_{\text{max}}$}
			\STATE $\text{SE}_{\text{max}} \gets \text{SE}$,  $r_{\text L} \gets i, \quad \text{CQI}^{(r_{\text L})}_{\text L} \gets \text{CQI}_{\text{test}}$
			\ENDIF
			\ENDFOR
			\STATE Transmit PDSCH utilizing final adjusted $r_{\text L}$ and $\text{CQI}^{(r_{\text L})}_{\text L}$
			\STATE Compute $n_{\text{ACK}}^{(t)}$ and $n_{\text{NACK}}^{(t)}$ to calculate empirical $P_{\text{BLER}}^{(t)}$
			\STATE Update rank to $r_{\text L}$ and CQI to $\text{CQI}^{(r_{\text L})}_{\text L}$ in $\bm{x}^{(t)}$ 
			\STATE $\hat{P}_{\text{BLER}}^{(t)}, \bm{\theta}^{(t+1)} \gets \name\left(\bm{\theta}^{(t)}, \bm{x}^{(t)}, P_{\text{BLER}}^{(t)}, \delta, \alpha_f \right)$
			
			\STATE \textbf{return} $r_{\text L}, \text{CQI}^{(r_{\text L})}_{\text L}, \bm{\theta}^{(t+1)}$
		\end{algorithmic}
	\end{algorithm}

	\RICQI{} yields superior throughput gains by jointly optimizing RI and CQI. Moreover, the search bounds $r_{\text{low}}$ and $r_{\text{high}}$ ensure the rank window covers at most 3 candidates, making \RICQI{} strictly three times the complexity of \CQI{} regardless of $r_{\text{max}}$.	
	\begin{remark}[\textbf{Motivation for \RICQI{}}]
		\RICQI{} is motivated by empirical observations of live networks, where the gNB typically adheres to the UE-reported RI but frequently overrides the reported CQI. This aligns with 3GPP specifications \cite{3gpp_ts38214}, which grant the gNB flexibility to dictate final transmission parameters. Through joint optimization, \RICQI{} generates a robust feedback pair that accommodates these practical behaviors, mitigating high-BLER events caused by stale table-based mappings. Crucially, while \CQI{} is restricted to a conservative back-off, \RICQI{} can dynamically increase the RI.
	\end{remark}

	\subsection{Complexity Analysis}
	We first discuss the potential benefits of the FF algorithm compared to the backpropagation algorithm as summarized in Table \ref{table:complexity_comparison}. Then, we analyze the complexity of the BLER prediction algorithm. Finally, we compare the complexity of the \CQI{} algorithm and the \RICQI{} algorithm as summarized in Table \ref{tab:complexity}.\\ 
	\textbf{Complexity of the FF and the BP algorithms}. A primary motivation for adopting the FF algorithm is its significant advantages for resource-constrained devices \cite{torres2025advancements,huang2025tinyfoa,de2023mu}. We discuss these aspects next.\\
	1) \textbf{Dynamic Memory (RAM) Footprint}:  BP requires storing all activations during the backward pass; hence, the required RAM scales with the network depth and layer sizes (i.e., $\sum_l M_l$ for an MLP). In contrast, \name{} trains each layer locally, which eliminates the necessity to store intermediate activations across the entire depth. Consequently, the peak memory footprint scales only with the size of the largest layer (i.e., $\max_l M_l$ for an MLP).\\
	2) \textbf{Implementation Simplicity}:  BP necessitates a complex automatic differentiation (Autodiff) engine to build a computational graph and execute the chain rule. This typically requires a full deep-learning framework (e.g., PyTorch, TensorFlow). \name{} avoids this overhead entirely by leveraging the FF algorithm which utilizes local sequential update rules that do not require global graph tracking.
	
	\begin{table}
		\centering
		\setlength{\tabcolsep}{2pt} 
		\renewcommand{\arraystretch}{0.9}
		\begin{tabular}{@{}lll@{}}
			\toprule
			\textbf{Feature} & \textbf{Standard Backpropagation} & \textbf{\name{} (FF-based)} \\
			\midrule
			Peak RAM     & $O\left(\sum_{l=1}^{L} M_l\right)$ & $O\left(\max_l M_l\right)$             \\
			Control Logic     & Complex Autodiff     &  Local Training    \\
			\bottomrule
		\end{tabular}
		\captionsetup{font=footnotesize}
		\caption{Complexity analysis of BP vs. FF for an MLP with $L$ layers, where $M_l$ is the width of the $l$-th layer.}
		\label{table:complexity_comparison}
	\end{table}
	
	\textbf{BLER Prediction Complexity.} 
	The computational cost of a single short-term BLER prediction is governed by the FF inference procedure over the quantized classes. We define $Q = \sum_{l=1}^L M_l M_{l-1}$ as the number of multiply-accumulate (MAC) operations required for a single forward pass through the MLP. Since \BLER{} evaluates $C$ discrete classes, the total inference complexity is exactly $C \cdot Q$ MAC operations. \\
	\textbf{Complexity of \CQI{}.} 
	The complexity of the \CQI{} back-off algorithm (Alg.~\ref{alg:cqi_backoff}) is divided across the CQI selection and the post-transmission fine-tuning. The maximum number of pre-transmission search iterations is bounded by $K := \text{CQI}^{(r)} - \text{CQI}_{\text{min}}$. By setting $\text{CQI}_{\text{min}} = \text{CQI}^{(r)} - 1$, the worst-case pre-transmission overhead is strictly limited to exactly one BLER prediction (i.e., $C \cdot Q$ MACs). Combined with the two post-transmission forward passes (one for the positive sample and one for the negative sample) required for fine-tuning, the worst-case total complexity per CSI-RS period is exactly $(C + 2)Q$ MAC operations. \\
	\textbf{Complexity of \RICQI{}.} 
	The joint \RICQI{} selection framework (Alg.~\ref{alg:ri_cqi_tune}) introduces an outer search over a localized rank window $[r_{\text{low}}, r_{\text{high}}]$ covering at most 3 candidate ranks. Since each candidate rank invokes at most one pre-transmission BLER prediction (given $K=1$), the joint search performs a maximum of 3 predictions. \\
	\textbf{Fine-tuning Complexity.}
	Let $N_{\text{total}} = \sum_{l=1}^L M_l(M_{l-1} + 1)$ denote the total number of model parameters, including biases. The gradient computation in \eqref{eqn:gradients} requires two outer products. Since each outer product generates a matrix of size $M_l \times (M_{l-1}+1)$, the computation requires $2N_{\text{total}}$ operations. Including the summation of these gradients and the parameter update, the total fine-tuning cost is approximately $4N_{\text{total}}$ operations. Crucially, when utilizing the one-step sign-update variant in \eqref{eqn:sign_update}, the final parameter update requires 0 additional multiplications and $N_{\text{total}}$ additions (or bit-sign extractions). 
	
	\begin{table}
		\centering
		\setlength{\tabcolsep}{3pt} 
		\renewcommand{\arraystretch}{1.3}
		\resizebox{\columnwidth}{!}{%
			\begin{tabular}{@{}lcccc@{}}
				\toprule
				\multirow{2}{*}{\textbf{Algorithm}} & \multicolumn{2}{c}{\textbf{FF Inferences}} & \textbf{Update Cost} & \multirow{2}{*}{\textbf{Total MACs}} \\
				\cmidrule(lr){2-3} \cmidrule(lr){4-4}
				& \shortstack{\textbf{Inference}} & \shortstack{\textbf{Fine-tuning}} & \shortstack{\textbf{One-step Update}} & \\
				\midrule
				\BLER{}              & $C \cdot Q$ & 0 & 0 & $C \cdot Q$ \\
				\CQI{}               & $C \cdot Q$ & $2Q$ & $4N_{\text{total}}$ & $(C+2)Q + 4N_{\text{total}}$ \\
				\RICQI{}             & $3C \cdot Q$ & $2Q$ & $4N_{\text{total}}$ & $(3C+2)Q + 4N_{\text{total}}$ \\
				\bottomrule
			\end{tabular}
		}
		\captionsetup{font=footnotesize}
		\caption{Worst-case complexity analysis where $Q = \sum_{l=1}^L M_l M_{l-1}$ and $N_{\text{total}} = \sum_{l=1}^L M_l(M_{l-1} + 1)$.}
		\label{tab:complexity}
	\end{table}
	Finally, as discussed in Remark \ref{remark:FA_MD}, the number of quantized BLER classes could potentially be restricted to $C = 2$, further minimizing the computational overhead of the FF inference.
	
	\section{Simulation Results}
	\label{sec:results}
	We now evaluate the performance of the proposed fine-tuning algorithm for BLER prediction and link adaptation. 
	\subsection{Training and Test Settings}
	We train and test our model on tapped delay line (TDL) channels~\cite{3gppTR38901}, as summarized in Table~\ref{table:channel_settings}. In TDL channels, the letter (A, B, C) indicates the channel profile, while the number (10, 30, 50, ...) specifies the root mean square delay spread (RMS DS) in nanoseconds. We consider various 3GPP antenna correlation scenarios \cite{3gppTR38901}.  
	For both the training and the test settings, the bandwidth is $100$ MHz, the sub-carrier spacing is $30$ KHz and the maximum supported RI $r_{\text{max}}$ is $4$. Our BLER threshold is set as $\tau_{\text{BLER}} = 0.9$ and the minimum CQI of \CQI{} is set as $\text{CQI}_{\text{min}} =\max( \text{CQI}^{(r)} - 1, 1)$, where $\text{CQI}^{(r)}$ is the table-based selection.
	
	\begin{table}
		\centering
		\setlength{\tabcolsep}{2pt} 
		\begin{tabular}{@{}p{2.3cm}p{1.8cm}p{4cm}@{}} 
			\toprule
			\textbf{Parameter} & \textbf{Training} & \textbf{Test} \\
			\midrule
			Channels & TDL-A30 & TDL-A10, TDL-A30, TDL-B50, TDL-B100, TDL-B200, TDL-C200 \\
			SNR & Low (0–12 dB) & Low/Medium/High (0–40 dB) \\
			Delay Profile & Low Delay & Low/High Delay \\
			Doppler Frequency & Low (10 Hz) & Low/Medium (10–50 Hz) \\
			Antenna Correlation & Low & Low/Medium/High \\
			CSI-RS Period & 80 ms & 10, 40 or 80 ms\\
			\bottomrule
		\end{tabular}
		\captionsetup{font=footnotesize}
		\caption{Training and testing configurations. }
		\label{table:channel_settings}
	\end{table}
	
	\subsection{BLER Prediction}
	\label{subsec:bler_experiment}
	Our goal is to simulate scenarios in which the ML model is trained in an environment but evaluated in a different one to mimic the training-test mismatch problem. We use $12$ features for the BLER prediction ML model \cite{Ali2026lighttune}, hence the input size is $13$ since a label is attached to the feature vector. \\
	\textbf{Features}. We utilize features that capture both channel conditions and transmission parameters that include the CSI-RS SNR and the CSI-RS capacity, which helps characterize antenna correlation levels. We also incorporate the delay spread and Doppler frequency estimates to distinguish between different channel profiles. Additionally, we use the instantaneous PDSCH SNR along with the PDSCH SNR values from the three most recent transmissions, which aid in predicting the likelihood of NACKs. This is completed by the current $\text{RI}$ and $\text{CQI}$, the number of allocated resource blocks (RBs), and the number of demodulation reference signal (DMRS) symbols.
	
	\begin{table}
		\centering
		\setlength{\tabcolsep}{3pt} 
		\renewcommand{\arraystretch}{0.85} 
		\begin{tabular}{@{}ll@{}} 
			\toprule
			\textbf{Parameter} & \textbf{Value} \\
			\midrule
			Neural Network Size & \texttt{[13, 32, 32]} \\
			Offline Learning Rate $\alpha$ & $0.03$ \\
			Online Learning Rate $\alpha_{f}$ & $0.03$ \\
			Fine-tuning Threshold $\delta$ & $0.3$ \\
			Training Optimizer & Adam ($\beta_1 = 0.9, \beta_2 = 0.999, \epsilon = 10^{-8}$)\\
			Threshold $T$ & $9$ \\
			Epochs & $22{,}000$ \\
			Training Samples & $83{,}200$ \\
			BLER Classes & $\mathcal{P} = \{0, 0.1, 0.2, \cdots, 0.9\}$ \\
			\bottomrule
		\end{tabular}
		\captionsetup{font=footnotesize}
		\caption{Hyperparameters for the BLER prediction algorithm}
		\label{table:finetuning_params}
	\end{table}
	
	\begin{figure}
		\centering
		\captionsetup{font=footnotesize}
		\begin{subfigure}[b]{0.49\linewidth}
			\includegraphics[width=\linewidth]{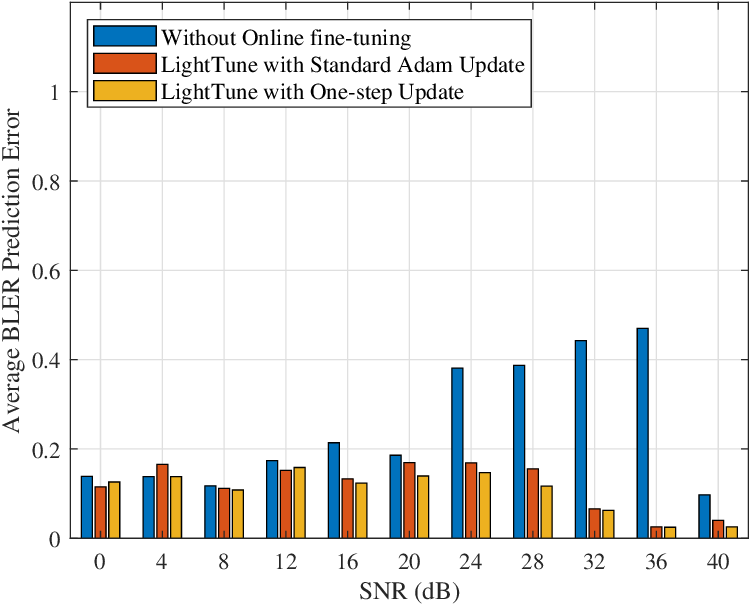}
			\caption{TDL-A30 \\ (Doppler freq. = 10 Hz)}
			\label{fig:bler_bar_tdla}
		\end{subfigure}
		\begin{subfigure}[b]{0.49\linewidth}
			\includegraphics[width=\linewidth]{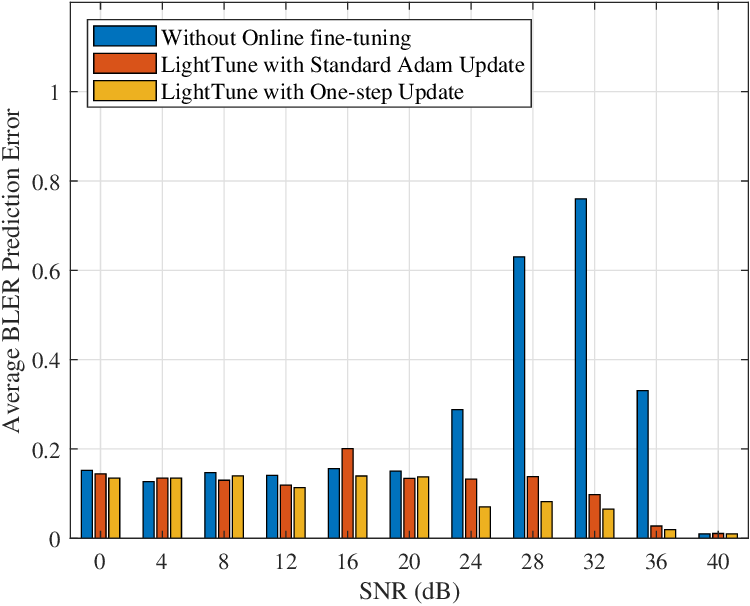}
			\caption{TDL-B50 \\ (Doppler freq. = 30 Hz)}
			\label{fig:bler_bar_tdlb50}
		\end{subfigure}
		\captionsetup{font=footnotesize}
		\caption{BLER prediction error with and without online fine-tuning, uniform sampling, under low antenna correlation, CSI-RS period = 80 ms.}
		\label{fig:bler_bar_combined}
	\end{figure}
	
	\noindent \textbf{BLER Prediction with and without fine-tuning}. Our results show that the one-step update approach is the best in terms of the BLER prediction error as shown in Fig. \ref{fig:bler_bar_combined} and the FA probability as shown in Fig. \ref{fig:FA_MD_comparison}. Fig. \ref{fig:bler_bar_tdla} shows that for SNRs between $0$ and $12$ dB, the average BLER prediction error remains largely unchanged with or without online fine-tuning, as the baseline model was trained on TDL-A30 data within this exact range. However, at SNRs above $12$ dB, enabling online fine-tuning yields a significant decrease in prediction error. Specifically, the average BLER prediction error across all SNRs is reduced by $43.5\%$ when utilizing the one-step update variant. Fig. \ref{fig:bler_bar_tdlb50} illustrates a substantial decrease in the BLER prediction error when online fine-tuning is enabled. Because the TDL-B50 profile was not included in the offline training phase, the resulting training-test mismatch severely degrades the  offline model's accuracy. With \name, the average BLER prediction error is decreased by $36\%$.\\
	Since our BLER prediction results show that the one-step variant of \name{} leads to a lower BLER prediction error in most cases, we only consider this variant for CQI selection and for joint RI and CQI selection. 
	
	\begin{figure}
		\captionsetup{font=footnotesize}
		\centering
		\begin{subfigure}{0.49\linewidth}
			\centering
			\includegraphics[width=\linewidth]{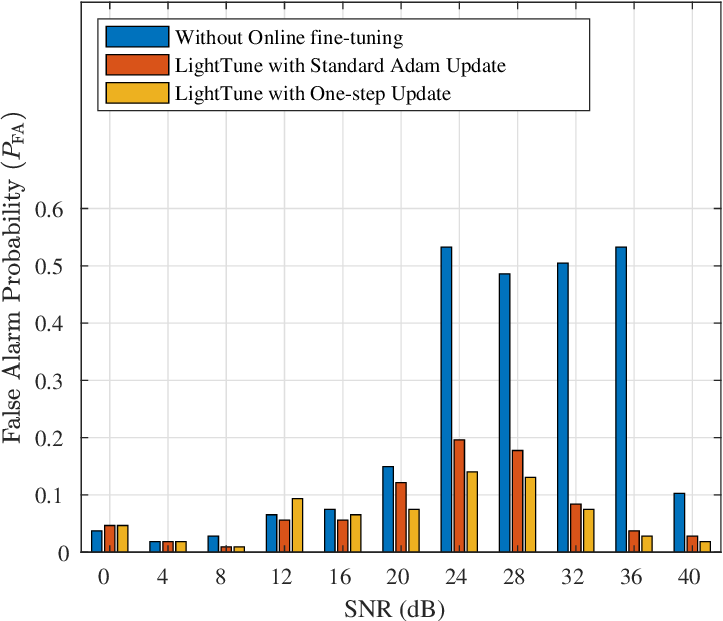}
			\captionsetup{font=footnotesize}
			
			\caption{TDL-A30 (Dop. freq. = 10 Hz).}
			\label{fig:tdla30_fa}
		\end{subfigure}
		\hfill
		\begin{subfigure}{0.49\linewidth}
			\centering
			\includegraphics[width=\linewidth]{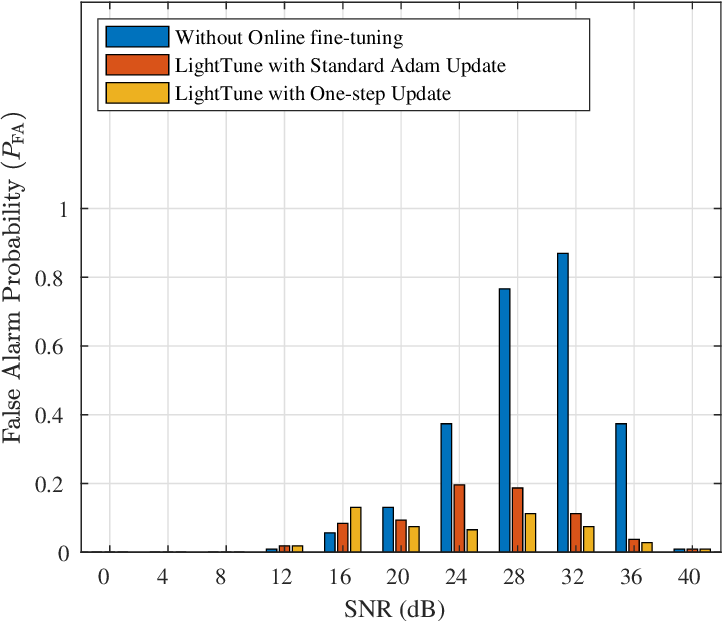}
			\captionsetup{font=footnotesize}
			
			\caption{TDL-B50 (Dop. freq. = 30 Hz)}
			\label{fig:tdlb50_fa}
		\end{subfigure}
		\captionsetup{font=footnotesize}
		\caption{False alarm (FA) probability with and without online fine-tuning under low antenna correlation, CSI-RS period = 80 ms, where $\tau_{\text{BLER}}=0.9$.}
		\label{fig:FA_MD_comparison}
	\end{figure}
	
	\begin{figure}
		\captionsetup{font=footnotesize}
		\centering
		\begin{subfigure}{0.49\linewidth}
			\centering
			\includegraphics[width=\linewidth]{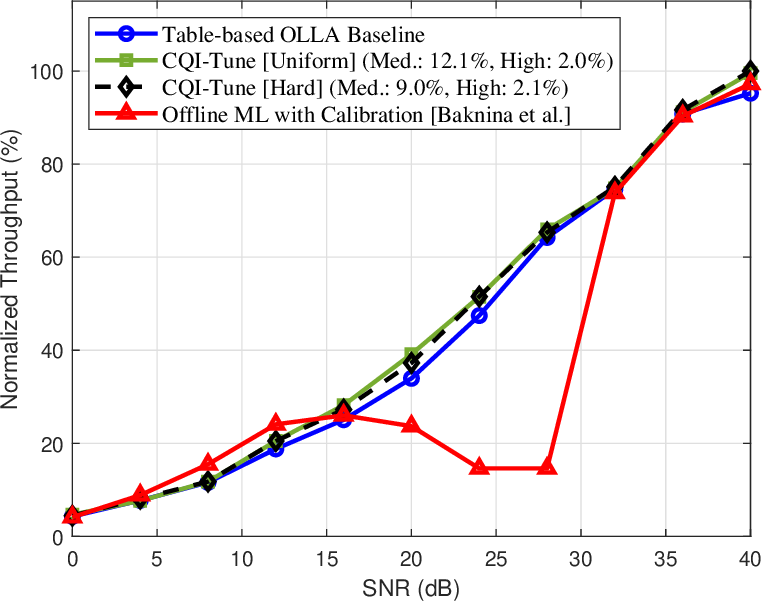}
			\captionsetup{font=footnotesize}
			
			\caption{TDL-B50 (Dop. freq. = 30 Hz).}
			\label{fig:tdlb_throughput}
		\end{subfigure}
		\hfill
		\begin{subfigure}{0.49\linewidth}
			\centering
			\includegraphics[width=\linewidth]{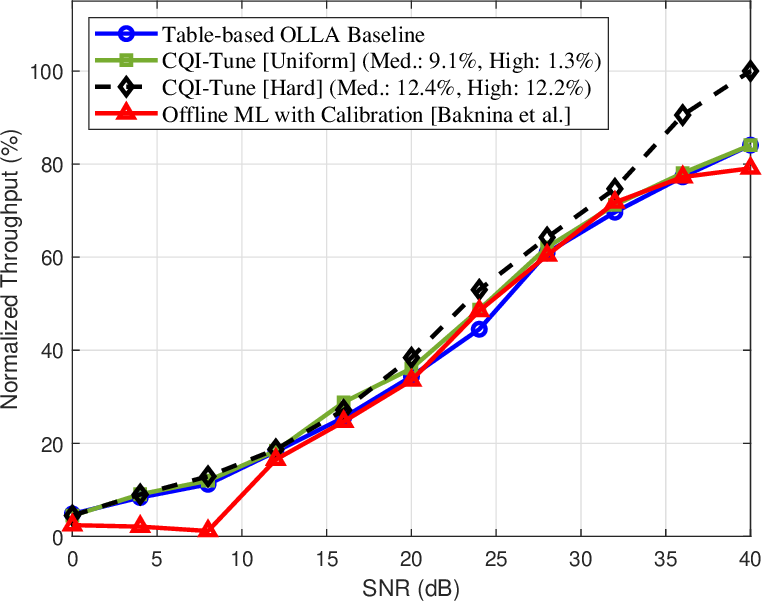}
			\captionsetup{font=footnotesize}
			
			\caption{TDL-C200 (Dop. freq. = 50 Hz)}
			\label{fig:tdlc_throughput}
		\end{subfigure}
		\captionsetup{font=footnotesize}
		\caption{Throughput  of \CQI{}, under low antenna  correlation with CSI-RS period = 80 ms.}
		\label{fig:throughput_comparison}
	\end{figure}
	
	\subsection{CQI Selection}
	We first show the effect of \name{} on the FA probability in Fig. \ref{fig:FA_MD_comparison} under channel and SNR mismatch. As  discussed in Remark \ref{remark:FA_MD}, the FA probability is crucial in CQI selection. Since the ML BLER prediction model has been trained offline within the low SNR range, we see that the FA probability is low even without fine-tuning in the low SNR range. However, fine-tuning is essential in the other SNR ranges to keep the FA probability as low as possible. We next compare the throughput performance of three approaches with $\text{CQI}$ reporting based on: a table-based OLLA method similar to the algorithm of~\cite{peralta2022outer}, the offline ML-based method~\cite{baknina2020adaptive,ali2026online} with calibration/adaptation, and the OLLA method augmented with the proposed backoff mechanism described in Algorithm~\ref{alg:cqi_backoff}. We provide a comparison in Fig. \ref{fig:throughput_comparison}. We show the throughput gains  in the medium SNR range (16 to 24 dB) and the high SNR range (28 to 40 dB). 
	
	\textbf{CQI Selection Performance.} As detailed in Table \ref{table:cqi_sampling_gains}, \CQI{} consistently improves throughput across both channel profiles. For TDL-B50, medium SNR gains reach up to $12.1\%$ (uniform sampling), while high SNR gains plateau around $2.0\%$. Under the TDL-C200 profile, hard sampling significantly outperforms the uniform scheme in the high SNR regime, achieving a gain of $12.2\%$.
	
	\begin{table}[htb!]
		\centering
		\setlength{\tabcolsep}{6pt}
		\renewcommand{\arraystretch}{1.1}
		\begin{tabular}{@{}lcccc@{}}
			\toprule
			& \multicolumn{2}{c}{\textbf{Medium SNR Gain}} & \multicolumn{2}{c}{\textbf{High SNR Gain}} \\
			\cmidrule(lr){2-3} \cmidrule(lr){4-5}
			\textbf{Channel} & \textbf{Uniform} & \textbf{Hard} & \textbf{Uniform} & \textbf{Hard} \\
			\midrule
			TDL-B50, 30 Hz & 12.1\% & 9.0\% & 2.0\% & 2.1\% \\
			TDL-C200, 50 Hz & 9.1\% & 12.4\% & 1.3\% & 12.2\% \\
			\bottomrule
		\end{tabular}
		\captionsetup{font=footnotesize}
		\caption{Throughput gains of \CQI{} with uniform and hard  sampling over the table-based OLLA, low correlation, CSI-RS period = 80 ms.}
		\label{table:cqi_sampling_gains}
	\end{table}
	The throughput gains for both sampling strategies across the medium and high SNR regimes are summarized in Table~\ref{table:cqi_sampling_gains}. We also note that the offline ML method \cite{baknina2020adaptive} shows degraded performance due to a training-test mismatch, having been trained exclusively on TDL-A30 low SNR data. This occurs despite using ACKs/NACKs to calibrate the model output, highlighting the need for an online learning mechanism to help offline ML models generalize to the unseen conditions.

\textbf{RI and CQI Selection Performance}. Figs.~\ref{fig:low_corr_throughput}--\ref{fig:throughput_10ms_CSI_RS} and Table~\ref{table:consolidated_throughput_gains} compare \RICQI{} against \CQI{} with uniform negative sampling across varying CSI-RS periods, channel profiles, and antenna correlations. \RICQI{} achieves throughput gains of up to $15.5\%$ over the OLLA baseline. The principal advantage of \RICQI{} over \CQI{} is concentrated in the high-SNR regime, where the joint RI--CQI search can \emph{increase} the reported rank of the baseline algorithm. At medium SNR, the rank selected by the table-based algorithm is already appropriate and the CQI back-off alone is sufficient; consequently, \CQI{} is competitive with or occasionally exceeds \RICQI{} in that regime.
			\begin{figure*}
		\captionsetup{font=footnotesize}
		\centering
		\begin{subfigure}{0.24\linewidth}
			\centering
			\includegraphics[width=\linewidth]{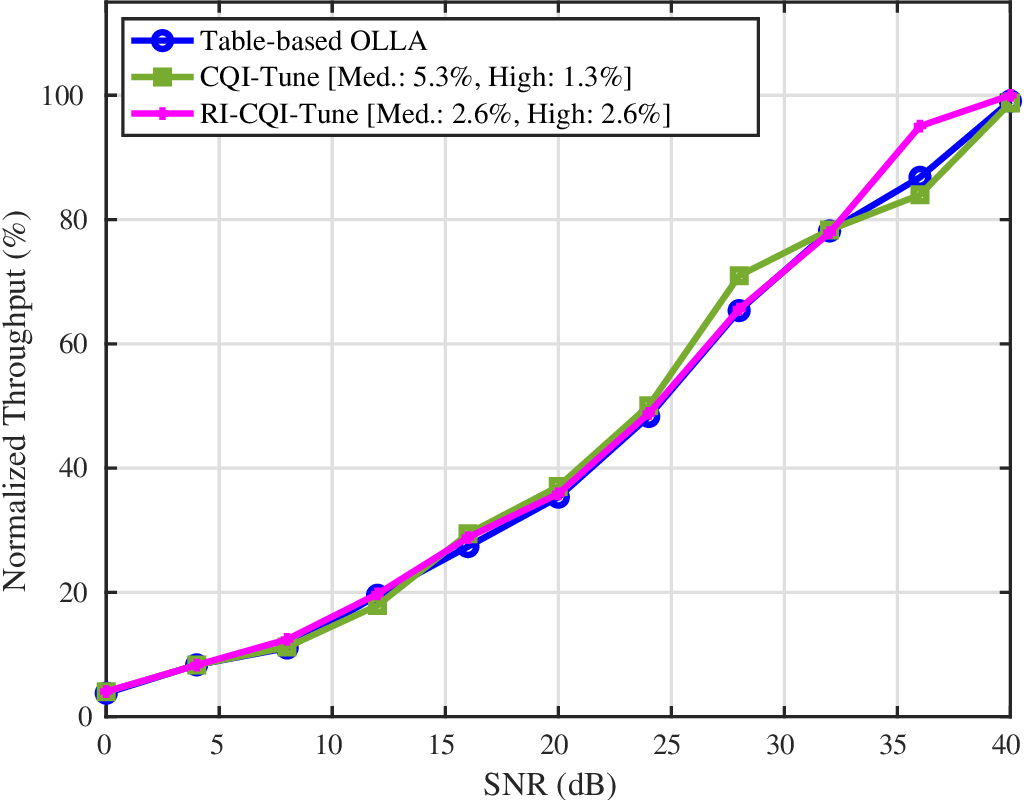}
			\caption{TDL-A10 \\ (Dop. freq. = 20 Hz).}
			\label{fig:tdla50_throughput_80ms_CSIRS_106RBs}
		\end{subfigure}
		\hfill
		\begin{subfigure}{0.24\linewidth}
			\centering
			\includegraphics[width=\linewidth]{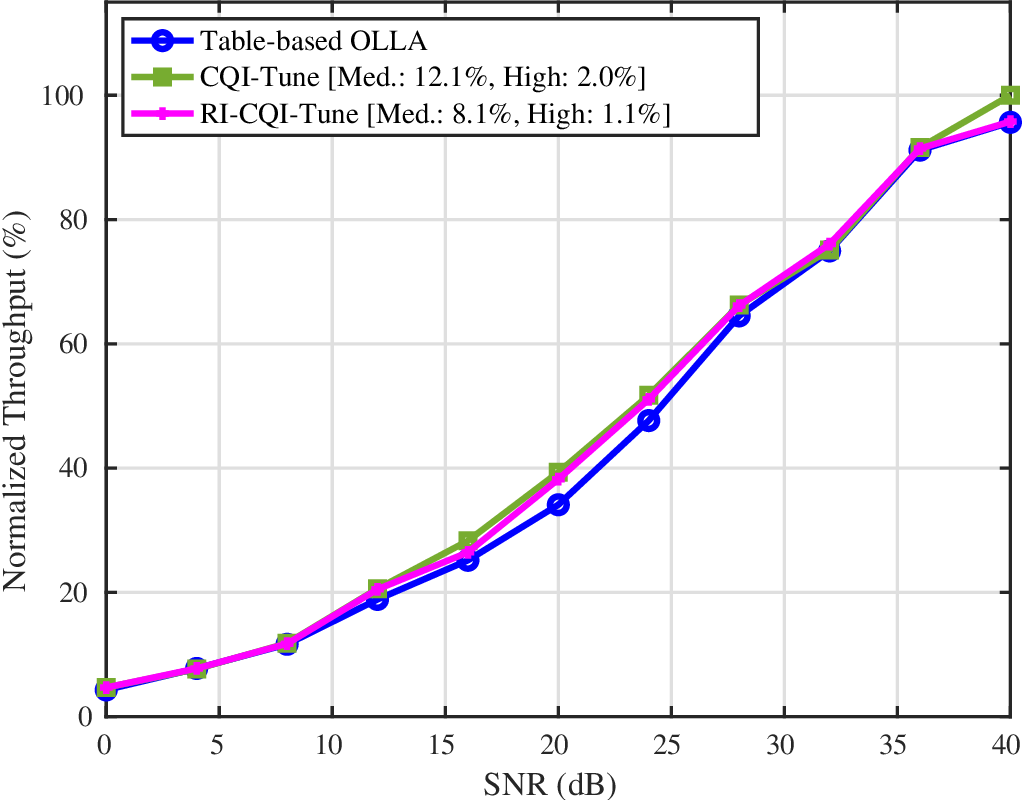}
			\caption{TDL-B50 \\ (Dop. freq. = 30 Hz).}
			\label{fig:tdlb50_throughput_80ms_CSIRS_106RBs}
		\end{subfigure}
		\hfill
		\begin{subfigure}{0.24\linewidth}
			\centering
			\includegraphics[width=\linewidth]{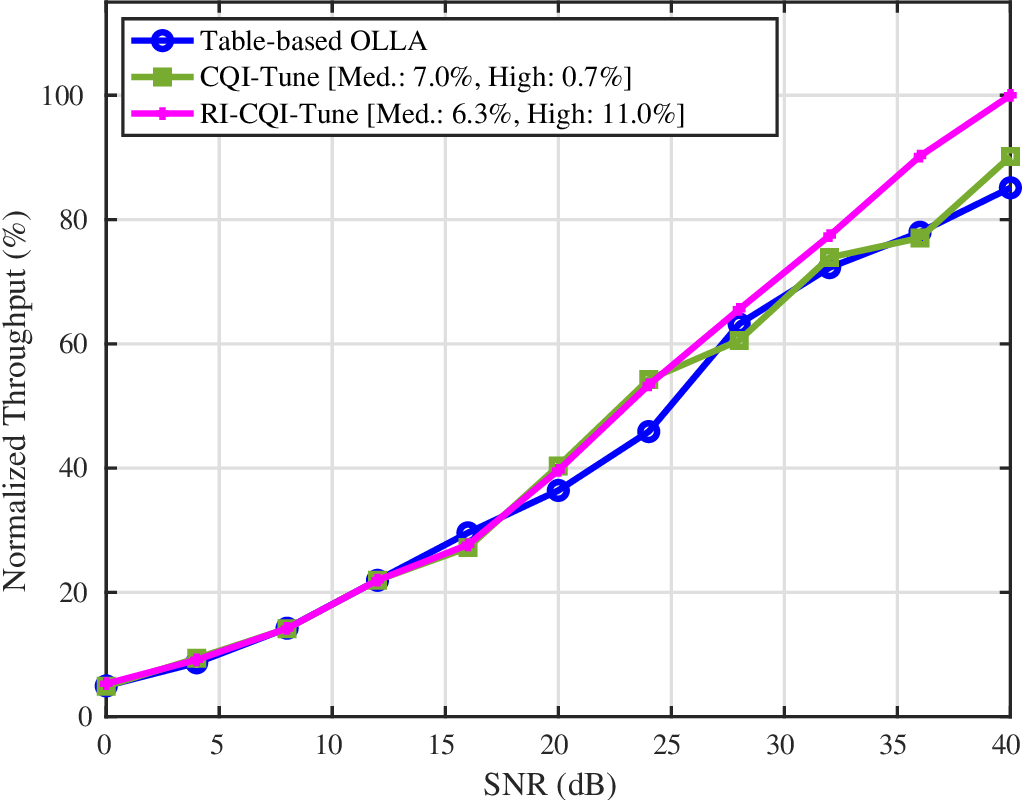}
			\caption{TDL-B200 \\ (Dop. freq. = 50 Hz).}
			\label{fig:tdlb200_throughput_80ms_CSIRS_106RBs}
		\end{subfigure}
		\hfill
		\begin{subfigure}{0.24\linewidth}
			\centering
			\includegraphics[width=\linewidth]{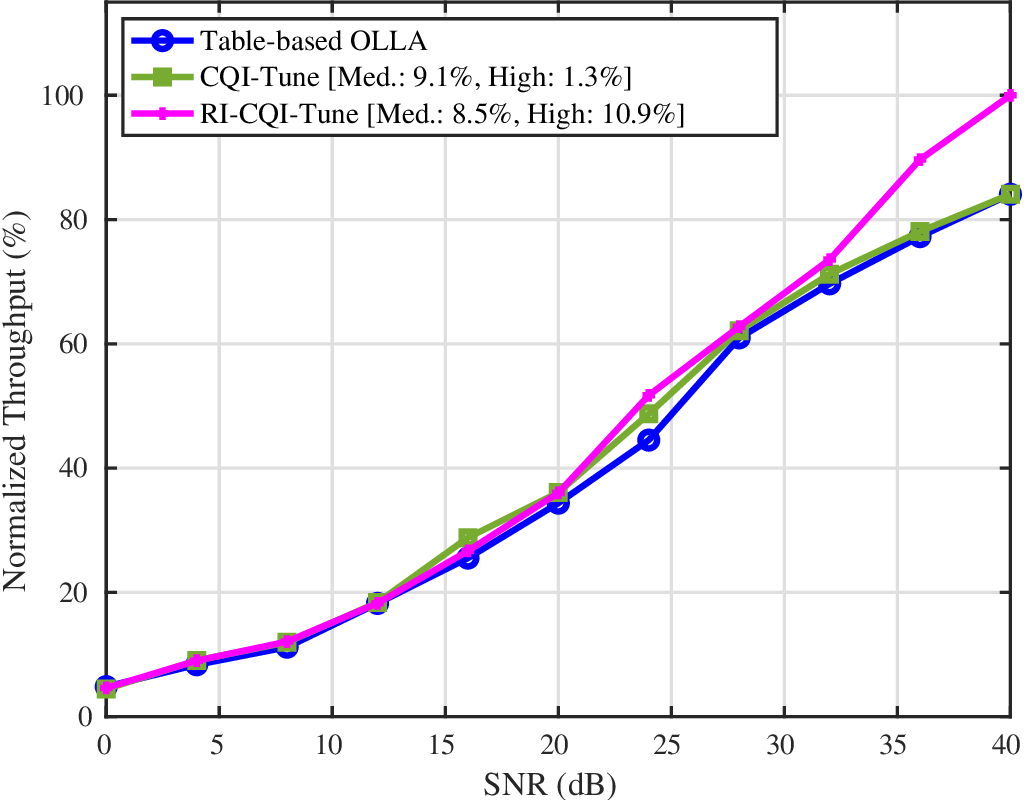}
			\caption{TDL-C200 \\ (Dop. freq. = 50 Hz).}
			\label{fig:tdlc200_throughput_80ms_CSIRS_106RBs}
		\end{subfigure}
		\captionsetup{font=footnotesize}
		\caption{Throughput of \RICQI{} and \CQI{} with uniform sampling under low antenna correlation with CSI-RS period = 80 ms.}
		\label{fig:low_corr_throughput}
	\end{figure*}
			\begin{figure}
		\captionsetup{font=footnotesize}
		\centering
		\begin{subfigure}{0.4935\linewidth}
			\centering
			\includegraphics[width=\linewidth]{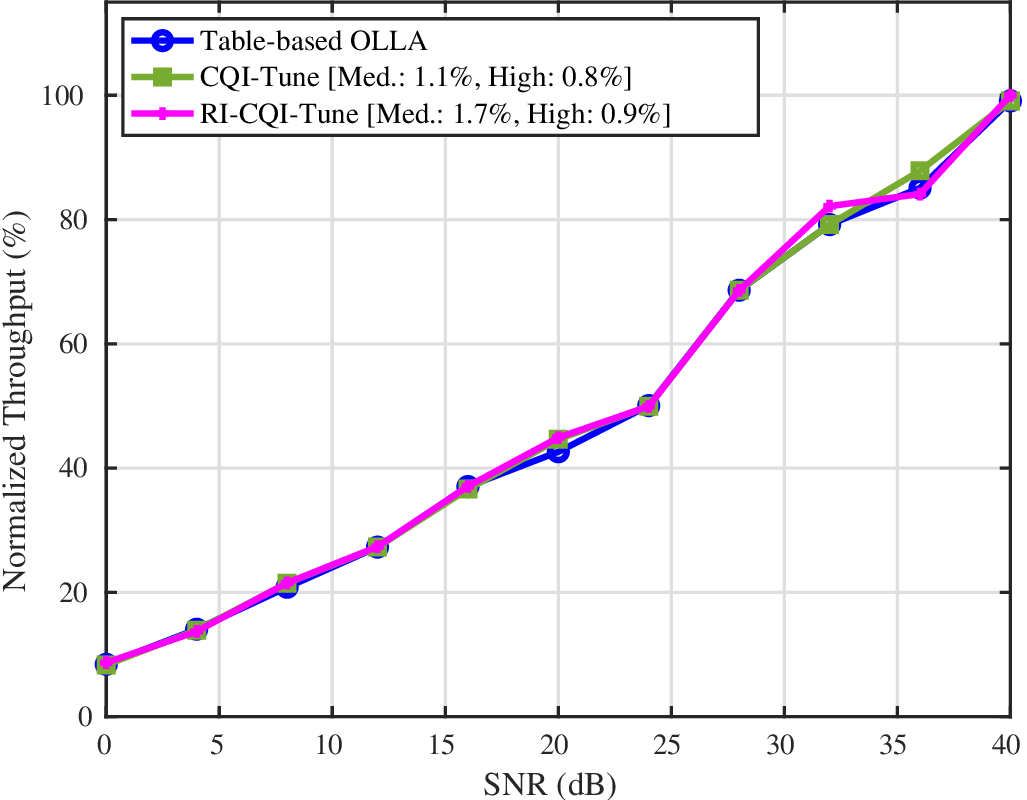}
			\caption{TDL-B50 \\ (Dop. freq. = 30 Hz).}
			\label{fig:tdlb50_throughput_80ms_CSIRS_106RBs_med_corr}
		\end{subfigure}
		\hfill
		\begin{subfigure}{0.4935\linewidth}
			\centering
			\includegraphics[width=\linewidth]{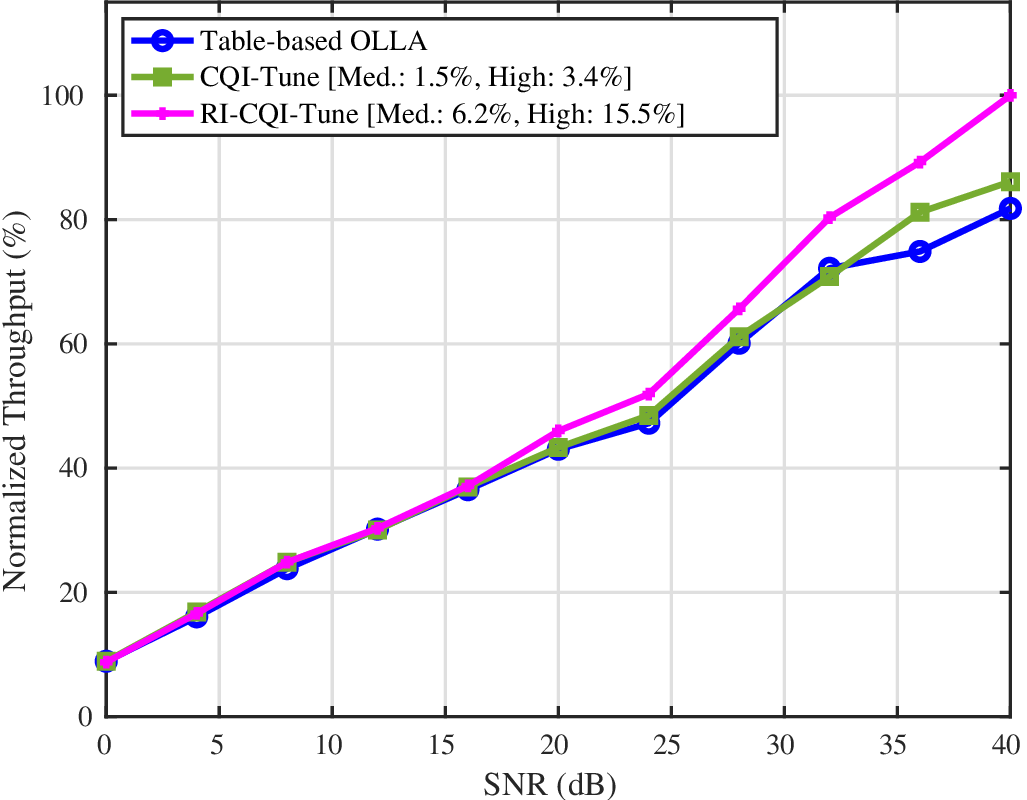}
			\caption{TDL-B200 \\ (Dop. freq. = 50 Hz).}
			\label{fig:tdlb200_throughput_80ms_CSIRS_106RBs_med_corr}
		\end{subfigure}
		\captionsetup{font=footnotesize}
		\caption{Throughput of \RICQI{} and \CQI{} with uniform sampling under medium antenna correlation with CSI-RS period = 80 ms.}
		\label{fig:med_corr_throughput}
	\end{figure}
	
	\begin{figure}
		\captionsetup{font=footnotesize}
		\centering
		\begin{subfigure}{0.4935\linewidth}
			\centering
			\includegraphics[width=\linewidth]{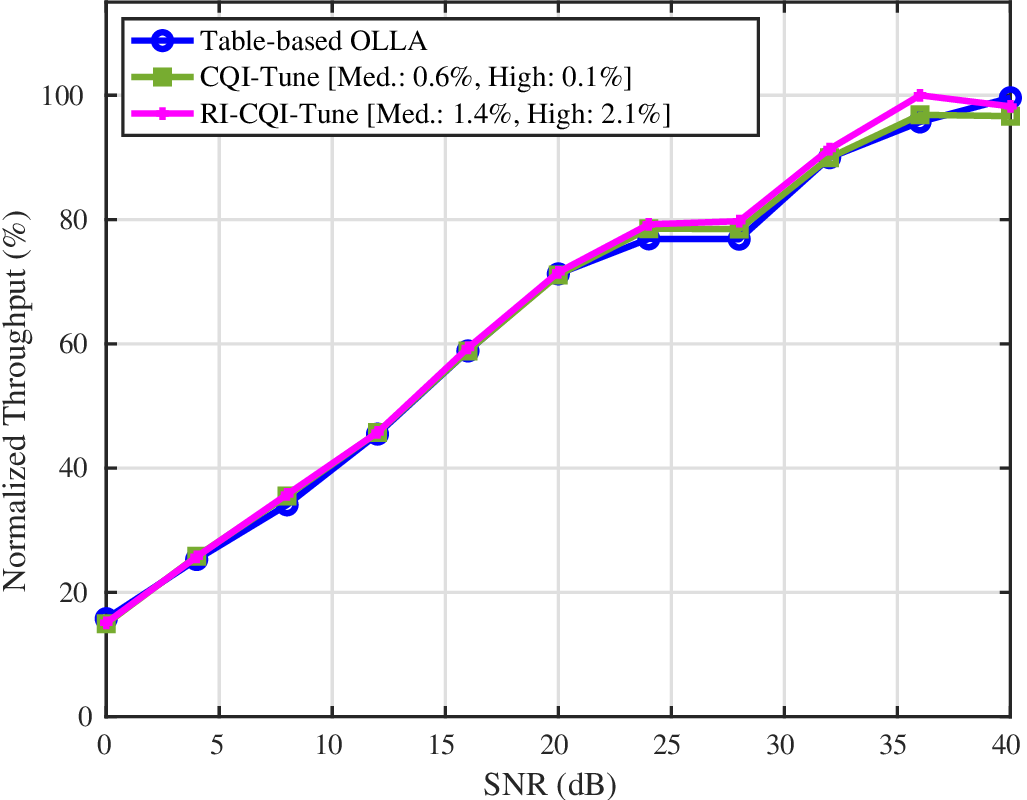}
			\caption{TDL-B50 \\ (Dop. freq. = 30 Hz).}
			\label{fig:tdlb50_throughput_80ms_CSIRS_106RBs_high_corr}
		\end{subfigure}
		\hfill
		\begin{subfigure}{0.4935\linewidth}
			\centering
			\includegraphics[width=\linewidth]{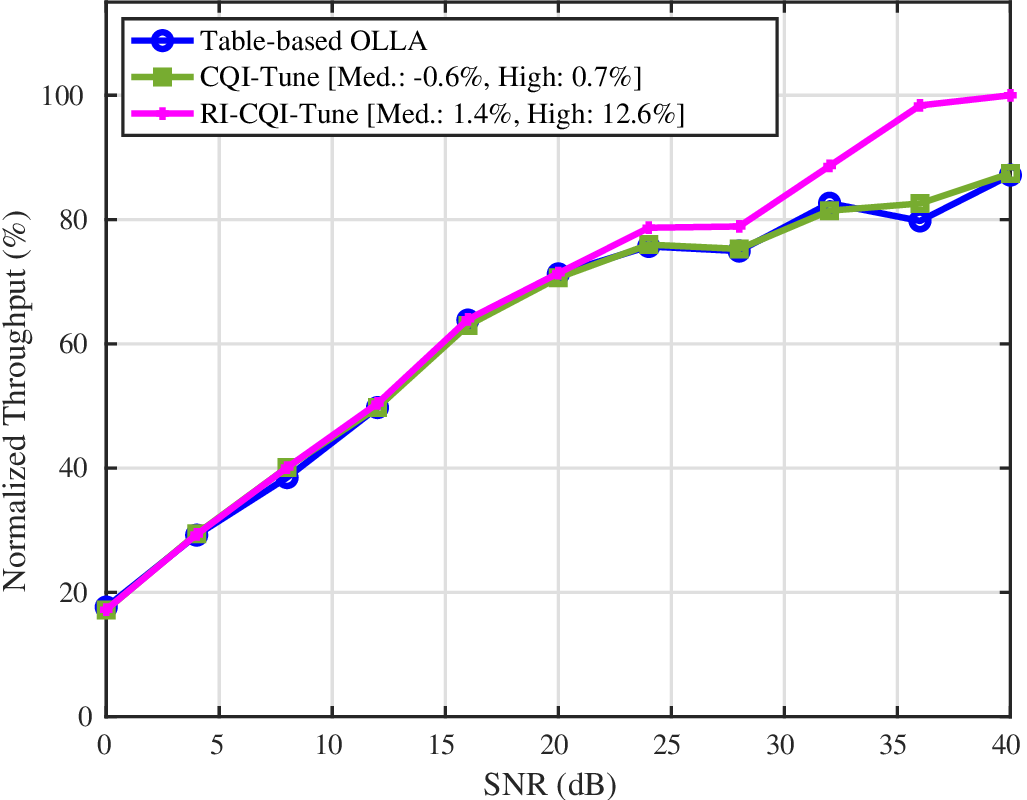}
			\caption{TDL-B200 \\ (Dop. freq. = 50 Hz).}
			\label{fig:tdlb200_throughput_80ms_CSIRS_106RBs_high_corr}
		\end{subfigure}
		\captionsetup{font=footnotesize}
		\caption{Throughput of \RICQI{} and \CQI{} with uniform sampling under high antenna correlation with CSI-RS period = 80 ms.}
		\label{fig:high_corr_throughput}
	\end{figure}
		\begin{figure*}
		\centering
		\captionsetup{font=footnotesize, justification=centering} 
		\begin{subfigure}{0.24\linewidth}
			\centering
			\includegraphics[width=\linewidth]{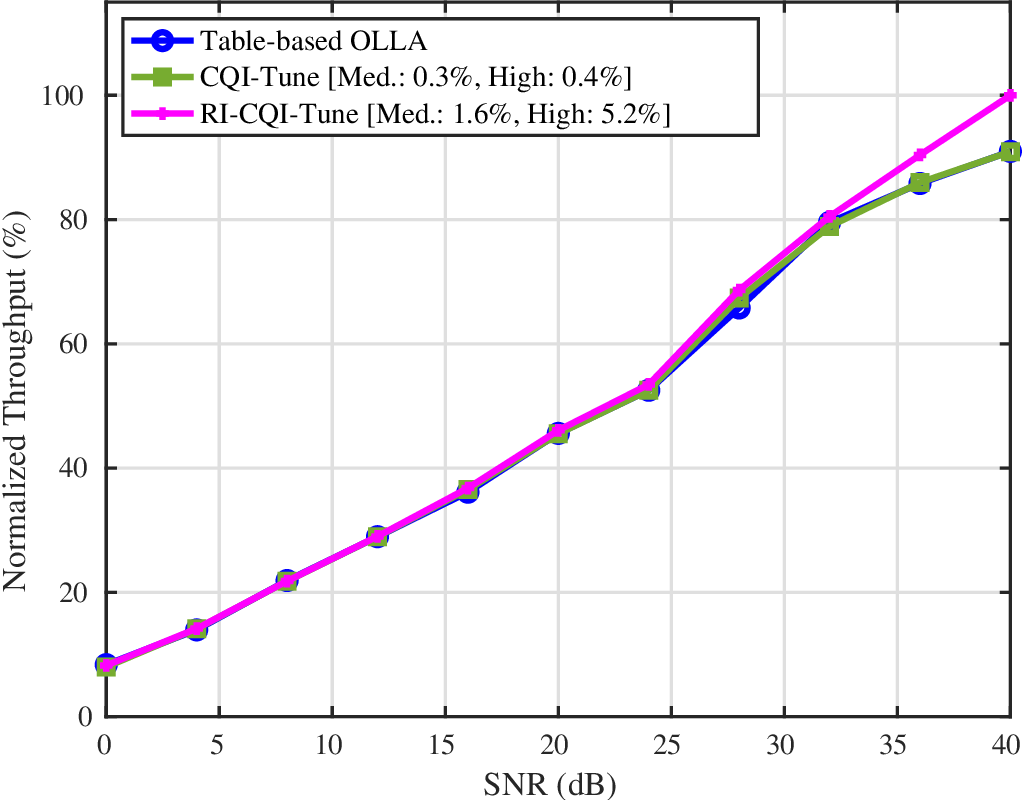}
			\caption{TDL-B50, Medium correlation \\ (Dop. freq. = 30 Hz).}
			\label{fig:tdlb50_throughput_40ms_CSIRS_106RBs_med_corr}
		\end{subfigure}
		\hfill
		\begin{subfigure}{0.24\linewidth}
			\centering
			\includegraphics[width=\linewidth]{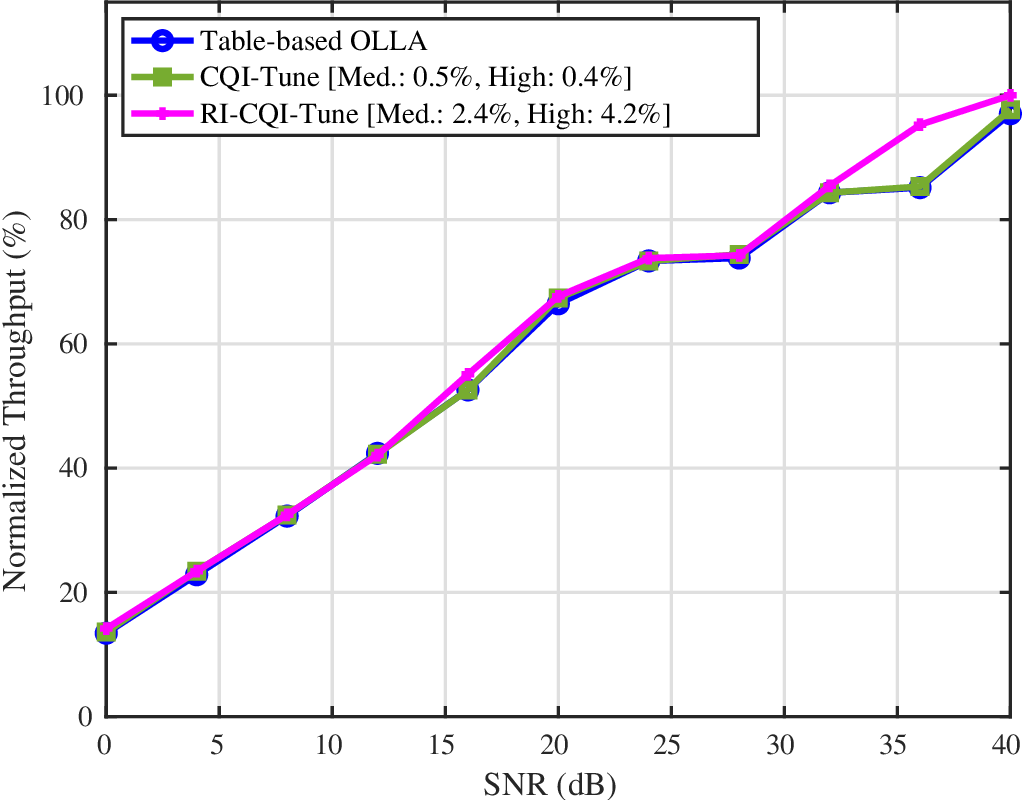}
			\caption{TDL-B50, High correlation \\ (Dop. freq. = 30 Hz).}
			\label{fig:tdlb50_throughput_40ms_CSIRS_106RBs_high_corr}
		\end{subfigure}
		\hfill
		\begin{subfigure}{0.24\linewidth}
			\centering
			\includegraphics[width=\linewidth]{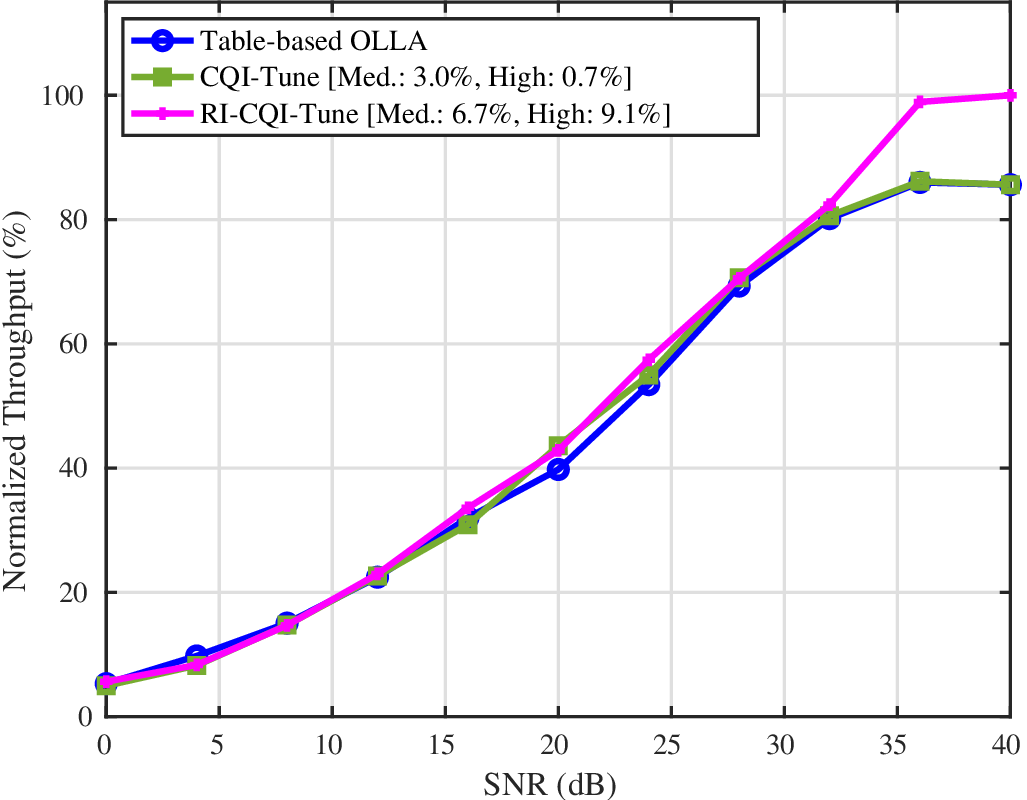}
			\caption{TDL-C200, Low correlation \\ (Dop. freq. = 50 Hz).}
			\label{fig:tdlc200_throughput_40ms_CSIRS_106RBs_low_corr}
		\end{subfigure}
		\hfill
		\begin{subfigure}{0.24\linewidth}
			\centering
			\includegraphics[width=\linewidth]{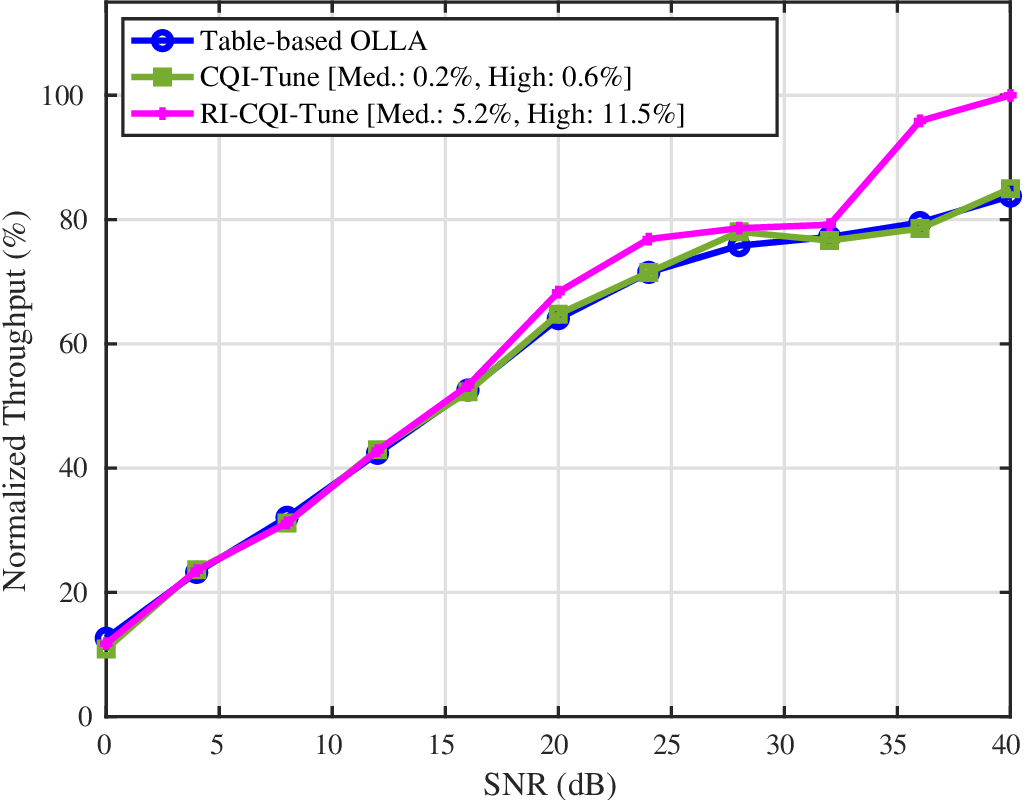}
			\caption{TDL-C200, High correlation \\ (Dop. freq. = 50 Hz).}
			\label{fig:tdlc200_throughput_40ms_CSIRS_106RBs_high_corr}
		\end{subfigure}
		\captionsetup{font=footnotesize}
		\caption{Throughput of \RICQI{} and \CQI{} with uniform sampling with CSI-RS period = 40 ms.}
		\label{fig:throughput_40ms_CSI_RS}
	\end{figure*}
	
	\begin{figure*}
		\captionsetup{font=footnotesize, justification=centering}
		\centering
		\begin{subfigure}{0.24\linewidth}
			\centering
			\includegraphics[width=\linewidth]{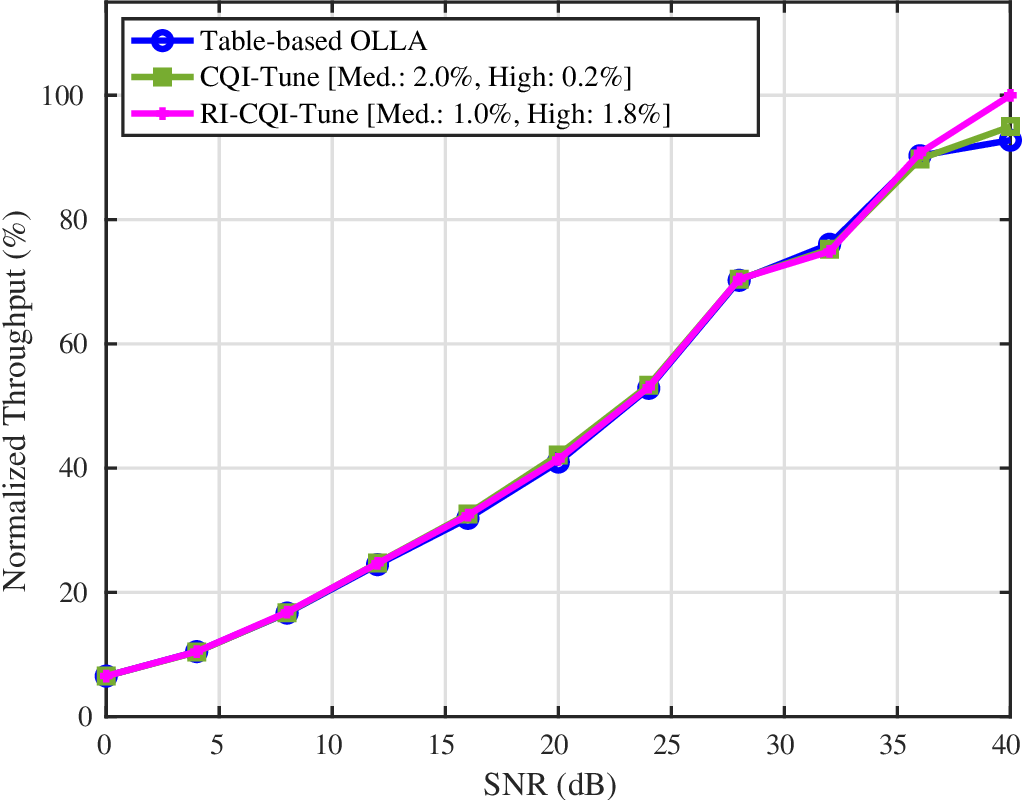}
			\caption{TDL-B50, Low correlation \\ (Dop. freq. = 30 Hz).}
			\label{fig:tdlb50_throughput_10ms_CSIRS_106RBs_low_corr}
		\end{subfigure}
		\hfill
		\begin{subfigure}{0.24\linewidth}
			\centering
			\includegraphics[width=\linewidth]{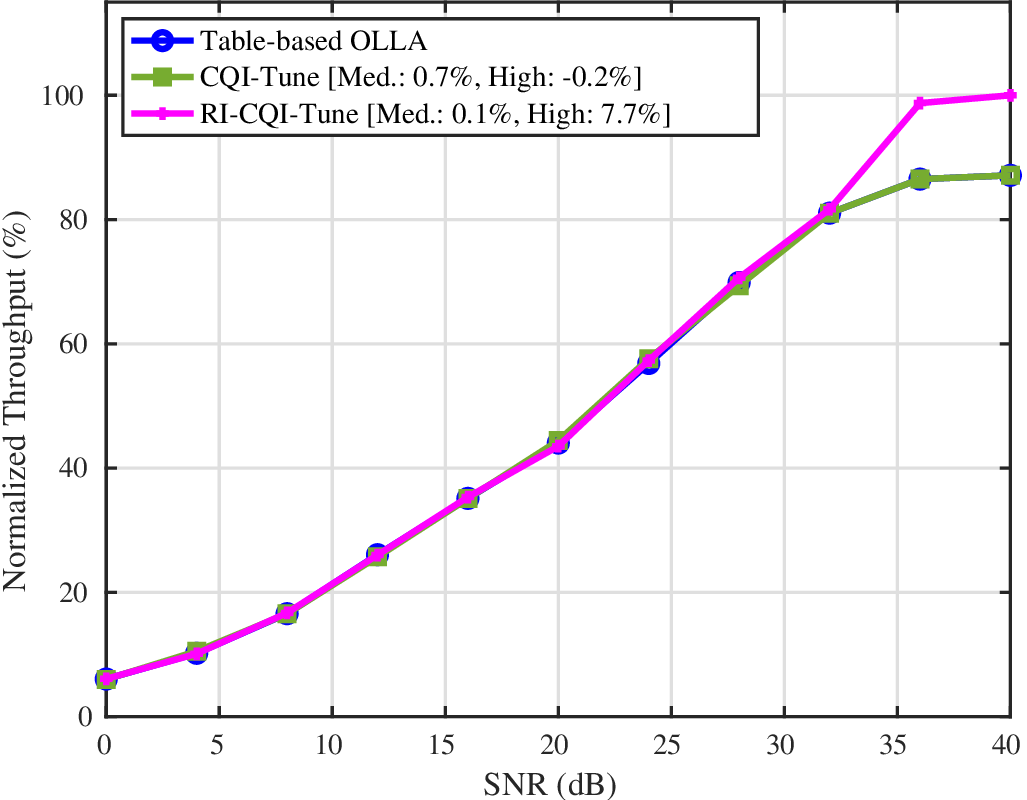}
			\caption{TDL-C200, Low correlation \\ (Dop. freq. = 50 Hz).}
			\label{fig:tdlc200_throughput_10ms_CSIRS_106RBs_low_corr}
		\end{subfigure}
		\hfill 
		\begin{subfigure}{0.24\linewidth}
			\centering
			\includegraphics[width=\linewidth]{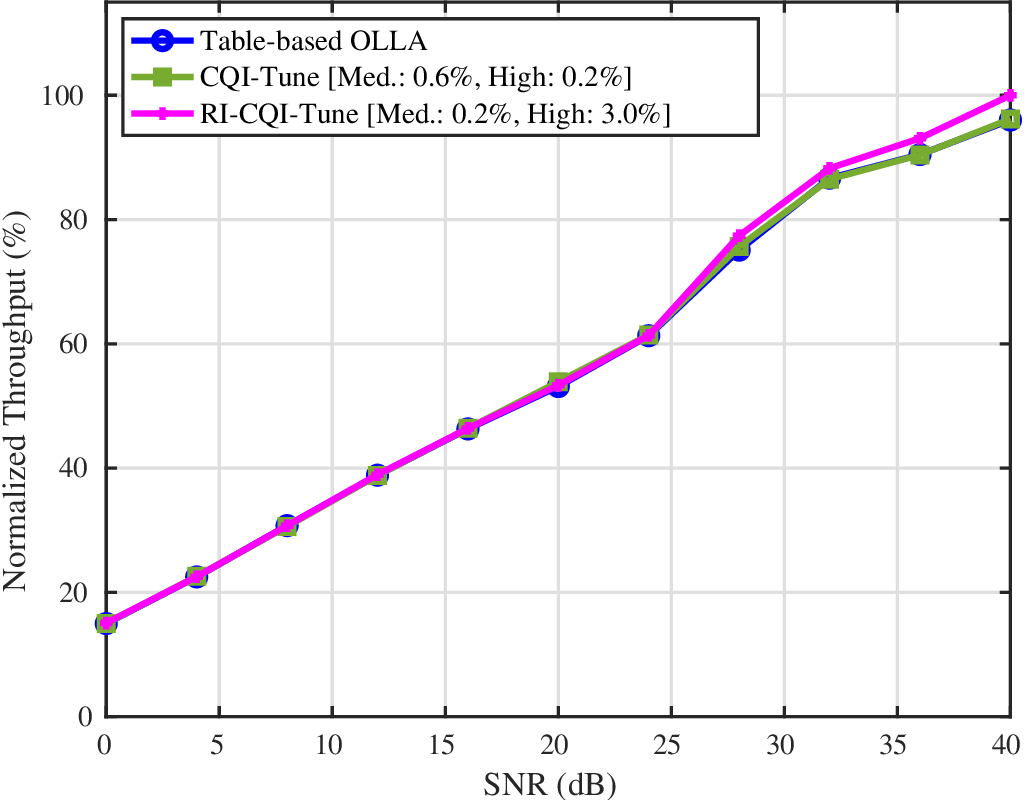}
			\caption{TDL-A10, Medium correlation \\ (Dop. freq. = 20 Hz).}
			\label{fig:tdla10_throughput_10ms_CSIRS_106RBs_med_corr}
		\end{subfigure}
		\hfill
		\begin{subfigure}{0.24\linewidth}
			\centering
			\includegraphics[width=\linewidth]{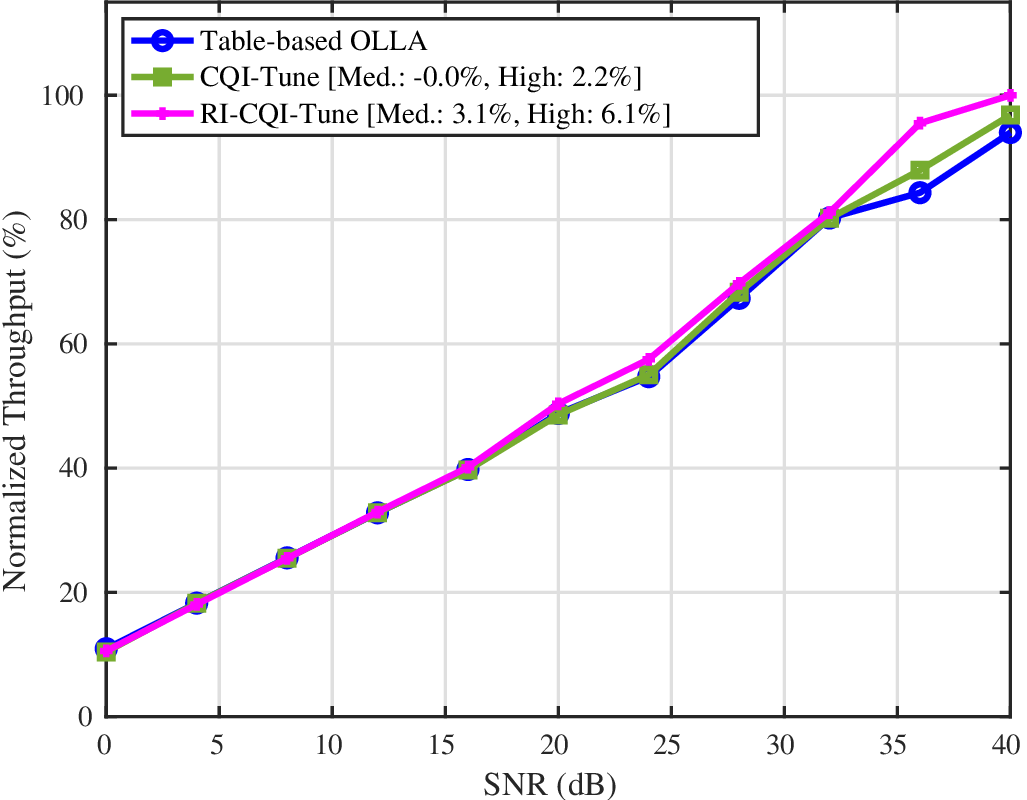}
			\caption{TDL-C200, Medium correlation \\ (Dop. freq. = 50 Hz).}
			\label{fig:tdlc200_throughput_10ms_CSIRS_106RBs_med_corr}
		\end{subfigure}
		\captionsetup{font=footnotesize}
		\caption{Throughput of \RICQI{} and \CQI{} with uniform sampling with CSI-RS period = 10 ms.}
		\label{fig:throughput_10ms_CSI_RS}
	\end{figure*}

	\begin{table*}
		\centering
		\captionsetup{font=footnotesize}
		\renewcommand{\arraystretch}{1.1}
		\begin{tabular}{@{}lllcccc@{}}
			\toprule
			\multirow{2}{*}{\textbf{CSI-RS Period}} & \multirow{2}{*}{\textbf{Channel Profile}} & \multirow{2}{*}{\textbf{Antenna Correlation}} & \multicolumn{2}{c}{\textbf{Medium SNR Gain}} & \multicolumn{2}{c}{\textbf{High SNR Gain}} \\
			\cmidrule(lr){4-5} \cmidrule(lr){6-7}
			& & & \textbf{CQI-Tune} & \textbf{RI-CQI-Tune} & \textbf{CQI-Tune} & \textbf{RI-CQI-Tune} \\
			\midrule
			\multirow{8}{*}{80 ms} 
			& TDL-A10, 20 Hz & Low & $5.3\%$ & $2.6\%$ & $1.3\%$ & $2.6\%$ \\
			& TDL-B50, 30 Hz & Low & $12.1\%$ & $8.1\%$ & $2\%$ & $1.1\%$ \\
			& TDL-B200, 50 Hz & Low & $7.0\%$ & $6.3\%$ & $0.7\%$ & $11.0\%$ \\
			& TDL-C200, 50 Hz & Low & $9.1\%$ & $8.5\%$ & $1.3\%$ & $10.9\%$ \\
			\cmidrule(lr){2-7}
			& TDL-B50, 30 Hz & Medium & $1.1\%$ & $1.7\%$ & $0.8\%$ & $0.9\%$ \\
			& TDL-B200, 50 Hz & Medium & $1.5\%$ & $6.2\%$ & $3.4\%$ & $15.5\%$ \\
			& TDL-B50, 30 Hz & High & $0.6\%$ & $1.4\%$ & $0.1\%$ & $2.1\%$ \\
			& TDL-B200, 50 Hz & High & $-0.6\%$ & $1.4\%$ & $0.7\%$ & $12.6\%$ \\
			\midrule
			\multirow{4}{*}{40 ms}
			& TDL-B50, 30 Hz & Medium & $0.3\%$ & $1.6\%$ & $0.4\%$ & $5.2\%$ \\
			& TDL-B50, 30 Hz & High & $0.5\%$ & $2.0\%$ & $2.4\%$ & $4.2\%$ \\
			& TDL-C200, 50 Hz & Low & $3.0\%$ & $6.7\%$ & $0.7\%$ & $9.1\%$ \\
			& TDL-C200, 50 Hz & High & $0.2\%$ & $5.2\%$ & $0.6\%$ & $11.5\%$ \\
			\midrule
			\multirow{4}{*}{10 ms}
			& TDL-B50, 30 Hz & Low & $2.0\%$ & $1.0\%$ & $0.2\%$ & $1.8\%$ \\
			& TDL-C200, 50 Hz & Low & $0.7\%$ & $0.1\%$ & $-0.2\%$ & $7.7\%$ \\
			& TDL-A10, 20 Hz & Medium & $0.6\%$ & $0.2\%$ & $0.2\%$ & $3.0\%$ \\
			& TDL-C200, 50 Hz & Medium & $0.0\%$ & $3.1\%$ & $2.2\%$ & $6.1\%$ \\
			\bottomrule
		\end{tabular}
		\caption{Throughput Gains of \CQI{} and \RICQI{} Across Various Scenarios}
		\label{table:consolidated_throughput_gains}
	\end{table*}
	
	\section{Conclusion}
	\label{sec:conclusion}
	
	This paper presented \name{}, a backpropagation-free, buffer-less online fine-tuning framework for on-device DT synchronization. \name{} opportunistically refines deployed ML models using delayed ground-truth feedback, achieving significant throughput gains over conventional OLLA baselines while remaining feasible on resource-constrained UE. These core principles extend naturally to backpropagation-based architectures. Extending \name{} to other 6G physical layer tasks, such as beam selection, is an area for future investigation.
	
	\bibliographystyle{IEEEtran}  
	\bibliography{references}  

@article{hinton2022forward,
	title={ {The Forward-Forward Algorithm: Some Preliminary Investigations}},
	author={Hinton, Geoffrey},
	journal={arXiv preprint arXiv:2212.13345},
	year={2022}
}

@article{lin1992self,
	title={ {Self-improving reactive agents based on reinforcement learning, planning and teaching}},
	author={Lin, Long-Ji},
	journal={Machine learning},
	volume={8},
	number={3},
	year={1992},
	publisher={Springer}
}

@misc{ali2026online,
	title={{Online Adaptation and ML-Non-ML Combining for Improved Wireless Link Adaptation}},
	author={Ali, Ramy E.  and Kwon, Hyukjoon},
	publisher={Google Patents},
	note={ {US} Patent, 2026}
}

@article{wiesmayr2025salad,
	title={SALAD: Self-adaptive link adaptation},
	author={Wiesmayr, Reinhard and Maggi, Lorenzo and Cammerer, Sebastian and Hoydis, Jakob and Aoudia, Fay{\c{c}}al A{\"\i}t and Keller, Alexander},
	journal={arXiv preprint arXiv:2510.05784},
	year={2025}
}

@article{schaul2015prioritized,
	title={ {Prioritized Experience Replay}},
	author={Schaul, Tom and others},
	journal={ ICLR},
	year={2016}
}

@article{rolnick2019experience,
	title={ {Experience Replay for Continual Learning}},
	author={Rolnick, David and others},
	journal={Advances in neural information processing systems},
	volume={32},
	year={2019}
}

@inproceedings{baknina2020adaptive,
	author       = {Abdulrahman Baknina and HyukJoon Kwon},
	title        = { {Adaptive {CQI} and {RI} Estimation for {5G NR}: A Shallow Reinforcement Learning Approach}},
	booktitle    = {IEEE Global Communications Conference (GLOBECOM)},
	year         = {2020},
}

@inproceedings{kingma2015adam,
	title={Adam: A Method for Stochastic Optimization},
	author={Kingma, Diederik P. and Ba, Jimmy},
	booktitle={International Conference on Learning Representations (ICLR)},
	year={2015}
}

@misc{forwardforward2023,
	author       = {Mohammad Pezeshki},
	title        = { {Implementation of Forward-Forward ({FF}) training algorithm}},
	year         = {2023},
	howpublished = {\url{https://github.com/mpezeshki/pytorch_forward_forward}},
}

@techreport{3gppTR38901,
	title        = { {Study on channel model for frequencies from 0.5 to 100 {GHz}}},
	number       = {TR 38.901 V14.0.0},
	year         = {2017},
	month        = {July},
	author = {3GPP}
}

@inproceedings{kaswan2024statistical,
	title={Statistical {AI/ML} model monitoring for {5G/6G}: Interference prediction case study},
	author={Kaswan, Priyanka and others},
	booktitle={IEEE International Conference on Communications Workshops (ICC Workshops)},
	year={2024},
}

@article{xu2024learning,
	title={ {Learning to estimate: A real-time online learning framework for MIMO-OFDM channel estimation}},
	author={Xu, Jiarui and others},
	journal={IEEE Trans. on Wireless Communications},
	year={2024},
	
}

@techreport{3gppTR38843,
	title        = { {Study on Artificial Intelligence (AI)/Machine Learning (ML) for NR Air Interface}},
	author       = {{3rd Generation Partnership Project (3GPP)}},
	institution  = {3GPP},
	type         = {Technical Report},
	number       = {38.843, Release 18},
	year         = {2023},
	
}

@article{xulearning,
	title={Learning at the speed of wireless: Online real-time learning for {AI-enabled} {MIMO} in {NextG}},
	author={Xu, Jiarui and others},
	journal={IEEE Comm. Magazine},
	year={2024},
}

@article{torres2025advancements,
	title={ {On Advancements of the Forward-Forward Algorithm}},
	author={Torres, Mauricio Ortiz and Lange, Markus and Raulf, Arne P},
	journal={arXiv preprint arXiv:2504.21662},
	year={2025}
}

@inproceedings{an2024dragon,
	title={ {DRAGON: A DRL-based MIMO Layer and MCS Adapter in Open RAN 5G Networks}},
	author={An, Qing and others},
	booktitle={Proceedings of the 30th Annual International Conference on Mobile Computing and Networking},
	year={2024}
}

@inproceedings{dong2018machine,
	title={Machine learning based link adaptation method for {MIMO} system},
	author={Dong, Zhijie and others},
	booktitle={IEEE 29th Annual International Symposium on Personal, Indoor and Mobile Radio Communications (PIMRC)},
	year={2018},
	
}

@article{saxena2021reinforcement,
	title={ {Reinforcement learning for efficient and tuning-free link adaptation}},
	author={Saxena, Vidit and Tullberg, Hugo and Jald{\'e}n, Joakim},
	journal={IEEE Transactions on Wireless Communications},
	volume={21},
	number={2},
	year={2021},
}

@article{huang2021deluxe,
	title={ {DELUXE: A DL-based link adaptation for URLLC/eMBB multiplexing in {5G} NR}},
	author={Huang, Yan and Hou, Y Thomas and Lou, Wenjing},
	journal={IEEE Journal on Selected Areas in Communications},
	volume={40},
	number={1},
	year={2021},
}

@article{van2021machine,
	title={ {Machine-learning-aided link-performance prediction for coded MIMO systems}},
	author={Van Le, Thuan and Lee, Kyungchun},
	journal={IEEE Transactions on Vehicular Technology},
	volume={71},
	number={3},
	year={2021},
}

@inproceedings{peralta2022outer,
	title={ {Outer loop link adaptation enhancements for ultra reliable low latency communications in 5G}},
	author={Peralta, Elena and others},
	booktitle={IEEE 95th Vehicular Technology Conference:(VTC-Spring)},
	year={2022},
}

@article{lecun1998gradient,
	title={Gradient-based learning applied to document recognition},
	author={LeCun, Yann and others},
	journal={Proc. IEEE},
	volume={86},
	number={11},
	year={1998}
}

@misc{Samsung2025AI6GR,
	author       = {Samsung},
	title        = { {AI/ML Use Cases and Framework for 6GR} },
	howpublished = {3GPP TSG RAN1 Meeting \#122, Bengaluru, India,  R1-2505588},
	month        = aug,
	year         = {2025},
}

@book{nesterov2013introductory,
	title={Introductory lectures on convex optimization: A basic course},
	author={Nesterov, Yurii},
	volume={87},
	year={2013},
	publisher={Springer Science \& Business Media}
}

@inproceedings{schroff2015facenet,
	title={Facenet: A unified embedding for face recognition and clustering},
	author={Schroff, Florian and Kalenichenko, Dmitry and Philbin, James},
	booktitle={Proceedings of the IEEE conference on computer vision and pattern recognition},
	year={2015}
}

@inproceedings{huang2025tinyfoa,
	title={ {TinyFoA}: Memory Efficient Forward-Only Algorithm for On-Device Learning},
	author={Huang, Baichuan and Aminifar, Amir},
	booktitle={Proceedings of the AAAI Conference on Artificial Intelligence},
	year={2025}
}

@inproceedings{de2023mu,
	title={{$\mu$-FF:} On-device forward-forward training algorithm for microcontrollers},
	author={De Vita, Fabrizio and others},
	booktitle={IEEE  Conference on Smart Computing},
	year={2023}
}

@book{horn2012matrix,
	title={Matrix analysis},
	author={Horn, Roger A and Johnson, Charles R},
	year={2012},
	publisher={Cambridge university press}
}

@inproceedings{Ali2026lighttune,
	title={{LightTune: Lightweight Online Fine-Tuning for 6G}},
	author={Ali, Ramy E. and Penna, Federico},
	booktitle={IEEE International Conference on Communications (ICC)},
	year={2026},
}

@inproceedings{karimi2016linear,
	title={Linear convergence of gradient and proximal-gradient methods under the polyak-{\l}ojasiewicz condition},
	author={Karimi, Hamed and Nutini, Julie and Schmidt, Mark},
	booktitle={Joint European conference on machine learning and knowledge discovery in databases},
	pages={795--811},
	year={2016},
	organization={Springer}
}

@article{pinsker1964information,
	title={Information and information stability of random variables and processes},
	author={Pinsker, Mark S},
	journal={Holden-Day},
	year={1964}
}

@article{mazumdar2026enhancing,
	title={Enhancing OLLA via Exponential Decay for Efficient Link Adaptation in Emerging 6G Traffic},
	author={Mazumdar, Aritra and Paris, Stefano and Amiri, Abolfazl and Pedersen, Klaus I and Adeogun, Ramoni},
	journal={IEEE Access},
	volume={14},
	pages={5764--5776},
	year={2026},
	publisher={IEEE}
}

@article{maggi2026sinr,
	title={SINR Estimation under Limited Feedback via Online Convex Optimization},
	author={Maggi, Lorenzo and Bonev, Boris and Wiesmayr, Reinhard and Cammerer, Sebastian and Keller, Alexander},
	journal={arXiv preprint arXiv:2603.02061},
	year={2026}
}

@techreport{3gpp_ts38214,
	author = {3GPP},
	title  = {Physical layer procedures for data (Release 16)},
	type   = {Technical Specification (TS)},
	number = {38.214},
	year   = {2021}
}

@article{zhang2025digital,
	title={Digital network twins for next-generation wireless: Creation, optimization, and challenges},
	author={Zhang, Zifan and Peng, Zhiyuan and Yu, Hanzhi and Chen, Mingzhe and Liu, Yuchen},
	journal={IEEE network},
	year={2025},
	publisher={IEEE}
}

@article{tong2025continual,
	title={Continual reinforcement learning for digital twin synchronization optimization},
	author={Tong, Haonan and Chen, Mingzhe and Zhao, Jun and Hu, Ye and Yang, Zhaohui and Liu, Yuchen and Yin, Changchuan},
	journal={IEEE transactions on mobile computing},
	year={2025},
	publisher={IEEE}
}

@article{lin20236g,
	title={6G digital twin networks: From theory to practice},
	author={Lin, Xingqin and Kundu, Lopamudra and Dick, Chris and Obiodu, Emeka and Mostak, Todd and Flaxman, Mike},
	journal={IEEE Comm. Magazine},
	volume={61},
	number={11},
	pages={72--78},
	year={2023},
	publisher={IEEE}
}

@article{luo2025digital,
	title={Digital Twin Aided Massive MIMO CSI Feedback: Exploring the Impact of Twinning Fidelity},
	author={Luo, Hao and Jiang, Shuaifeng and Khosravirad, Saeed R and Alkhateeb, Ahmed},
	journal={IEEE Transactions on Communications},
	year={2025},
	publisher={IEEE}
}
	
	\appendices
	\vspace{-10 pt}
	\section{Validation of the Proposed Loss Function}
	\label{app:loss_experiment}
	We derive our alternative quadratic loss from the second-order Taylor expansion of the function $f(x) = \ln(1 + e^x)$, centered at $x = 0$ that is given as:
	$
	f(x) = \ln 2 + \frac{1}{2}x + \frac{1}{8}x^2 + R_2(x),
	$
	where $R_2(x)$ denotes the Lagrange remainder term, defined as $R_2(x) = \frac{f^{(3)}(\xi)}{3!}x^3 = \frac{1}{6} \frac{e^\xi (1 - e^\xi)}{(1 + e^\xi)^3} x^3$ for some $\xi \in [0, x]$. We apply this expansion to the Softplus loss while discarding the higher-order remainder $R_2(x)$. Because the optimization landscape is invariant to constant shifts, we discard the constant $\ln 2$ term. Finally, to eliminate the fractional coefficients and simplify the hardware arithmetic, we apply a uniform scaling factor of $8$ to get our proposed loss function.

To validate the proposed loss, we compare the test accuracy with that of the Softplus loss  on the MNIST dataset \cite{lecun1998gradient}. The code is available  in~\cite{forwardforward2023}. Our results in Table~\ref{table:mnist_accuracy_loss} demonstrate that the proposed loss function achieves comparable accuracy while offering enhanced computational efficiency.

	\begin{table}[htb!]
		\centering
		\begin{tabular}{lcc}
			\toprule
			\textbf{Loss Function} & \textbf{Softplus} & \textbf{Proposed} \\
			\midrule
			Test Accuracy & 93.15\% & 93.74\% \\
			\bottomrule
		\end{tabular}
		\captionsetup{font=footnotesize}
		\caption{Test accuracy on MNIST using Softplus loss vs. proposed loss }
		\label{table:mnist_accuracy_loss}
	\end{table}

	\section{Convergence Analysis of LightTune}
	\label{app:convergence}
	
	This appendix provides the convergence proof. All random variables are defined on a common probability space $(\Omega,\mathcal{F},\mathbb{P})$. The initial parameters $\bm{\theta}^{(1)}$ are from offline training on $\mathcal{D}_1$. Online data $\{\bm{z}^{(t)}_{\text +}\}_{t=1}^{\infty}$ are i.i.d. from $\mathcal{D}_2$. At each $t$, a negative label $y^{(t)}_{\text{-}}$ is drawn from a distribution $R(\cdot\mid y^{(t)}_{\text{+}})$ over $\mathcal{Y}\setminus\{y^{(t)}_{\text{+}}\}$. We write $\mathbb{E}_{y^{(t)}_{\text{-}}}[\cdot \mid \mathcal{F}_t]$ for expectation over $y^{(t)}_{\text{-}}$ given the past.

	\subsection{Justification of Assumptions}
	Assumption \ref{assumption:gradient_lower_bound} is motivated by the observation that a vanishing expected squared norm would necessitate a perfect cancellation between the positive and negative gradients in expectation. Such systematic cancellation is highly improbable as long as the model is driven by non-negligible prediction errors. In particular, if the negative label $y^{(t)}_{\text{-}}$ is drawn from a distribution that assigns positive probability to every incorrect label (e.g.,  uniform), then $\mathbb{E}_{y^{(t)}_{\text{-}}}\bigl[\|\nabla\mathcal{L}_L^{(t)}\|_2^2\bigr] = 0 \implies \|\bm{p} + \bm{n}_c\|_2 = 0$ for every $c \neq y^{(t)}_{\text{+}}$, where $\bm{p} = \nabla\mathcal{L}_{\text{+},l}^{(t)}$ and $\bm{n}_c = \nabla\mathcal{L}_{\text{-},l}^{(t)}$ for label $c$. Hence $\bm{n}_c = -\bm{p}$ for all $c$. This would mean that the gradient does not depend on the label $c$, implying that the network’s response to different incorrect labels is identical in the sense of its gradient. When the prediction error is large, the true label and the predicted label differ, consequently, the gradients for those labels cannot all be the same. Therefore, for any reasonable sampling scheme that covers all incorrect labels (e.g., uniform), the expected squared gradient norm must be bounded below by a positive constant $\gamma_2(\delta)>0$. 
	
	\subsection{Preliminary Lemmas}
	
	We recall and introduce some notation.  The parameters of neuron $j$ in layer $l$ are $\bm{\theta}_{l,j} = [\bm{w}_{l,j}, b_{l,j}]^\top$. Hence, the positive pre‑activation is given by $p_{\text{+},l}^{(t)}[j] = \bm{\theta}_{l,j}^\top \tilde{\bm{h}}_{\text{+},l-1}^{(t)}$, and the negative pre‑activation is given by $p_{\text{-},l}^{(t)}[j] = \bm{\theta}_{l,j}^\top \tilde{\bm{h}}_{\text{-},l-1}^{(t)}$. From \eqref{eqn:proposed_loss}, the per-neuron loss combines positive and negative contributions as $\mathcal{L}_{l}^{(t)}[j](\bm{\theta}_{l,j}) = ((h_{\text{+},l}^{(t)}[j])^2 - T)^2 - 4((h_{\text{+},l}^{(t)}[j])^2 - T) + ((h_{\text{-},l}^{(t)}[j])^2 - T)^2 + 4((h_{\text{-},l}^{(t)}[j])^2 - T)$, with the total layer loss $\mathcal{L}_l^{(t)}(\bm{\theta}) = \frac{1}{M_l}\sum_{j=1}^{M_l} \mathcal{L}_{l}^{(t)}[j](\bm{\theta}_{l,j})$.\\
	We begin with the proof of Lemma \ref{lemma:bounded_activations}.
	
	\begin{proof}
		We prove by induction on the layer index $l$. \textbf{Base case $l = 0$}: By definition, $\bm{h}_0^{(t)} = \bm{z}^{(t)}$, and Assumption~\ref{assumption:boundedness}(i) gives $\|\bm{z}^{(t)}\|_2 \le B_z$. Hence $\|\bm{h}_0^{(t)}\|_2 \le B_z =: B_0$. \textbf{Inductive step}: Assume that for some $l \geq 1$, we have $\|\bm{h}_{l-1}^{(t)}\|_2 \le B_{l-1}$. Consider layer $l$. For any neuron $j$ in this layer, the pre‑activation is $p_{l}^{(t)}[j] = \bm{\theta}_{l,j}^{\top} \tilde{\bm{h}}_{l-1}^{(t)}$, where $\tilde{\bm{h}}_{l-1}^{(t)} = [\bm{h}_{l-1}^{(t)}, 1]^\top$ is the augmented input. By the Cauchy–Schwarz inequality and Assumption~\ref{assumption:boundedness}(ii), $|p_{l}^{(t)}[j]| \le \|\bm{\theta}_{l,j}\|_2 \, \|\tilde{\bm{h}}_{l-1}^{(t)}\|_2 \le B_\theta (\|\bm{h}_{l-1}^{(t)}\|_2 + 1) \le B_\theta (B_{l-1}+1)$. The activation is $h_{l}^{(t)}[j] = \max(0, p_{l}^{(t)}[j])$, so $|h_{l}^{(t)}[j]| \le |p_{l}^{(t)}[j]|$. Therefore, 
		$
		\|\bm{h}_{l}^{(t)}\|_2 = \sqrt{ \sum_{j=1}^{M_l} (h_{l}^{(t)}[j])^2 } \le \sqrt{M_l}\, B_\theta (B_{l-1}+1) =: B_l. 
		$
		This completes the induction. Finally, since the network has a finite number of layers $L$, the set $\{B_0, B_1, \dots, B_L\}$ is finite. Taking $B_h = \max_{0\le l\le L} B_l$ gives a uniform bound valid for all layers and all time steps.
	\end{proof}
	
	Next, we provide the proof of Lemma \ref{lemma:bounded_gradient} which shows that the gradients are also bounded. 		
	\begin{proof}
		The gradient for neuron $j$ is given as $\nabla_{\bm{\theta}_{l,j}} \mathcal{L}_l^{(t)}(\bm{\theta}_l) = \frac{1}{M_l}\Bigl( \nabla_{\bm{\theta}_{l,j}} \mathcal{L}_{\text{+},l}^{(t)}[j] + \nabla_{\bm{\theta}_{l,j}} \mathcal{L}_{\text{-},l}^{(t)}[j] \Bigr)$, where all quantities on the right-hand side are evaluated at the current parameters $\bm{\theta}_{l,j}^{(t)}$ and the fixed inputs from the previous layer. We first bound each branch separately using the bounds from Assumption \ref{assumption:boundedness} and Lemma \ref{lemma:bounded_activations}. For the positive branch, the derivative of the scalar loss component is $2(g_{\text{+},l}^{(t)}[j]-T) - 4 = 2(g_{\text{+},l}^{(t)}[j]-T-2)$. Applying the chain rule with $g_{\text{+},l}^{(t)}[j] = (h_{\text{+},l}^{(t)}[j])^2$ yields an additional factor of $2h_{\text{+},l}^{(t)}[j]$. Thus, we can bound each term in the gradient: $|4((h_{\text{+},l}^{(t)}[j])^2 - T - 2)| \le 4(B_h^2 + T + 2)$, $|h_{\text{+},l}^{(t)}[j]| \le B_h$, $\|\tilde{\bm{h}}_{\text{+},l-1}^{(t)}\|_2 \le B_h + 1$, and $|\mathds{1}\{p_{\text{+},l}^{(t)}[j] > 0\}| \le 1$. Multiplying these bounds yields
		\begin{equation}
			\|\nabla_{\bm{\theta}_{l,j}} \mathcal{L}_{\text{+},l}^{(t)}[j]\|_2 \le 4(B_h^2 + T + 2) B_h (B_h + 1).
		\end{equation} 
		Applying the same reasoning to the negative branch gives the identical bound: 
		$
		\|\nabla_{\bm{\theta}_{l,j}} \mathcal{L}_{\text{-},l}^{(t)}[j]\|_2 \le 4(B_h^2 + T + 2) B_h (B_h + 1).
		$
		Finally, using the triangle inequality and the expression for the total gradient evaluated at $\bm{\theta}_l^{(t)}$, we obtain:
		\begin{align}
			\|\nabla_{\bm{\theta}_{l,j}} \mathcal{L}_l^{(t)}(\bm{\theta}_l^{(t)})\|_2 &\le \frac{1}{M_l}\Bigl( \|\nabla_{\bm{\theta}_{l,j}} \mathcal{L}_{\text{+},l}^{(t)}[j]\|_2 + \|\nabla_{\bm{\theta}_{l,j}} \mathcal{L}_{\text{-},l}^{(t)}[j]\|_2 \Bigr) \notag \\
			&\leq \frac{8(B_h^2 + T + 2) B_h (B_h + 1)}{M_l}.
		\end{align}
	\end{proof}

	Next, we prove Lemma \ref{lemma:bounded_loss} showing that our  loss  is bounded. 

	\begin{proof}
		Recall that for any neuron, $g = h^2$. From Lemma \ref{lemma:bounded_activations}, we have $|h| \le B_h$, so $0 \le g \le B_h^2$. Since $g$ is bounded by $B_h^2$, we have $|(g - T)^2 - 4(g - T)| \le (B_h^2 + T)^2 + 4(B_h^2 + T) \le (B_h^2 + T + 2)^2 + 4(B_h^2 + T + 2)$, and similarly for the negative branch term. Adding the two bounds and noting that the loss for the layer is the average over neurons, we obtain:
		\begin{equation*}
			|\mathcal{L}_L^{(t)}(\bm{\theta}_L)| \le (B_h^2 + T + 2)^2 + 4(B_h^2 + T + 2) =: M.
		\end{equation*}
	\end{proof}
	Finally, we prove Lemma \ref{lemma:smoothness}, which demonstrates that the proposed loss function is smooth. 
	\begin{proof}
		Let $d_l$ be the total number of parameters in layer $l$. The parameter vector $\bm{\theta}_l \in \mathbb{R}^{d_l}$ can be written as the concatenation of the parameter vectors for each neuron as $\bm{\theta}_l = [\bm{\theta}_{l,1}^\top, \bm{\theta}_{l,2}^\top, \dots, \bm{\theta}_{l,M_l}^\top]^\top \in \mathbb{R}^{d_l}$, where $\bm{\theta}_{l,j} \in \mathbb{R}^{d_n}$ are the parameters of neuron $j$. The loss function for layer $l$ is the average over neurons: 
		$
		\mathcal{L}_l^{(t)}(\bm{\theta}_l) = \frac{1}{M_l}\sum_{j=1}^{M_l} \mathcal{L}_{l}^{(t)}[j](\bm{\theta}_{l,j}),
		$
		where each $\mathcal{L}_{l}^{(t)}[j](\bm{\theta}_{l,j})$ depends only on $\bm{\theta}_{l,j}$ and not on the parameters of other neurons.
		
		\paragraph{Gradient structure.} Because the loss separates over neurons, the gradient with respect to $\bm{\theta}_l$ is the concatenation of the per‑neuron gradients: 
		$
		\nabla \mathcal{L}_l^{(t)}(\bm{\theta}_l) = \frac{1}{M_l}[\nabla_{\bm{\theta}_{l,1}} \mathcal{L}_{l}^{(t)}[1]^\top, \dots, \nabla_{\bm{\theta}_{l,M_l}} \mathcal{L}_{l}^{(t)}[M_l]^\top]^\top.
		$
		
		\paragraph{Hessian structure.} Differentiating again, the Hessian matrix $\nabla^2 \mathcal{L}_l^{(t)}(\bm{\theta}_l) \in \mathbb{R}^{d_l \times d_l}$ is block‑diagonal: 
		\begin{equation*}
		\nabla^2 \mathcal{L}_l^{(t)}(\bm{\theta}_l) = \frac{1}{M_l} \text{diag} \left( \nabla^2_{\bm{\theta}_{l,1}} \mathcal{L}_{l}^{(t)}[1], \dots, \nabla^2_{\bm{\theta}_{l,M_l}} \mathcal{L}_{l}^{(t)}[M_l] \right). 
	\end{equation*}
		This is because cross‑partial derivatives $\frac{\partial^2}{\partial \bm{\theta}_{l,i} \partial \bm{\theta}_{l,j}}$ for $i \neq j$ are zero, as each term in the sum depends only on its own neuron's parameters.
		
		\paragraph{Spectral norm of a block‑diagonal matrix.} For a block‑diagonal matrix, the spectral norm equals the maximum of the spectral norms of the individual blocks \cite{horn2012matrix}. Therefore, 
		\begin{equation}
			\|\nabla^2 \mathcal{L}_l^{(t)}(\bm{\theta}_l)\|_2 = \frac{1}{M_l} \max_{1 \le j \le M_l} \bigl\|\nabla^2_{\bm{\theta}_{l,j}} \mathcal{L}_{l}^{(t)}[j](\bm{\theta}_{l,j})\bigr\|_2.
		\end{equation} 
		Thus it suffices to bound the Hessian of a single neuron; the full Hessian norm will be at most that bound divided by $M_l$.
		
		\paragraph{Bounding the per‑neuron Hessian.} Fix a neuron $j$ and time $t$, and drop the indices $l,j,t$ for brevity. When $p_{\text +} > 0$ (neuron active), we have 
		$
		\nabla \mathcal{L}_{\text +} = \bigl[4p_{\text +}^3 - 4(T+2)p_{\text +}\bigr] \tilde{\bm{h}}_{\text +}.
		$
		Differentiating again with respect to $\bm{\theta}$ yields the Hessian
		\begin{align}
			\nabla^2 \mathcal{L}_{\text +} = \bigl[12p_{\text +}^2 - 4(T+2)\bigr] \tilde{\bm{h}}_{\text +} \tilde{\bm{h}}_{\text +}^{\top}.
		\end{align}
		When $p_{\text +} \le 0$, the neuron is inactive and the Hessian is zero. Similarly, when $p_{\text -} > 0$, $\nabla^2 \mathcal{L}_{\text -} = \bigl[12p_{\text -}^2 - 4(T+2)\bigr] \tilde{\bm{h}}_{\text -} \tilde{\bm{h}}_{\text -}^{\top}$, and zero otherwise. By Lemma \ref{lemma:bounded_activations}, $\|\tilde{\bm{h}}_{\text +}\|_2 \le B_h+1$ and $\|\tilde{\bm{h}}_{\text -}\|_2 \le B_h+1$. Hence, we have 
		\begin{equation}
			\|\nabla^2 \mathcal{L}_{\text +}\|_2 \le (12B_h^2 + 4T + 8)(B_h+1)^2.
		\end{equation} 
		The same bound holds for $\|\nabla^2 \mathcal{L}_{\text -}\|_2$. By the triangle inequality, we have
		\begin{align}
			&\|\nabla^2 \mathcal{L}_{\text +} + \nabla^2 \mathcal{L}_{\text -}\|_2 \le 8(3B_h^2+T+2)(B_h+1)^2 := \bar{\rho}_l. 
		\end{align}
		Thus we have shown that for any neuron $j$,
		$
			\|\nabla^2_{\bm{\theta}_{l,j}} \mathcal{L}_{l}^{(t)}[j](\bm{\theta}_{l,j})\|_2 \le \bar{\rho}_l .
		$
		Using the block‑diagonal structure and the fact that the spectral norm of a block‑diagonal matrix is the maximum of the block norms scaled by the overall factor $1/M_l$, we obtain:
		\begin{equation}
			\|\nabla^2 \mathcal{L}_l^{(t)}(\bm{\theta}_l)\|_2 \le \frac{1}{M_l} \max_{j} \|\nabla^2_{\bm{\theta}_{l,j}} \mathcal{L}_{l}^{(t)}[j]\|_2 \le \frac{\bar{\rho}_l}{M_l} =: \rho_l.
		\end{equation}

		\paragraph{From Hessian bound to Lipschitz gradient.} Now we prove that this bound on the Hessian implies the gradient is $\rho_l$-Lipschitz everywhere. Take any $\bm{\theta}_l, \bm{\theta}'_l \in \mathbb{R}^{d_l}$ and consider the line segment $\bm{\theta}_l(s) = \bm{\theta}_l + s(\bm{\theta}'_l - \bm{\theta}_l),$ where $s \in [0,1]$.\\
		1) \textbf{Points where the gradient may not be differentiable.} \\
		For each neuron $k$ in layer $l$, its pre‑activation along the segment is $p_k(s) = \bm{\theta}_{l,k}(s)^\top \tilde{\bm{h}}_{l-1}$, where $\bm{\theta}_{l,k}(s)$ is the part of $\bm{\theta}(s)$ corresponding to neuron $k$, and $\tilde{\bm{h}}_{l-1}$ is fixed. This is an affine function of $s$, i.e., $p_k(s) = a_k s + b_k$ for some constants $a_k,b_k$.\\			
		2) \textbf{Zeros of affine functions are isolated.}
		For a fixed $k$, the equation $p_k(s)=0$ is linear in $s$. Hence it has either: 1) no solution (if $a_k = 0$ and $b_k \neq 0$), 2) exactly one solution $s_k^*$ (if $a_k \neq 0$) or 3) the whole interval (if $a_k = 0$ and $b_k = 0$, which  means the pre‑activation is identically 0; this degenerate case occurs on a set of measure zero and can be ignored). Thus each neuron contributes at most one point where $p_k(s)=0$.\\
		3) \textbf{The exceptional set is finite.}
		Since there are finitely many neurons, the set $S_0 = \{ s \in [0,1] : \exists k \text{ such that } p_k(s) = 0 \}$ is finite. Order its elements as $0 \le s_1  < \dots < s_m \le 1$. Remove these points to obtain a partition of $[0,1]$ into subintervals $[0, s_1],\; [s_1, s_2],\; \dots,\; [s_m, 1]$. On each such subinterval, no pre‑activation changes sign, so the activation pattern (which neurons are active) remains fixed. Consequently, on each subinterval, the gradient $\nabla\mathcal{L}_l^{(t)}(\bm{\theta}(s))$ is a polynomial in $s$ (because the per‑neuron contributions are polynomials in the parameters, and the parameters are affine in $s$). Hence it is infinitely differentiable on the open interval and extends continuously to the endpoints. \\		
		4) \textbf{Derivative on a smooth subinterval.}
		On any subinterval where $\nabla\mathcal{L}_l^{(t)}(\bm{\theta}(s))$ is $C^1$, we can differentiate: $\frac{d}{ds} \nabla\mathcal{L}_l^{(t)}(\bm{\theta}_l(s)) = \nabla^2\mathcal{L}_l^{(t)}(\bm{\theta}_l(s)) \, (\bm{\theta}_l' - \bm{\theta}_l)$, where the Hessian exists everywhere on the interval because the activation pattern is constant. From the bound on the Hessian, we have $\|\nabla^2\mathcal{L}_l^{(t)}(\bm{\theta}_l(s))\|_2 \le \rho_l$, so 
		$
		\left\|\frac{d}{ds} \nabla\mathcal{L}_l^{(t)}(\bm{\theta}_l(s))\right\|_2 \le \rho_l \|\bm{\theta}'_l - \bm{\theta}_l\|_2.
		$
		\noindent 		
		5) \textbf{Integration over each subinterval.}
		Apply the fundamental theorem of calculus on each subinterval. Because $\nabla\mathcal{L}_l^{(t)}(\bm{\theta}(s))$ is continuously differentiable on the open interval and continuous up to the endpoints, we have $\nabla\mathcal{L}_l^{(t)}(\bm{\theta}_l(s_{i+1})) - \nabla\mathcal{L}_l^{(t)}(\bm{\theta}_l(s_i)) = \int_{s_i}^{s_{i+1}} \nabla^2\mathcal{L}_l^{(t)}(\bm{\theta}_l(s)) \, (\bm{\theta}_l' - \bm{\theta}_l) \, ds$. Summing these equalities from $i=0$ to $m$ (with $s_0 = 0$, $s_{m+1}=1$) telescopes the left‑hand side, giving:
		\begin{equation*}
			\nabla\mathcal{L}_l^{(t)}(\bm{\theta}_l') - \nabla\mathcal{L}_l^{(t)}(\bm{\theta}_l) = \sum_{i=0}^{m} \int_{s_i}^{s_{i+1}} \nabla^2\mathcal{L}_l^{(t)}(\bm{\theta}_l(s)) \, (\bm{\theta}'_l - \bm{\theta}_l) \, ds.
		\end{equation*}
		
		Taking norms and using the triangle inequality yields 
		\begin{align}
			&\|\nabla\mathcal{L}_l^{(t)}(\bm{\theta}_l') - \nabla\mathcal{L}_l^{(t)}(\bm{\theta}_l)\|_2 \notag \\ &\le \sum_{i=0}^{m} \int_{s_i}^{s_{i+1}} \|\nabla^2\mathcal{L}_l^{(t)}(\bm{\theta}_l(s))\|_2 \, \|\bm{\theta}_l' - \bm{\theta}_l\|_2 \, ds \notag \\
			&\le \rho_l \|\bm{\theta}_l' - \bm{\theta}_l\|_2 \sum_{i=0}^{m} (s_{i+1} - s_i) 
			= \rho_l \|\bm{\theta}_l' - \bm{\theta}_l\|_2.
		\end{align}
		Thus, $\mathcal{L}_l^{(t)}$ is $\rho_l$-smooth.
	\end{proof}
	
	\subsection{Convergence Theorem}
	We now provide the proof of Theorem \ref{thm:convergence}.
	
	\begin{proof}
		We proceed in steps as follows. 
		
		\begin{enumerate}[leftmargin=*, itemsep=10pt, parsep=0pt]
			\item \textbf{Local decrease.} For any $t$, if $I^{(t)}_\delta = 1$, the algorithm performs a gradient update: $\bm{\theta}_L^{(t+1)} = \bm{\theta}_L^{(t)} - \alpha_f \nabla \mathcal{L}_L^{(t)}(\bm{\theta}_L^{(t)})$. Because $\mathcal{L}_L^{(t)}$ is $\rho_L$-smooth (Lemma \ref{lemma:smoothness}), we  apply the descent lemma (Lemma \ref{lemma:descent}) with $\bm{\theta} = \bm{\theta}_L^{(t)}$ and $\bm{\theta}' = \bm{\theta}_L^{(t+1)}$ as follows 
			\begin{align}
				\mathcal{L}_L^{(t)}(\bm{\theta}_L^{(t+1)}) &\le \mathcal{L}_L^{(t)}(\bm{\theta}_L^{(t)}) + \nabla \mathcal{L}_L^{(t)}(\bm{\theta}_L^{(t)})^\top (\bm{\theta}_L^{(t+1)} - \bm{\theta}_L^{(t)}) \notag \\ &+ \frac{\rho_L}{2} \|\bm{\theta}_L^{(t+1)} - \bm{\theta}_L^{(t)}\|_2^2.
			\end{align} 
			Substituting the update $\bm{\theta}_L^{(t+1)} - \bm{\theta}_L^{(t)} = -\alpha_f \nabla \mathcal{L}_L^{(t)}(\bm{\theta}_L^{(t)})$ yields
			\begin{equation*}
				\mathcal{L}_L^{(t)}(\bm{\theta}_L^{(t+1)}) \le \mathcal{L}_L^{(t)}(\bm{\theta}_L^{(t)}) - \alpha_f\left(1 - \frac{\rho_L \alpha_f}{2}\right) \|\nabla \mathcal{L}_L^{(t)}(\bm{\theta}_L^{(t)})\|_2^2.
			\end{equation*} 
			Since $\alpha_f < 1/\rho_L$, we have $1 - \frac{\rho_L \alpha_f}{2} > \frac12$. Therefore, $\mathcal{L}_L^{(t)}(\bm{\theta}_L^{(t+1)}) \le \mathcal{L}_L^{(t)}(\bm{\theta}_L^{(t)}) - \frac{\alpha_f}{2} \|\nabla \mathcal{L}_L^{(t)}(\bm{\theta}_L^{(t)})\|_2^2$. If $I^{(t)}_\delta = 0$, no update occurs, so $\mathcal{L}_L^{(t)}(\bm{\theta}_L^{(t+1)}) = \mathcal{L}_L^{(t)}(\bm{\theta}_L^{(t)})$. Combining both cases yields:
			\begin{equation}
				\mathcal{L}_L^{(t)}(\bm{\theta}_L^{(t+1)}) \le \mathcal{L}_L^{(t)}(\bm{\theta}_L^{(t)}) - \frac{\alpha_f}{2}\|\nabla\mathcal{L}_L^{(t)}(\bm{\theta}_L^{(t)})\|_2^2 I^{(t)}_\delta.
			\end{equation}
			
			\item \textbf{Conditional expectation under $\mathcal{D}_2$.} Conditioning on $\mathcal{F}_t$ (which fixes $\bm{\theta}_L^{(t)}$, $\bm{x}^{(t)}$, $y^{(t)}_{\text +}$) and using the gradient lower bound (Assumption \ref{assumption:gradient_lower_bound}) yields
			\begin{align}
				&\mathbb{E}_{y^{(t)}_{\text{-}}}\bigl[\mathcal{L}_L^{(t)}(\bm{\theta}_L^{(t+1)}) \mid \mathcal{F}_t\bigr] \le \mathcal{L}_L^{(t)}(\bm{\theta}_L^{(t)}) \notag \\
				&- \frac{\alpha_f}{2} I^{(t)}_\delta \mathbb{E}_{y^{(t)}_{\text{-}}}\bigl[\|\nabla\mathcal{L}_L^{(t)}(\bm{\theta}_L^{(t)})\|_2^2 \mid \mathcal{F}_t, I^{(t)}_\delta=1\bigr] \notag \\
				&\le \mathcal{L}_L^{(t)}(\bm{\theta}_L^{(t)}) - \frac{\alpha_f \gamma_2(\delta)}{2} I^{(t)}_\delta.
			\end{align}
			
			\item \textbf{Total expectation under $\mathcal{D}_2$.} Taking expectation under $\mathcal{D}_2$:
			\begin{equation*}
				\mathbb{E}_{\mathcal{D}_2}[\mathcal{L}_L^{(t)}(\bm{\theta}_L^{(t+1)})] \le \mathbb{E}_{\mathcal{D}_2}[\mathcal{L}_L^{(t)}(\bm{\theta}_L^{(t)})] - \frac{\alpha_f \gamma_2(\delta)}{2} \mathbb{E}_{\mathcal{D}_2}[I^{(t)}_\delta].
			\end{equation*}
			
			\item \textbf{Relating to $\mathcal{D}_1$ via Pinsker.} For any fixed $\bm{\theta}$, by Lemma \ref{lemma:pinsker} applied with $P=\mathcal{D}_2$, $Q=\mathcal{D}_1$, and $f=\mathcal{L}_L^{(t)}(\bm{\theta})$, $|\mathbb{E}_{\mathcal{D}_2}[\mathcal{L}_L^{(t)}(\bm{\theta})] - \mathbb{E}_{\mathcal{D}_1}[\mathcal{L}_L^{(t)}(\bm{\theta})]| \le M \sqrt{\frac{1}{2} D_{\mathrm{KL}}(\mathcal{D}_2\|\mathcal{D}_1)}$. Since $\bm{\theta}_L^{(t)}$ is independent of the sample at time $t$, we can condition on $\bm{\theta}_L^{(t)}$ and integrate: 
			\begin{align}
				&\mathbb{E}_{\mathcal{D}_2}[\mathcal{L}_L^{(t)}(\bm{\theta}_L^{(t)})] = \mathbb{E}\bigl[ \mathbb{E}_{\mathcal{D}_2}[\mathcal{L}_L^{(t)}(\bm{\theta}) \mid \bm{\theta}=\bm{\theta}_L^{(t)}] \bigr] \notag \\
				&\le \mathbb{E}\bigl[ \mathbb{E}_{\mathcal{D}_1}[\mathcal{L}_L^{(t)}(\bm{\theta}) \mid \bm{\theta}=\bm{\theta}_L^{(t)}] + M\sqrt{\tfrac{1}{2}D_{\mathrm{KL}}} \bigr] \notag \\
				&= \mathbb{E}_{\mathcal{D}_1}[\mathcal{L}_L^{(t)}(\bm{\theta}_L^{(t)})] + M\sqrt{\tfrac{1}{2}D_{\mathrm{KL}}}.
			\end{align} 
			By Assumption \ref{assumption:distributions}, the offline training data are i.i.d. from $\mathcal{D}_1$, so for any fixed $\bm{\theta}$, $\mathbb{E}_{\mathcal{D}_1}[\mathcal{L}_L^{(t)}(\bm{\theta})] = \mathbb{E}_{\mathcal{D}_1}[\mathcal{L}_L^{(1)}(\bm{\theta})]$. Moreover, because $\bm{\theta}_L^{(t)}$ is independent of the sample at time $t$ under $\mathcal{D}_1$ as well: 
			\begin{align}
				\mathbb{E}_{\mathcal{D}_1}[\mathcal{L}_L^{(t)}(\bm{\theta}_L^{(t)})] &= \mathbb{E}\bigl[ \mathbb{E}_{\mathcal{D}_1}[\mathcal{L}_L^{(1)}(\bm{\theta}) \mid \bm{\theta}=\bm{\theta}_L^{(t)}] \bigr] \notag \\ &= \mathbb{E}_{\mathcal{D}_1}[\mathcal{L}_L^{(1)}(\bm{\theta}_L^{(t)})].
			\end{align} 
			Thus, $\mathbb{E}_{\mathcal{D}_2}[\mathcal{L}_L^{(t)}(\bm{\theta}_L^{(t)})] \le \mathbb{E}_{\mathcal{D}_1}[\mathcal{L}_L^{(1)}(\bm{\theta}_L^{(t)})] + M\sqrt{\tfrac{1}{2}D_{\mathrm{KL}}}$. The same inequality holds for $\bm{\theta}_L^{(t+1)}$. Substituting these upper bounds yields:
			\begin{equation}
				\mathbb{E}_{\mathcal{D}_1}[\mathcal{L}_L^{(1)}(\bm{\theta}_L^{(t+1)})] \le \mathbb{E}_{\mathcal{D}_1}[\mathcal{L}_L^{(1)}(\bm{\theta}_L^{(t)})] - \frac{\alpha_f \gamma_2(\delta)}{2} \mathbb{E}_{\mathcal{D}_2}[I^{(t)}_\delta].
			\end{equation}
			
			\item \textbf{Telescoping.} Summing from $t=1$ to $N$:
			\begin{equation*}
				\frac{\alpha_f \gamma_2(\delta)}{2} \sum_{t=1}^N \mathbb{E}_{\mathcal{D}_2}[I^{(t)}_\delta] \le \mathbb{E}_{\mathcal{D}_1}[\mathcal{L}_L^{(1)}(\bm{\theta}_L^{(1)})] - \mathbb{E}_{\mathcal{D}_1}[\mathcal{L}_L^{(1)}(\bm{\theta}_L^{(N+1)})].
			\end{equation*}
			
			\item \textbf{Bounding the final term.} Let $\mathcal{L}_L^* = \inf_{\bm{\theta}} \mathbb{E}_{\mathcal{D}_2}[\mathcal{L}_L^{(1)}(\bm{\theta})]$. Applying Lemma \ref{lemma:pinsker} again:
			\begin{align}
				\mathbb{E}_{\mathcal{D}_1}[\mathcal{L}_L^{(1)}(\bm{\theta}_L^{(N+1)})] &\ge \mathbb{E}_{\mathcal{D}_2}[\mathcal{L}_L^{(1)}(\bm{\theta}_L^{(N+1)})] -  M\sqrt{\tfrac{1}{2}D_{\mathrm{KL}}} \notag \\ &\ge \mathcal{L}_L^* - M\sqrt{\tfrac{1}{2}D_{\mathrm{KL}}}.
			\end{align} 
			Hence, we have
			\begin{align}
				&\mathbb{E}_{\mathcal{D}_1}[\mathcal{L}_L^{(1)}(\bm{\theta}_L^{(1)})] - \mathbb{E}_{\mathcal{D}_1}[\mathcal{L}_L^{(1)}(\bm{\theta}_L^{(N+1)})] \notag \\ &\le \mathbb{E}_{\mathcal{D}_1}[\mathcal{L}_L^{(1)}(\bm{\theta}_L^{(1)})] - \mathcal{L}_L^* + M\sqrt{\tfrac{1}{2}D_{\mathrm{KL}}}.
			\end{align}
			Combining and dividing by $N$ yields the final bound.
			
		\end{enumerate}
	\end{proof}

	Next, we provide the proof of Corollary \ref{corollary:asymptotic_average}.
	\begin{proof}
		From Theorem \ref{thm:convergence}, we have $\frac{1}{N}\sum_{t=1}^N \Pr_{\hspace{5pt}\mathcal{D}_2}(e^{(t)}\ge\delta) \le \frac{A}{N} + \frac{B}{\sqrt{N}}$, where $A = \frac{2[\mathbb{E}_{\mathcal{D}_1}[\mathcal{L}_L^{(1)}(\bm{\theta}_L^{(1)})] - \mathcal{L}_L^*]}{\alpha_f \gamma_2(\delta)}$ and $B = \frac{2M}{\alpha_f \gamma_2(\delta)}\sqrt{2 D_{\mathrm{KL}}(\mathcal{D}_2\|\mathcal{D}_1)}$ are constants independent of $N$. As $N\to\infty$, both $A/N \to 0$ and $B/\sqrt{N} \to 0$. By the squeeze theorem, the left‑hand side also tends to $0$.
	\end{proof}
	Finally, we provide the proof of Corollary \ref{corollary:pointwise_identical}.
	
	\begin{proof}
		When $\mathcal{D}_1 = \mathcal{D}_2$, we have $D_{\mathrm{KL}}(\mathcal{D}_2\|\mathcal{D}_1)=0$. Substituting into Theorem \ref{thm:convergence} yields $\frac{1}{N}\sum_{t=1}^N \Pr(e^{(t)}\ge\delta) \le \frac{A}{N}$, with $A$ as defined above. Multiplying by $N$, we obtain for every $N\ge 1$, $\sum_{t=1}^N \Pr(e^{(t)}\ge\delta) \le A$. The right‑hand side $A$ is a constant independent of $N$, and all terms are non‑negative. Hence, the partial sums are uniformly bounded and nondecreasing, so they converge to a finite limit as $N\to\infty$. Therefore, $\sum_{t=1}^\infty \Pr(e^{(t)}\ge\delta) < \infty$. A necessary condition for a convergent series is that its terms tend to zero; thus $\Pr(e^{(t)}\ge\delta) \to 0$ as $t\to\infty$.
	\end{proof}
	
\end{document}